\DeclareMathOperator{\diag}{diag}
\newtheorem*{lemma*}{Lemma}
\newtheorem*{problem*}{Problem}
\newcommand{\N}{\mathbb{N}}
\newcommand*{\hull}[1]{\mathrm{Co}\left(\kern0pt#1 \right)}
\newcommand*{\cone}[2][{}]{\mathrm{Cone}_{#1}\left(\kern0pt#2 \right)}
\newcommand*{\interior}[1]{ {\kern0pt#1}^{\mathrm{o}}  }
\newcommand*{\closure}[1] { \overline { {\kern0pt#1} } }
\newcommand{\ball}{\mathsf{Ball}}
\newcommand{\R}{\mathbb{R}}
\def\Z{\mathbb{Z}}
\def\C{\mathbb{C}}
\renewcommand{\O}{\mathcal{O}}
\newcommand{\Set}[2]{\left\{ #1 \; \mid \; #2 \right\}}
\newcommand{\clos}[1]{\overline{#1}}
\newcommand{\Q}{\mathbb{Q}}
\newcommand{\bSigma}{\texttt{b-}\Sigma}
\newcommand{\bSigmap}{\bSigma^{++}}
\newcommand{\EA}{\exists\forall_{\leq}\R}
\newcommand{\bEA}{\texttt{b-}\exists\forall_{\leq}\R}
\newcommand{\bEAp}{\texttt{b-}\exists\forall^{++}_{\leq}\R}
\renewcommand{\Re}{\operatorname{Re}}
\renewcommand{\Im}{\operatorname{Im}}
\newcommand{\QFF}{\operatorname{QFF}}
\title{On the Complexity of the Escape Problem for Linear Dynamical Systems over Compact Semialgebraic Sets} 
\titlerunning{On the Complexity of the Compact Escape Problem} 
\author{Julian D'Costa}{Oxford University, United Kingdom}{julianrdcosta@gmail.com}{https://orcid.org/0000-0003-2610-5241}{}
\author{Engel Lefaucheux}{Max Planck Institute for Software Systems, Saarland Informatics Campus, Germany}{elefauch@mpi-sws.org}{https://orcid.org/0000-0003-0875-300X}{}
\author{Eike Neumann}{Max Planck Institute for Software Systems, Saarland Informatics Campus, Germany}{eike@mpi-sws.org}{}{}
\author{Joël Ouaknine}{Max Planck Institute for Software Systems, Saarland Informatics Campus, Germany}{joel@mpi-sws.org}{}{}
\author{James Worrell}{Oxford University, United Kingdom}{jbw@cs.ox.ac.uk}{}{}
\authorrunning{J. D'Costa, E. Lefaucheux, E. Neumann, J. Ouaknine, and J. Worrell} 
\keywords{Discrete linear dynamical systems, Program termination, Compact semialgebraic sets, Theory of the reals} 
\begin{document}
	
	\maketitle

	\begin{abstract}
		We study the computational complexity of the Escape Problem for
	discrete-time linear dynamical systems over compact semialgebraic
	sets, or equivalently the Termination Problem for affine loops with
	compact semialgebraic guard sets. Consider the fragment of the theory
	of the reals consisting of negation-free $\exists \forall$-sentences
	without strict inequalities. We derive several equivalent
	characterisations of the associated complexity class which demonstrate
	its robustness and illustrate its expressive power. We show that the
	Compact Escape Problem is complete for this class. 
	\end{abstract}

	\section{Introduction}

In ambient space $\R^{n}$, a \emph{discrete linear
dynamical system} is an orbit $(X_n)_{n\in\N}$ defined by an
initial vector $X_0$ and a matrix $A$ through the recursion
$X_{n+1}=AX_n$. Linear dynamical systems are fundamental models in
many different domains of science and engineering, and the computability and complexity of
decision problems concerning them are of both theoretical and practical importance.

In the study of dynamical systems, particularly from the perspective
of control theory, considerable attention has been given to the analysis
of \emph{invariant sets}, \emph{i.e.}, subsets of $\R^{n}$ from which
no trajectory can escape; see, \emph{e.g.},
\cite{CastelanH92,BlondelT00,BM07,SDI08}. Our focus in the present
paper is on sets with the dual property that \emph{no trajectory
  remains trapped}. Such sets play a key role in analysing
\emph{liveness} properties:
progress is ensured by
guaranteeing that all trajectories (\emph{i.e.}, from any initial starting
point) must eventually reach a point at which they `escape'
(temporarily or permanently) the set in question, thereby forcing a
system transition to take place.

More precisely, given a rational matrix $A$
and a semialgebraic set
$K \subseteq\R^{n}$, one may consider the \emph{Discrete Escape
	Problem (DEP)} which asks, for all starting points
$X_0$ in $K$, whether the corresponding
orbit of the discrete linear dynamical system $(X_n)_{n\in\N}$
eventually escapes $K$. By ``escaping''
	$K$, we simply mean venturing outside of $K$---we
	are unconcerned whether the trajectory might re-enter
	$K$ at a later time.

      The restriction of DEP to the case in which $K$ is a
      \emph{convex polytope}---alternately known as \emph{termination of linear
        programs} over either the reals or the rationals---was
      already studied and shown decidable in the seminal
      papers~\cite{Tiw04, Braverman2006}, albeit with no complexity
      bounds nor upper bounds on the number of iterations required to
      escape.

In this paper we study the \emph{Compact Escape Problem (CEP)}, a
version of DEP where in addition we assume that the semialgebraic set $K$ is compact. In practice, of
course, this is usually not a burdensome restriction; in most
cyber-physical systems applications, for instance, all relevant
sets will be compact (see, \emph{e.g.},~\cite{Alu15}).
 
CEP was recently shown to be decidable for arbitrary compact
semialgebraic sets in~\cite{NOW20}, via \emph{non-constructive}
methods; consequently---as pointed out in that paper---no non-trivial
complexity bounds could be given. The main contribution of the present
work is to precisely pin down the complexity of CEP in terms of the
first-order theory of the reals; more precisely, we identify a natural
fragment for which CEP is complete. 

Recall that the theory of the reals is concerned with the structure
$\R$ over the signature $\langle\Z, +, \times, \leq, <\rangle$. Tarski
famously showed that this theory is decidable and admits quantifier
elimination, with state-of-the-art techniques based on Collins's
\emph{Cylindrical Algebraic Decomposition} \cite{Collins75} that have
complexity doubly exponential in the number of quantifiers.
Asymptotically faster but arguably impractical quantifier elimination algorithms 
due to \cite{Grigoriev88, HeintzRoySolerno90,Renegar1992} have running time doubly
exponential in the number of quantifier alternations, singly
exponential in the dimension, and polynomial in the rest of the data.
The \emph{existential} fragment of the theory of the reals was famously shown to lie between NP
and PSPACE in~\cite{Canny1988}.




In this paper, we consider the class of formulas consisting of
positive Boolean combinations of non-strict polynomial inequalities
prefixed by a single alternation of a block of existential and a block
of universal quantifiers. Let us denote by $\EA$ the complexity class
of all problems reducible in polynomial time to the decision problem
for this fragment. Using sophisticated results from real algebraic
geometry we show that $\EA$ corresponds to the
decision problem for another fragment of $\exists\forall$-sentences
in which the quantifiers are restricted to range over compact sets, a
result of independent interest. Finally, using techniques from Diophantine
approximation and algebraic number theory we show that the Compact
Escape Problem is complete for this class.

\subsection{Overview}

We formally define the Compact Escape Problem (CEP) as the following decision problem:\\

\emph{Given as input}
	\begin{itemize}
	\item A matrix $A \in \Q^{n \times n}$ with rational entries,
	\item A list $\mathcal{P}$ of polynomials in $\Z[x_1,\dots,x_n]$,
	\item A propositional formula 
		$\Phi(x_1,\dots,x_n)$
		which combines atomic predicates of the form 
		$P(x_1,\dots,x_n) \leq 0$
		with $P \in \mathcal{P}$
		by means of the propositional connectives 
		$\lor$ and $\land$,
	\end{itemize}
\emph{subject to the promise that the set
$K = \Set{x \in \R^n}
{\Phi(x)}$
is compact,
decide whether  for all $x \in K$ there exists $k \in \N$ such that $A^k x \notin K$.}

We assume that the polynomials $P_j$ in the list $\mathcal{P}$ are encoded as lists 
	$\langle (\alpha_{j,k}, c_{j,k})\rangle_{k = 1, \dots, s_j}$
	of pairs of multi-indexes 
	$\alpha_{j,k} \in \N^{n}$,
	whose entries are encoded in unary,
	and coefficients
	$c_{j,k} \in \Z$,
	encoded in binary,
	such that 
	\begin{equation}\label{eq: polynomial standard representation}
	P_j(x_1,\dots,x_n) = \sum_{k = 1}^{s_j} c_{j,k} (x_1,\dots,x_n)^{\alpha_{j,k}}.
	\end{equation}

Note that the analogous problem for affine maps $x \mapsto Ax + b$ reduces to CEP, 
as a point $x \in K$ escapes the compact set $K$ under iterations of the affine map $Ax + b$ if and only if the point 
$(x, 1) \in K \times \{1\}$ escapes $K \times \{1\}$ under iterations of the linear map $B(x,z) = Ax + bz$.

We capture the computational complexity of this decision problem by showing that it is equivalent to the decision problem for a fragment of the theory of the reals. 

Let $\EA$ denote the decision problem for sentences of the form
\begin{equation}\label{eq: intro generic sentence}
	\exists X \in \R^n. \forall Y \in \R^m.
		\left(
			\Phi_{0,\leq}(X,Y)
		\right),
\end{equation}
where $\Phi_{0,\leq}$ is a positive Boolean combination of non-strict polynomial inequalities.
Evidently, this class lies between the existential fragment of the theory of the reals (without restriction on the types of inequalities) and the full $\exists\forall$-fragment.

The main result of this paper is the following:
\begin{theorem}\label{Theorem: first main theorem}
	The compact escape problem is complete for the complexity class $\EA$. 
\end{theorem}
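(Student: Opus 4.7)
The plan is to establish both halves of the completeness claim separately: CEP lies in $\EA$, and CEP is $\EA$-hard. Both directions would exploit the equivalent characterisation of $\EA$ via compact-quantifier sentences that is flagged in the introduction and which I assume available.

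\textbf{Membership.} I first appeal to compactness: since each map $x \mapsto A^k x$ is continuous, $K$ is closed, and every orbit escapes by hypothesis, the escape times are uniformly bounded on $K$. (If not, pick $x_n \in K$ whose escape time exceeds $n$; extract a subsequence $x_{n_j} \to x^\star \in K$; then $A^k x^\star \in K$ for all $k$, contradicting escape from $x^\star$.) Thus CEP is equivalent to the existence of a uniform escape bound $N$ with $\forall x \in K \, \exists k \leq N.\, A^k x \notin K$. To encode this as a polynomial-size $\EA$-sentence, I draw on the Diophantine-approximation and algebraic-number-theory tools indicated in the introduction to obtain an effective a priori bound on $N$ depending on the input. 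The resulting ``$\exists k \leq N$'' is realised in the existential block via a binary decomposition of $k$ and repeated-squaring matrix witnesses $B_0 = A, B_{j+1} = B_j^2$; the universal block then ranges over $x \in K$ and checks that the appropriate product of selected $B_j$'s ejects $x$.

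\textbf{Hardness.} For the lower bound, I reduce from the compact-quantifier formulation of $\EA$: input sentences have the form $\exists X \in B_n \, \forall Y \in B_m.\, \Phi(X, Y)$ with $B_n, B_m$ closed unit balls. The reduction realises the universal quantifier via a dense orbit. In ambient space $\R^{n + 2m}$ (with a few auxiliary dimensions), let $A$ act as the identity on the first $n$ coordinates (carrying $X$) and as a block-diagonal product $R_1 \oplus \cdots \oplus R_m$ of rational planar rotations on the rest, each $R_i$ coming from a distinct Pythagorean triple. By Niven's theorem each angle $\theta_i$ is an irrational multiple of $\pi$; with some care one also secures joint $\Q$-linear independence of $\theta_1, \ldots, \theta_m, \pi$, so Kronecker's equidistribution theorem makes every orbit $(X, R_1^k y_1^0, \ldots, R_m^k y_m^0)$ dense in $\{X\} \times T^m$ for the product of circles $T^m$. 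Define $K$ to be the compact semialgebraic set carved out by $\Phi(X, Y)$ with $Y$ read off from the angular coordinates of $(y_1, \ldots, y_m)$, together with the toroidal membership constraints and a bounding box on $X$. A starting point is then trapped iff $\Phi(X, Y)$ holds for every $Y \in B_m$, so the existence of a trapped orbit is equivalent to the input sentence being true; the desired polynomial-time reduction to CEP follows, using closure-under-complement of $\EA$ from the equivalent characterisations to handle the negation of the CEP answer.

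\textbf{Principal difficulty.} The harder direction is membership. Simply writing out the disjunction $\bigvee_{k \leq N}(A^k x \notin K)$ explodes whenever $N$ is super-polynomial, so one cannot rely on $N$ being polynomial in the input; rather one needs both an effectively computable bound from the Diophantine machinery and a succinct encoding of ``$\exists k \leq N$'' via repeated-squaring witnesses in the existential prefix, all while remaining negation-free and non-strict. The hardness direction is more geometric: the main care is choosing rational rotations whose angles are jointly $\Q$-linearly independent from $\pi$ in a polynomial-time-constructible fashion, and verifying that the constructed $K$ does satisfy the compactness promise.
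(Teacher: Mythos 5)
Your hardness sketch is close in spirit to the paper's: both realise the universal quantifier by a toral rotation block whose orbit is dense by Kronecker's theorem. The paper constructs the rotation angles from Gaussian primes $p\equiv 1\pmod 4$, factoring $p=(a+ib)(a-ib)$ in $\Z[i]$ and setting $q=\tfrac{a+ib}{a-ib}$; unique factorisation in $\Z[i]$ then gives multiplicative independence cleanly. Your ``Pythagorean triples plus Niven's theorem, with some care'' omits precisely this care: Niven gives irrationality of each angle separately, not joint $\Q$-linear independence together with $\pi$, and that is the crux. You also invoke ``closure under complement of $\EA$'' to handle the negation, but the paper establishes no such closure and it is not clear it holds; the reductions in the paper deliberately go through \emph{negative} CEP instances on both sides, which is how they sidestep this.

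The membership direction, which you flag as the harder one, does contain a genuine gap. First, a quantifier-order error: you place the bits of $k$ (and the squaring witnesses $B_j$) in the existential block and $x$ in the universal block, producing a sentence of the shape $\exists k\, \forall x.\ A^k x\notin K$. This is strictly stronger than the correct $\forall x\, \exists k\le N.\ A^k x\notin K$; a single escape time uniform over all of $K$ need not exist even when a uniform \emph{bound} on the individual escape times does. Second, and more fundamentally, your argument hinges on ``an effective a priori bound on $N$'' supplied by Diophantine methods, but no such effective escape-time bound is known --- indeed the introduction points out that \cite{NOW20} established decidability of CEP \emph{non-constructively} precisely because no such bound was available. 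The paper's actual membership proof does not estimate escape times at all: it computes the spectral/Jordan data of $A$, uses Masser's theorem to extract the multiplicative relations among the unit-modulus eigenvalues, describes the closure of the orbit as the image of a semialgebraic map over an algebraic subtorus (via Kronecker), and directly expresses ``the orbit closure is contained in $K$'' as a $\bSigmap_{2,\leq}$-sentence. That route sidesteps the effective-bound problem entirely. Finally, note that $A^k x \notin K$ is a $\QFF_{<}$-condition, so even granting everything else your construction would land in a strict-inequality fragment rather than $\Sigma_{2,\leq}$; the paper avoids this by expressing membership (the trapped condition), not escape.
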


The proof consists of three steps:

First, we show that for any sentence of the form 
\begin{equation}\label{eq: intro generic b-sentence}
	\exists X \in [-1,1]^n. \forall Y \in [-1,1]^m.
		\left(
			\Phi_{0,\leq}(X,Y)
		\right),
\end{equation}
where $\Phi_{0,\leq}$ is a positive Boolean combination of non-strict polynomial inequalities,
one can compute a matrix 
$A \in \Q^{(n + 2m) \times (n + 2m)}$ 
and a compact set 
$K \subseteq \R^{n + 2m}$ 
such that $(A,K)$ is a negative instance of the compact escape problem if and only if \eqref{eq: intro generic b-sentence} holds true.

Secondly, given any instance $(A,K)$ with $A \in \Q^{n \times n}$ and $K \subseteq \R^n$ we can compute in polynomial time a sentence of the form 
\begin{equation}\label{eq: intro generic bp-sentence}
	\exists X \in [-1,1]^{m}. \forall Y \in [-1,1]^{\ell}.
		\left(
			\Psi_{0,\leq}(Y)
			\to 
			\Phi_{0,\leq}(X,Y)
		\right),
\end{equation}
where $\Psi_{0,\leq}$ and $\Phi_{0,\leq}$ are a positive Boolean combination of non-strict polynomial inequalities,
such that \eqref{eq: intro generic bp-sentence} holds true if and only if $(A,K)$ is a negative instance of the compact escape problem.

Finally, we prove that the decision problems for sentences of the form \eqref{eq: intro generic sentence}, \eqref{eq: intro generic b-sentence}, 
and \eqref{eq: intro generic bp-sentence} are all equivalent.	
	
\section{Preliminaries}

\subsection{Fragments of the theory of the reals}

The statement and proof of Theorem~\ref{Theorem: first main theorem} require complexity classes 
induced by decision problems for fragments of the the first-order theory of the reals.
The main goal of this subsection is to formally define these complexity classes.

Thus, let $\mathcal{L}$ be the first-order language with signature
$\langle\Z, +, \times, <, \leq \rangle$,
propositional connectives,
$\land$ and $\lor$,
and quantifiers $\exists$ and $\forall$.
For complexity purposes, we assume that integer constants are encoded in binary. 
See, \textit{e.g.}, \cite{Srivastava, VanDalen} for an introduction to first-order logic.
We interpret all formulas in $\mathcal{L}$ in the structure of real numbers.
Thus, we say that two formulas are equivalent if their interpretations in $\R$ are equivalent.
The restriction to the connectives $\lor$ and $\land$ is of course insubstantial,
and we will make free use of the connectives $\lnot$ and $\rightarrow$ throughout this paper, understanding them as syntactic sugar.

Let $\QFF$ denote the set of quantifier-free formulas in $\mathcal{L}$.
Let $\QFF_{\leq}$ (resp. $\QFF_{<}$) denote the subset of $\QFF$ consisting of those formulas that do not contain the relational symbol ``$<$'' (resp. ``$\leq$'').
Note that the negation of a $\QFF_{\leq}$-formula is a $\QFF_<$-formula and vice versa.

We define the sets of formulas $\Sigma_{n, \leq}$ 
and $\Pi_{n, \leq}$ inductively as follows:
\begin{enumerate}
    \item 
    Let $\Sigma_{0, \leq} = \Pi_{0, \leq} = \QFF_{\leq}$.
    \item 
    A formula $\Psi(y_1,\dots,y_s)$ belongs to $\Sigma_{n + 1, \leq}$
    if and only if it is of the form 
    \[
        \Psi(y_1,\dots,y_s) = (\exists x_1). \dots (\exists x_t). \Phi(x_1,\dots,x_t, y_1,\dots,y_s),
    \]
    where $\Phi$ belongs to $\Pi_{n, \leq}$.
    \item 
    Dually, a formula $\Psi(y_1,\dots,y_s)$ belongs to $\Pi_{n + 1, \leq}$
    if and only if it is of the form 
    \[ 
        \Psi(y_1,\dots,y_s) = (\forall x_1). \dots (\forall x_t). \Phi(x_1,\dots,x_t, y_1,\dots,y_s),
    \]
    where $\Phi$ belongs to $\Sigma_{n, \leq}$.
\end{enumerate}
We define $\Sigma_{n, <}$ and $\Pi_{n, <}$ (resp. $\Sigma_n$ and $\Pi_n$) analogously, starting with $\QFF_<$-formulas (resp. $\QFF$-formulas).

By convention we denote vectors of variables $X = (x_1,\dots,x_t)$ by upper case letters and introduce the shorthand notations
$\exists X$ and $\forall X$ for blocks of quantifiers 
$(\exists x_1). \dots (\exists x_t)$ and $(\forall x_1). \dots (\forall x_t)$.
Recall that a first-order formula $\Phi$ is called a sentence if it does not contain any free variables.

The \emph{decision problem} for a class $\mathcal{C}$ of first-order formulas in the language $\mathcal{L}$ is the following:
Given a sentence that belongs to $\mathcal{C}$ decide whether the sentence holds true in the universe of real numbers.

It is natural to ask how the decision problems for the classes we have introduced 
above are related with respect to polynomial-time reductions.
By taking the negation of formulas it is easy to see that the decision problem for 
$\Sigma_n$ is equivalent to that of $\Pi_n$, 
the decision problem for $\Sigma_{n,\leq}$ is equivalent to that of $\Pi_{n, <}$,
and the decision problem for $\Sigma_{n, <}$ is equivalent to that of $\Pi_{n, \leq}$. As
such it suffices to consider the ``$\Sigma$''-classes in the following.

By a standard trick, any $\QFF$-formula $\Phi(X)$ with free variables $X$ can be converted in polynomial time into an equivalent formula 
$\exists Y. f(X,Y) = 0$ where $f$ is a single polynomial.
It follows that if $n$ is odd then the decision problems for the classes $\Sigma_n$ and $\Sigma_{n, \leq}$ are polynomial-time equivalent
and if $n$ is even then the decision problems for the classes $\Sigma_n$ and $\Sigma_{n, <}$ are polynomial-time equivalent.

Of course, for $n = 0$ the decision problem is trivial for all three classes. 
For $n = 1$ we have the following remarkable result:

\begin{theorem}[{\cite{SS17}}]
    The decision problems for $\Sigma_1$ and $\Sigma_{1, <}$ are polynomial-time equivalent.
\end{theorem}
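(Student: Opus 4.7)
One direction is immediate: $\QFF_{<}$ is syntactically contained in $\QFF$, so every $\Sigma_{1,<}$-sentence is already a $\Sigma_1$-sentence, and the reduction of the $\Sigma_{1,<}$ decision problem to the $\Sigma_1$ decision problem is the identity. The substance of the theorem lies in the converse: reducing the $\Sigma_1$-decision problem to the $\Sigma_{1,<}$-decision problem in polynomial time. The plan is to follow the approach of \cite{SS17}.

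Given an input $\exists X.\,\Phi(X)$ in $\Sigma_1$, the first step is to reduce to the case of a single conjunctive clause of the form
\[
C \;:=\; \bigwedge_{i=1}^{r} P_i(X) \leq 0 \;\land\; \bigwedge_{j=1}^{s} Q_j(X) < 0,
\]
handling the disjunctive structure of $\Phi$ by introducing auxiliary existentially quantified selector variables (in order to avoid the exponential blow-up that would arise from a naive conversion to DNF). The task then reduces to replacing each non-strict atom $P_i(X) \leq 0$ by an equivalent conjunction of strict inequalities, possibly at the cost of a polynomial number of additional existentially quantified variables.

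The core idea exploits effective quantitative bounds from real algebraic geometry. If the system $C$ has a real solution at all, then it has one inside a Euclidean ball of radius $2^{\tau(L)}$, where $L$ denotes the input size and $\tau$ is a polynomial; moreover, on such a bounded region every nonzero polynomial appearing in the input takes absolute value at least $2^{-\sigma(L)}$ for another polynomial $\sigma$. Consequently, on the bounded region the inequality $P_i(X) \leq 0$ may be replaced by $P_i(X) < \varepsilon$ with $\varepsilon := 2^{-\sigma(L)}$ without altering satisfiability. The value of $\varepsilon$ is exponentially small in $L$, but it admits a succinct encoding using only $O(\log \sigma(L))$ auxiliary existentially quantified variables by repeated squaring.

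The principal obstacle is that this encoding of $\varepsilon$ (and of the radius $2^{\tau(L)}$) naturally takes the form of polynomial equalities such as $y_{k+1} = y_k^2$, which are precisely the atoms one is trying to eliminate. The key contribution of \cite{SS17} is a unified construction in which the slack needed to encode these auxiliary equalities is absorbed into the same strict-inequality relaxation used for the original $P_i \leq 0$ constraints, so that the final formula involves only strict inequalities. The output is a $\Sigma_{1,<}$-sentence of polynomial size in $L$ that is equivalent over $\R$ to the input $\Sigma_1$-sentence, from which the polynomial-time reduction is immediate.
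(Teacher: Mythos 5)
The paper does not reprove this theorem; it cites it from \cite{SS17}. Comparing your sketch to the underlying argument there: your ingredients (doubly-exponential ball bound, gap bound, repeated squaring) are the right ones, but the central relaxation step as you state it is incorrect. Replacing $P_i(X) \leq 0$ by $P_i(X) < \varepsilon$ while leaving the strict atoms $Q_j(X) < 0$ in place does not preserve infeasibility, even on a compact ball. Take $P(x,y) = x^2 + y^2$ and $Q(x,y) = -x$: the system $P \leq 0 \land Q < 0$ is infeasible, yet $P < \varepsilon \land Q < 0$ is feasible for every $\varepsilon > 0$. The gap claim you invoke, that ``every nonzero polynomial appearing in the input takes absolute value at least $2^{-\sigma(L)}$'' on the bounded region, is simply not true: $P$ above is a nonzero polynomial taking arbitrarily small positive values near the origin, and the mixed $\leq$/$<$ system cuts out a set that is neither open nor closed, so neither a compactness argument nor an openness argument applies to it directly. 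Tightening the strict atoms simultaneously to $Q_j < -\delta$ does not rescue the example either: $x^2 + y^2 < \delta \land x > \delta$ remains feasible for all $\delta < 1$.

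The route in \cite{SS17}, which your sketch omits, is to first normalise the whole formula to a single polynomial equation $\exists Z.\ f(Z) = 0$ by the standard substitutions $P \leq 0 \leftrightarrow \exists u.\ \bigl(P + u^2 = 0\bigr)$ and $Q < 0 \leftrightarrow \exists u,v.\ \bigl(Q + u^2 = 0 \land u v = 1\bigr)$, combining conjunctions via sums of squares and disjunctions via products (this also disposes of the disjunctive structure, making your selector-variable device unnecessary). Only then is the ball/gap machinery applied, and it applies soundly because the question has become whether $\min_{|Z| \leq R} f(Z)^2$ equals zero, a minimum over a genuinely compact set. Your final obstacle, that the chain $y_{k+1} = y_k^2$ encoding the doubly-exponential constants consists of equalities, is resolved not by ``absorbing slack'' but by noting that one-sided strict chains already suffice under an existential quantifier: the constraints $y_0 > 2,\ y_{k+1} > y_k^2$ force $y_N > 2^{2^N}$, which is all that is needed to impose $f(Z)^2 \cdot y_N < 1$ and hence $f(Z)^2 < 2^{-2^N}$; dually $0 < w_0 < 5,\ 0 < w_{k+1} < w_k^2$ caps $w_N$ from above so that $|Z|^2 < w_N$ enforces the ball constraint. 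No equality atoms survive, and no further relaxation is needed.
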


We thus have polynomial-time reductions for decision problems as indicated below:
\[
    \left(\Sigma_0 \equiv \Sigma_{0,\leq} \equiv \Sigma_{0, <}\right) \rightarrow \left(\Sigma_1 \equiv \Sigma_{1,\leq} \equiv \Sigma_{1, <}\right) \rightarrow \Sigma_{2,\leq} \rightarrow (\Sigma_{2} \equiv \Sigma_{2, <}) \rightarrow \Sigma_{3, <} \rightarrow \dots
\]
It is open to the best of our knowledge whether there exists a reduction of the decision problem for 
$\Sigma_2$
to that of  
$\Sigma_{2,\leq}$.
The techniques from~\cite{SS17} do not seem to carry over to higher orders of quantifier alternations.

We study the decision problem for the class $\Sigma_{2,\leq}$ in greater detail.
Let us denote by $\EA$ the complexity class of all problems reducible in polynomial time to this decision problem.
To demonstrate the robustness of this complexity class and gauge its computational power we give a number of equivalent characterisations.
It turns out that, somewhat surprisingly, the decision problem for $\Sigma_{2,\leq}$-sentences is equivalent to the decision problem for exists-forall-sentences whose quantifiers are restricted to range over compact sets.

Let $X = (x_1,\dots,x_n)$ be a vector of variables.
Let $y$ be a variable or a constant.
We write 
$|X| \leq y$
as an abbreviation for the formula
$
\bigwedge_{j = 1}^n \left(-y \leq x_j \leq y\right)
$.
Of course, this syntactic construct will only have the intended semantics
if our context ensures that $y \geq 0$, and we will only use it in such situations.

Write $I = [-1,1]$.
Let $\Phi_0(X,Y,Z)$ be a quantifier-free formula in $\mathcal{L}$.
We introduce the syntactic abbreviation
\[
    \exists X \in I^n. \forall Y \in I^m.
        \left(
            \Phi_0(X,Y,Z)
        \right)
\]
for the formula
\[
    \exists X \in \R^n. \forall Y \in \R^m.
        \left(
            |Y| > 1
            \lor
            \left(
                |X| \leq 1 
                \land
                \Phi_0(X,Y,Z)
            \right)
        \right)
\]
in the language $\mathcal{L}$.

We have the following result, whose proof is the focus of Section~\ref{sec:proofclasses}:

\begin{theorem}\label{theorem: equivalence of complexity classes}
The decision problems for the following three classes of sentences are equivalent with respect to polynomial-time reduction:
\begin{enumerate}
    \item 
        The class $\Sigma_{2, \leq}$, 
        consisting of sentences of the form
        \[ 
            \exists X \in \R^m. \forall Y \in \R^n. \left(\Phi_{0,\leq}(X, Y)\right),
        \]
        where
        $\Phi_{0,\leq}$ 
        is a 
        $\QFF_{\leq}$-formula.
    \item  
        The class $\bSigma_{2, \leq}$,
        consisting of sentences of the form 
        \[ 
            \exists X \in I^m. \forall Y \in I^n. \left(\Phi_{0,\leq}(X, Y)\right),
        \]
        where
        $\Phi_{0,\leq}$ 
        is a 
        $\QFF_{\leq}$-formula.
    \item 
        The class $\bSigmap_{2,\leq}$,
        consisting of sentences of the form
        \[ 
            \exists X \in I^m. \forall Y \in I^n. \left(\Psi_{0,\leq}(Y) \to \Phi_{0,\leq}(X, Y)\right),
        \]
        where
        $\Phi_{0,\leq}$ 
        and 
        $\Psi_{0,\leq}$
        are 
        $\QFF_{\leq}$-formulas.
\end{enumerate}
\end{theorem}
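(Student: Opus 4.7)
The plan is to establish the equivalence by proving a cycle of polynomial-time reductions among the three classes. The direction $\bSigma_{2,\leq} \to \bSigmap_{2,\leq}$ is immediate: take the precondition $\Psi_{0,\leq}(Y)$ to be the tautology $0 \leq 0$.

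For $\bSigma_{2,\leq} \to \Sigma_{2,\leq}$, I would exploit the rational surjection $\sigma \colon \R \to [-1, 1]$ given by $\sigma(t) = 2t/(1+t^2)$. Substituting $X_i = \sigma(s_i)$ and $Y_j = \sigma(t_j)$ in a $\bSigma_{2,\leq}$-sentence turns each polynomial inequality $P(X, Y) \leq 0$ into a rational-function inequality. Multiplying through by the strictly positive quantity $\prod_i (1+s_i^2)^{\deg_{X_i} P} \prod_j (1+t_j^2)^{\deg_{Y_j} P}$ clears all denominators while preserving the $\leq$-direction, yielding an equivalent $\Sigma_{2,\leq}$-sentence of polynomial size.

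For $\Sigma_{2,\leq} \to \bSigmap_{2,\leq}$, the plan is to appeal to quantitative sample-point theorems from real algebraic geometry (e.g.\ Basu--Pollack--Roy, Grigoriev). These supply bounds on the magnitude of witnesses $X$ and counter-witnesses $Y$ for a $\Sigma_{2,\leq}$-sentence $\exists X. \forall Y.\, \Phi(X, Y)$ that are representable using polynomially-many bits. Rescaling both quantified variables to fit into $I^m$ and $I^n$ accordingly, the bound is encoded through the precondition $\Psi(Y)$ in the resulting $\bSigmap_{2,\leq}$-sentence.

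The most technically delicate reduction is $\bSigmap_{2,\leq} \to \bSigma_{2,\leq}$, closing the cycle. The difficulty is that the implication $\Psi(Y) \to \Phi(X, Y)$ is not directly a $\QFF_{\leq}$-formula, since $\lnot \Psi$ introduces strict inequalities; moreover, the generic question whether $\Sigma_2$ reduces to $\Sigma_{2,\leq}$ is open (as noted earlier in this section), so a blanket argument is unavailable. My plan is to introduce auxiliary existentially-quantified variables encoding a Positivstellensatz-style certificate (Schm\"udgen's or Putinar's) that $\Phi$ holds on the compact semialgebraic set $\{Y \in I^n : \Psi(Y)\}$, with sums-of-squares multipliers translating the implication into a conjunction of non-strict polynomial inequalities. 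The main obstacle is making this polynomial-time, since standard Positivstellensatz degree bounds are exponential (or worse) in the input; the step likely requires an effective version tailored to the Archimedean setting of $I^n$, or an alternative syntactic manipulation that sidesteps Positivstellensatz by exploiting the specific structure of the $\bSigmap$-sentence at hand.
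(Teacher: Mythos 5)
Your plan gets three of the four arrows right, and one of them in a genuinely nicer way. The reduction $\bSigma_{2,\leq}\to\Sigma_{2,\leq}$ via the rational surjection $\sigma(t)=2t/(1+t^2)$ of $\R$ onto $[-1,1]$ is simpler than the paper's Theorem~\ref{Theorem: reduction bounded -> unbounded}, which negates the sentence, uses a compactness-plus-continuity argument to extract a uniform margin $\varepsilon>0$, bounds $\varepsilon$ from below via quantifier elimination, and negates back. You do need to clear denominators by a recursion on the \emph{term tree} that tracks a degree vector and multiplies each summand only by the deficit $\prod_i(1+s_i^2)^{d_i-\deg_i}$ (a naive least-common-denominator recursion blows up exponentially in the term depth), but this is a repairable detail and the approach works, at polynomial size, once one accounts for repeated-squaring encodings of the exponents.

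The genuine gap is $\bSigmap_{2,\leq}\to\bSigma_{2,\leq}$, which you correctly single out but do not close --- and without it your cycle does not close either. Positivstellensatz is the wrong tool: beyond the exponential degree bounds you flag, you would be certifying a \emph{non-strict} inequality on a compact set, a situation where neither Schm\"udgen's nor Putinar's theorem applies without an $\varepsilon$-perturbation, and the SOS multiplier coefficients are a priori unbounded so cannot be ranged over $I$. The paper's proof of Theorem~\ref{Theorem: adding premise} (Appendix~\ref{Appendix: Proof of adding premise}) takes an entirely different route: it negates the sentence, encodes the conclusion $\Phi$ by a continuous piecewise-polynomial function $f_\Phi$ via min/max, interprets the premise as a compact set $K(X)=\{Y\in I^m:\Psi(X,Y)\}$, shows that $h(X)=\max_{Y\in K(X)}f_\Phi(X,Y)$ is continuous when $K(\cdot)$ is Hausdorff-continuous (trivially so when $\Psi$ does not depend on $X$), then extracts a uniform margin $\varepsilon>0$ by compactness of $I^n$, bounds it below by Lemma~\ref{Lemma: lower bound on epsilon}, and eliminates the resulting strict inequalities with the trick $P>0\iff\exists u,v.\,(P=u^2\land uv=1)$ over boundable fresh variables. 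You should also note that your step $\Sigma_{2,\leq}\to\bSigmap_{2,\leq}$ is not as routine as ``sample-point bounds plus rescaling'': bounding the inner universal quantifier fails in general for $\Pi_{2,\leq}$-formulas (the paper gives the counterexample $\forall x\in[-1,1].\,\exists y.\,x^2(1-xy)\leq 0$), and one must first bound the existential block and then exploit the openness of $\QFF_{<}$-formulas over the resulting compact domain, as in Lemmas~\ref{Lemma: bounding existential quantifier} and~\ref{Lemma: bounding universal quantifier when existential quantifier is bounded}.
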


It is obvious that the decision problem for $\bSigma_{2,\leq}$-sentences reduces to that of 
$\bSigmap_{2,\leq}$-sentences.
Note however that it is not clear that a reduction should exist in either direction between 
$\Sigma_{2,\leq}$ and $\bSigma_{2,\leq}$.
On the one hand, the latter class only allows for quantification over bounded sets, which seems to make it more restrictive.
On the other hand, $\bSigma_{2,\leq}$-sentences involve strict inequalities and hence do not belong to the class $\Sigma_{2,\leq}$.
Let us denote by $\bEA$ and by $\bEAp$ the complexity classes induced respectively by the
decision problem for $\bSigma_{2,\leq}$-sentences and by the decision problem 
for $\bSigmap_{2,\leq}$-sentences.

A remark is in order on the robustness of our definition of the class $\EA$ under different encodings of polynomials.
In practice it is common to encode a polynomial $P$ as a list 
$\langle (\alpha_j, c_{j}) \rangle_{j = 1,\dots, m}$
where $\alpha_j \in \N^{n}$ are multi-indexes and $c_j \in \Z$ are integers satisfying \eqref{eq: polynomial standard representation}.
This is the encoding we have chosen in the definition of CEP. 
By contrast, the polynomials that occur in atomic predicates of a formula in the language $\mathcal{L}$ are encoded as terms over the signature $\langle\Z,+,\times\rangle$.
While one can translate the encoding \eqref{eq: polynomial standard representation} to a term over the signature $\langle\Z,+,\times\rangle$ in polynomial time,
a term of size $N$ can encode a polynomial whose number of non-zero coefficients grows exponentially in $N$, so that a polynomial-time translation in the other direction is not possible in general.
One may hence raise the justified objection that the reduction of CEP to the decision problem for $\Sigma_{2,\leq}$ sentences could hide an exponential overhead in the encoding of the polynomials.
Moreover, in order to show $\EA$-hardness of CEP we need to convert a compact set which is encoded as a $\QFF_{\leq}$-formula into 
an equivalent formula whose atoms use the encoding \eqref{eq: polynomial standard representation}.
We show in Theorem \ref{Theorem: atoms w.l.o.g. quartic} that we can 
efficiently convert any 
$\Sigma_{2,\leq}$-sentence into an equivalent one whose atoms have degree at most $4$.
This resolves the issue, for a uniform bound on the degrees allows one to translate back and forth in polynomial time between the two encodings of polynomials.
While an analogous result for $\Sigma_{2}$-sentences (and, \textit{e.g.}, $\QFF_{\leq}$-formulas) is straightforward 
(see \textit{e.g.} \cite[Lemma 3.2]{SS17} or the proof of Theorem \ref{Theorem: atoms w.l.o.g. quartic} below for a proof idea),
the argument becomes much more involved for $\Sigma_{2,\leq}$-sentences. 
It relies on many of the results that are established in the sequel.
Thus, for the majority of this paper we have to insist on our specific choice of encoding.

\subsection{Mathematical tools}

Our characterisation of the complexity class $\EA$ requires two
sophisticated results from effective real algebraic geometry: Singly
exponential quantifier elimination and a doubly exponential bound on a
ball meeting all components of a semialgebraic set.  We use the
following singly exponential quantifier elimination result given in
\cite{BasuPollackRoy}.  For a historical overview on this type of result
see \cite[Chapter 14, Bibliographical Notes]{BasuPollackRoy}.

\begin{theorem}[{\cite[Theorem 14.16]{BasuPollackRoy}}]\label{Theorem: singly exponential quantifier elimination precise}
    Let $\mathcal{P}$ be a set of at most $s$ polynomials with integer coefficients,
    each of degree at most $d$, 
    in $k + n_1 + \dots + n_{\ell}$ variables.
    Let $\tau$ be a bound on the bitsize of the coefficients of all $P \in \mathcal{P}$.
    Let 
    \[
        \Phi_{\ell}(Y)
        =
        (Q_1 X_1).
        \dots 
        (Q_{\ell} X_{\ell}).
            \left(
                \Psi_0(Y, X_1,\dots, X_{\ell})
            \right),
    \]
    where $Q_j \in \{\exists, \forall\}$ are alternating blocks of quantifiers,
    be a formula over the language $\mathcal{L}$,
    all of whose atoms involve polynomials contained in $\mathcal{P}$.
    Assume that the size of the block of variables $Y$ is $k$ and 
    that the size of the block of variables $X_j$ is $n_j$. 

    Then there exists an equivalent quantifier-free formula
    \[ 
        \omega_0(Y)
        =
        \bigvee_{i = 1}^I \bigwedge_{j = 1}^{J_i} \bigvee_{m = 1}^{M_{i,j}}
            P_{i,j,m}(Y) \bowtie_{i,j,m} 0. 
    \]
    over 
    $\mathcal{L}$,
    where:
    \begin{enumerate}
        \item 
            $I \leq s^{(n_1 + 1)\cdot \dots \cdot (n_\ell + 1)(k + 1)} d^{O(n_1 \cdot \dots \cdot n_{\ell} \cdot k)}$.
        \item $J_i \leq s^{(n_1 + 1)\cdot \dots \cdot (n_\ell + 1)} d^{O(n_1 \cdot \dots \cdot n_{\ell})}$.
        \item $M_{i,j} \leq d^{O(n_1\cdot\dots\cdot n_\ell)}$.
        \item The degrees of the polynomials $P_{i,j,m}$ are bounded by $d^{O(n_1\cdot\dots\cdot n_{\ell})}$.
        \item The bitsize of the coefficients of the polynomials $P_{i,j,m}$ is bounded by $\tau d^{O(n_1\cdot\dots\cdot n_{\ell}\cdot k)}$.
    \end{enumerate}
\end{theorem}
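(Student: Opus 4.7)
The plan is to follow the block-by-block critical points method pioneered by Grigoriev, Heintz--Roy--Solern\'o, and Renegar and refined in Basu--Pollack--Roy, rather than to run a full cylindrical algebraic decomposition (which would be doubly exponential in every parameter). The key idea is that eliminating one block of quantifiers contributes a factor roughly $d^{O(n_i)}$ to the degrees and $s^{O(n_i)}$ to the number of polynomials in the resulting quantifier-free formula, so a careful induction on $\ell$ yields the claimed bounds.

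First I would treat the innermost block. Suppose, after renaming, the innermost block is existential in variables $X_{\ell}$ of size $n_{\ell}$, with parameters $Z=(Y,X_1,\dots,X_{\ell-1})$. For fixed $Z$ the formula $\Psi_0$ partitions $\R^{n_\ell}$ into cells defined by sign conditions on $\mathcal P$. To decide the existential block it suffices to decide, for every realizable sign condition $\sigma\in\{-1,0,1\}^{\mathcal P}$ on $\R^{n_\ell}$, whether the corresponding cell is non-empty and meets the Boolean structure imposed by $\Psi_0$. The critical points method constructs, using infinitesimal deformation together with the Bezout-like construction based on sums of squares of the polynomials, a finite set of algebraic \emph{sample points}---given by Thom encodings of real roots of well-chosen univariate polynomials---which meets every connected component of every non-empty sign cell. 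Standard sign-determination over these Thom encodings (via Tarski queries expressible through subresultant computations) then yields, parametrically in $Z$, a Boolean combination of polynomial sign conditions on $Z$ that characterises realizability. One checks that this parametric procedure produces at most $s^{(n_\ell+1)(k'+1)}d^{O(n_\ell k')}$ sign formulas where $k'=k+n_1+\cdots+n_{\ell-1}$, with each polynomial of degree $d^{O(n_\ell)}$ and coefficient bitsize $\tau d^{O(n_\ell k')}$.

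Next I would iterate: the quantifier-free formula just obtained becomes the body for eliminating the next block, with $s$, $d$, $\tau$ replaced by their new bounds. Because each elimination step composes multiplicatively in the degrees and number of polynomials but the inner block sees only its own variables as ``unknowns,'' tracking the bounds through $\ell$ iterations and collecting exponents gives the products $(n_1+1)\cdots(n_\ell+1)(k+1)$, $n_1\cdots n_\ell\cdot k$, and $n_1\cdots n_\ell$ appearing in the statement. The alternation is handled by negating the intermediate quantifier-free formulas, which costs nothing in complexity since we allow both equalities and inequalities.

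The main obstacle, and the reason this is far from routine, is the parametric and fully uniform critical points construction: one cannot afford any genericity assumption on $Z$, because the parameters will later be bound by outer quantifiers. This forces the use of symbolic infinitesimals and a careful algebraic (not semialgebraic) limiting process to pass from the perturbed, bounded, smooth situation back to the original polynomials, while preserving the degree and bitsize bounds uniformly in $Z$. All of this, together with the subresultant-based sign determination for Thom-encoded real algebraic numbers and the precise accounting of how the $s,d,\tau$ parameters transform under each elimination step, is carried out in detail in \cite[Chapter~14]{BasuPollackRoy}; I would simply invoke that machinery in the above inductive scheme to obtain the five asserted bounds.
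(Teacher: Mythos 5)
The paper does not reprove this result; it is stated verbatim as a citation of Theorem~14.16 in Basu--Pollack--Roy, with no argument given. Your outline of the block-elimination-via-critical-points method with infinitesimal deformation, Thom-encoded sample points, and subresultant-based sign determination faithfully summarises the proof in that reference, and you correctly defer the delicate parametric degree and bitsize bookkeeping (which is precisely what makes the exponents come out as the products $(n_1+1)\cdots(n_\ell+1)(k+1)$ rather than something worse) to Chapter~14 of the book, so your proposal is consistent with the paper's treatment.
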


Recall that a \emph{sign condition} on a family 
$\mathcal{P}$ 
of polynomials in $n$ variables is a mapping 
$\sigma \colon \mathcal{P} \to \{-1,0,1\}$.
The \emph{realisation} of a sign condition $\sigma$ in $\R^n$ is the set 
\[ 
    \operatorname{Reali}(\sigma) = \Set{X \in \R^n}{\forall P \in \mathcal{P}.\operatorname{sign}(P(X)) = \sigma(P)}.
\]
A sign condition $\sigma$ is called \emph{realisable} if its realisation is non-empty.
Equivalently, a sign condition is a formula over the language 
$\mathcal{L}$
involving only conjunctions.

The next theorem is due to Vorobjov \cite{Vorobjov84}.
See also \cite[Lemma 9]{GriorievVorobjov88} and \cite[Theorem 4]{BasuRoyRadiusBound}.

\begin{theorem}\label{Theorem: Basu-Roy radius bound}
    There exists an integer constant $\beta'$ with the following property:
    Let $\mathcal{P}$ be a set of $s$ polynomials with integer coefficients in $n$ variables of degree at most $d \geq 2$.
    Assume that the bit-size of the coefficients of each polynomial in $\mathcal{P}$ is at most $\tau$.
    Then there exists a ball centred at the origin of radius at most 
    \[
       2^{\tau d^{\beta' (n + 1)}}
    \]
    which intersects every connected component of every realisable sign condition on 
    $\mathcal{P}$
    in $\R^n$.
\end{theorem}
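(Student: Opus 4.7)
The plan is to follow the classical \emph{critical points method}. The central idea is that on each connected component $C$ of each realisable sign condition, the function $f(X) = \|X\|^{2}$ attains its infimum either at a point inside $C$ or on the boundary of $\closure{C}$; in either case this minimiser is a critical point of $f$ restricted to some stratum defined by a subfamily of $\mathcal{P}$ vanishing on it. Hence, if we can bound the coordinates of all such critical points by $2^{\tau d^{\beta'(n+1)}}$, the ball of that radius will meet every component, as required.

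First I would handle the fact that $\mathcal{P}$ may be in special position: by a generic infinitesimal perturbation $P \mapsto P + \varepsilon Q_{P}$ (carried out symbolically via Puiseux series in an infinitesimal $\varepsilon$), I would pass to a family for which the closures of all sign conditions are smooth manifolds of the expected codimension and the gradients of any vanishing subfamily are linearly independent at every common zero. Standard real algebraic geometry guarantees that (i) every connected component of an original realisation is the limit, as $\varepsilon \downarrow 0$, of a connected component of a perturbed realisation, and (ii) any radius bound that holds uniformly for the perturbed realisations carries over to the limit. The perturbation can be chosen so that its coefficients are polynomial in the data and, crucially, does not inflate the degree or coefficient bitsize of the polynomials beyond constant factors.

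Next I would express the critical point equations algebraically. For a subfamily $\mathcal{P}' \subseteq \mathcal{P}$ of size $k$ vanishing at a critical point $x^{*}$, the Lagrange condition says that the $(k+1)\times n$ matrix whose first row is $\nabla f(x^{*}) = 2x^{*}$ and whose remaining rows are $\nabla P(x^{*})$ for $P \in \mathcal{P}'$ has rank $\leq k$. Together with $P(x^{*}) = 0$ for $P \in \mathcal{P}'$, this gives a polynomial system whose solutions form a zero-dimensional variety (after perturbation) of degree $d^{O(n)}$, with coefficients of bitsize $O(\tau d^{O(n)})$ after expanding the maximal minors. A standard argument using the univariate resultant and Mignotte-type bounds (e.g.\ the height bound for roots of a polynomial in $\Z[t]$ of degree $D$ and bitsize $T$ yielding absolute value at most $2^{T+1}$) then produces a bound on $\|x^{*}\|$ of the form $2^{\tau d^{\beta(n+1)}}$ for some absolute constant $\beta$. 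Summing over all choices of $\mathcal{P}'$ and every realisable sign condition increases the quantity only by a factor $2^{s}$, which is absorbed into the exponent at the cost of enlarging $\beta$ to some $\beta'$.

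The main obstacle, I expect, is the perturbation step: one must verify carefully that the passage to infinitesimals preserves both the combinatorics of connected components and the effective bounds, and that the height estimate survives the specialisation $\varepsilon \downarrow 0$. The cleanest way I would organise this is to work over the real closure $\R\langle\varepsilon\rangle$ of the ordered field $\R(\varepsilon)$, apply the critical point construction and resultant-based height bound there (where the perturbed system is smooth and zero-dimensional), and then invoke the Tarski--Seidenberg transfer principle to push the estimate back to $\R$ via the limit map. Everything else—Bezout-style degree accounting for the Jacobian minors and the Mignotte root bound on the eliminating univariate polynomial—is routine once this framework is in place.
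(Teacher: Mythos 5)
The paper does not prove this theorem; it is cited as a known result from Vorobjov, Grigoriev--Vorobjov, and Basu--Roy, so there is no in-paper argument to compare against. Your proposed strategy --- minimise $\|X\|^2$ on each component, reduce to a Lagrange critical-point system, put the system in general position by an infinitesimal perturbation over $\R\langle\varepsilon\rangle$, bound the zero-dimensional solution set by an eliminating resultant and a Cauchy/Mignotte root bound, then transfer back to $\R$ --- is exactly the approach taken in those references, so the plan is sound in outline.

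Two points deserve correction, though. First, the remark that ``summing over all choices of $\mathcal{P}'$ and every realisable sign condition increases the quantity only by a factor $2^{s}$, which is absorbed into the exponent'' is confused: you are not summing radii, you are taking a maximum of distances. Each individual critical point lies inside a ball of radius $2^{\tau d^{O(n)}}$, and the maximum over the (possibly $2^{s}$ many) subsystems is bounded by the same quantity; this is precisely \emph{why} the final bound is independent of $s$. If a factor of $2^{s}$ genuinely appeared, it could not be absorbed into an exponent that depends only on $d$, $n$ and $\tau$, and the theorem as stated would be false. Second, assertion (i), that ``every connected component of an original realisation is the limit of a connected component of a perturbed realisation,'' is not literally true for an arbitrary perturbation $P \mapsto P + \varepsilon Q_P$ (e.g.\ a component of $\{P=0\}$ may simply disappear). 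The references handle this with a more careful replacement scheme (replacing equalities by thickened inequalities with a second infinitesimal, or working with weak sign conditions and deformations $P^2(P^2-\gamma)$-style), and it is the uniform bound over $\R\langle\varepsilon\rangle$ together with the $\lim_{\varepsilon\to 0}$ map that is used, not a component-by-component limit. You correctly identify this as the main obstacle, but the one-sentence version you give would not survive scrutiny as written; the cleanest fix is to state and use the actual deformation lemma from Basu--Pollack--Roy rather than the informal limit claim.
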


Our proof of $\EA$-completeness of CEP combines spectral methods with two well-known but nontrivial results on algebraic numbers.
We require a version of Kronecker's theorem on simultaneous Diophantine approximation.
See \cite[Corollary 3.1]{10.5555/2634074.2634101} for a proof.

\begin{theorem}
\label{Theorem: Kronecker}
    Let $(\lambda_1, \dots, \lambda_m)$ be complex algebraic numbers of modulus $1$.
    Consider the free Abelian group
    \[ 
        L = \Set{(n_1,\dots,n_m) \in \Z^m}{\lambda_1^{n_1}\cdot\dots\cdot \lambda_m^{n_m} = 1}.
    \]
    Let $(\beta_1,\dots,\beta_s)$ be a basis of $L$.
    Let $\mathbb{T}^m = \Set{(z_1,\dots,z_m) \in \C^m}{|z_j| = 1}$ denote the complex unit $m$-torus.
    Then the closure of the set 
    $\Set{(\lambda_1^k,\dots,\lambda_m^k) \in \mathbb{T}^m}{k \in \N}$
    is the set 
    $S = \Set{(z_1,\dots,z_m) \in \mathbb{T}^m}{\forall j\leq s. (z_1,\dots,z_m)^{\beta_j} = 1}$.

    Moreover, for all $\varepsilon > 0$ and all $(z_1,\dots,z_m) \in S$ 
    there exist infinitely many indexes $k$ such that 
    $|\lambda_j^k - z_j| < \varepsilon$
    for $j = 1,\dots,n$.
\end{theorem}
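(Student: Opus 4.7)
The plan is to identify $\mathbb{T}^m$ with $(\mathbb{R}/\mathbb{Z})^m$ via $\lambda_j = e^{2\pi i\theta_j}$ and to analyse the forward orbit $\{(\lambda_1^k,\dots,\lambda_m^k)\}_{k\in\N}$ as the iterates of the translation $X\mapsto X+\theta$ on the torus. The central tool is Pontryagin duality for $\mathbb{T}^m$: every character is of the form $\chi_n(z_1,\dots,z_m) = z_1^{n_1}\cdots z_m^{n_m}$ for some $n\in\Z^m$, and every closed subgroup $H\subseteq\mathbb{T}^m$ is recovered as the intersection of the kernels of the characters in its annihilator $H^\perp = \{n\in\Z^m : \chi_n|_H\equiv 1\}$. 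Note that the algebraic hypothesis on the $\lambda_j$ is not actually used for the topological conclusion; it plays no role in what follows.

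First I would work with the two-sided orbit $\{(\lambda^k)\}_{k\in\Z}$ and let $H$ denote its closure. Since the closure of a subgroup of a topological group is again a subgroup, $H$ is a closed subgroup of $\mathbb{T}^m$. To compute $H^\perp$, observe that $\chi_n$ vanishes on $H$ iff $\chi_n(\lambda^k) = 1$ for all $k\in\Z$, iff $\lambda_1^{n_1}\cdots\lambda_m^{n_m} = 1$, iff $n\in L$. On the other hand, $S$ is by its very definition the common zero locus of the characters $\chi_{\beta_1},\dots,\chi_{\beta_s}$, which span $L$, so $S^\perp = L$ as well. Pontryagin duality then gives $H = S$.

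Next I would pass from the two-sided to the one-sided orbit. Let $T = \overline{\{(\lambda^k) : k\in\N\}}$. Then $T$ is a closed sub\emph{semi}group of the compact abelian group $S$, and the key lemma is that any closed sub-semigroup $T$ of a compact group is already a subgroup: for any $x\in T$, compactness supplies a convergent subsequence $x^{n_k}\to y$ with $n_k\to\infty$, so that $x^{n_{k+1}-n_k-1} \in T$ converges to $x^{-1}$, forcing $x^{-1}\in T$ by closedness. Applying this to $x = (\lambda_1,\dots,\lambda_m)$ shows $T$ contains the full two-sided orbit, whence $T = H = S$. Finally, for the \emph{infinitely often} strengthening, observe that for any prospective index $k_0$, the tail $\{(\lambda^k)\}_{k\geq k_0}$ is the translate of the forward orbit by the group element $(\lambda^{k_0})\in S$, hence is again dense in $S$; iterating this argument prevents the approximation $|\lambda_j^k-z_j|<\varepsilon$ from having only finitely many solutions. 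The main conceptual hurdle is the step from the two-sided to the one-sided orbit; once one has the sub-semigroup-to-subgroup lemma in hand, the rest is a routine application of character theory.
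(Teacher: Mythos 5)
The paper does not prove this theorem itself; it cites an external source (Corollary~3.1 of the cited reference), so there is no internal proof to compare against, and your argument must be assessed on its own. Your route --- compute the closure $H$ of the \emph{two-sided} orbit via Pontryagin duality, then pass to the one-sided orbit via the Numakura/Ellis lemma that a closed sub-semigroup of a compact topological group is a subgroup --- is correct and is a clean, standard way to establish this version of Kronecker's theorem. A few points worth tightening: in the duality step, what you directly get is $S=L^{\perp}$ and $H^{\perp}=L$, so $H=(H^{\perp})^{\perp}=(S^{\perp})^{\perp}=S$ uses both directions of the annihilator correspondence (closed subgroups of $\mathbb{T}^m$ $\leftrightarrow$ subgroups of $\mathbb{Z}^m$); in the semigroup lemma, you should pass to a further subsequence so that $n_{k+1}-n_k\geq 2$, since otherwise the exponent $n_{k+1}-n_k-1$ could be $0$ and $x^{0}$ need not lie in the forward semigroup; and you should note $\lambda\in S$ (each $\lambda^{\beta_j}=1$ by definition of $L$) so that $T\subseteq S$ and the translate $\lambda^{k_0}\cdot T\subseteq S$ makes sense. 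You are also right that algebraicity of the $\lambda_j$ plays no role in the topological conclusion --- it is only needed elsewhere in the paper (Theorem~\ref{Theorem: Masser}) to make the basis $\beta_1,\dots,\beta_s$ effectively computable.
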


Moreover, the integer multiplicative relations between given complex algebraic numbers in the unit circle can be elicited in polynomial time.
For a proof see \cite{CaiLiptonZalcstein00, masser88}.
We assume the standard encoding of algebraic numbers, see \cite{Cohen} for details.

\begin{theorem}\label{Theorem: Masser}
    Let $(\lambda_1, \dots, \lambda_m)$ be complex algebraic numbers of modulus $1$.
    Consider the free Abelian group
    \[ 
        L = \Set{(n_1,\dots,n_m) \in \Z^m}{\lambda_1^{n_1}\cdot\dots\cdot \lambda_m^{n_m}}.
    \]
    Then one can compute in polynomial time a basis $(\beta_1,\dots,\beta_s) \in (\Z^m)^s$ for $L$.
    Moreover, the integer entries of the basis elements $\beta_j$ are bounded polynomially in the 
    size of the encodings of $\lambda_1,\dots,\lambda_m$.
\end{theorem}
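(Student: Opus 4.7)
The plan is to split the statement into two independent claims and attack each separately: first, an existence statement asserting that the lattice $L$ admits a basis whose entries are bounded polynomially in the input size; second, an algorithmic statement showing that such a basis can actually be produced in polynomial time. The first claim is essentially Masser's effective bound from transcendence theory; the second claim is the algorithmic refinement due to Cai, Lipton, and Zalcstein.

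For the height bound, I would proceed as follows. Let $K = \mathbb{Q}(\lambda_1,\dots,\lambda_m)$ and let $d$ and $h$ be respectively the degree of $K$ and a bound on the Weil heights of the $\lambda_j$; both are polynomial in the input size because each $\lambda_j$ is encoded together with its minimal polynomial. Masser's theorem on multiplicative dependencies among algebraic numbers yields a constant $c(d)$, polynomial in $d$, such that if $L$ contains any nontrivial relation, then $L$ is generated by vectors $\beta \in \mathbb{Z}^m$ with $\|\beta\|_\infty \leq c(d)\, h^{m-1}$. The proof exploits the fact that modulus-$1$ algebraic numbers form a thin subset of $\overline{\mathbb{Q}}^\times$ (in particular their logarithmic heights control them tightly): successive minima of $L$, viewed as a sublattice of $\mathbb{Z}^m$, are bounded using Minkowski's theorem together with lower bounds on heights of products $\prod \lambda_j^{n_j} \neq 1$. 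Plugging in the a priori polynomial bounds on $d$ and $h$ yields the polynomial bound on the entries of basis vectors claimed in the theorem.

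For the algorithmic claim, the naive approach of enumerating all vectors of $\ell_\infty$-norm up to the Masser bound is exponential in $m$, so this will be the main obstacle. I would overcome it as follows. The subgroup $G \subseteq K^\times$ generated by $\lambda_1,\dots,\lambda_m$ has rank at most $d$, and the $\lambda_j$ lying on the unit circle means that, modulo the torsion subgroup of $G$ (a finite cyclic group of roots of unity whose order can be computed in polynomial time), the rank is bounded by the number of archimedean places of $K$, hence by $d$. Therefore $L$ has rank at least $m - d$, and one can separate roots-of-unity contributions from the rest. The relations among roots of unity can be read off after computing the order of the torsion part; the remaining relations can be found by setting up a system of integer linear equations over a bounded region of $\mathbb{Z}^m$ using the Galois conjugates of the $\lambda_j$ and solving via Hermite normal form computation. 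Each equality test $\prod \lambda_j^{n_j} = 1$ is decided in polynomial time using exact arithmetic in $K$, since the exponents remain polynomially bounded. A final Hermite normal form reduction on the resulting spanning set produces the desired basis in polynomial time, with entry bounds inherited from the bound of the previous paragraph.
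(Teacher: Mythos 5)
The paper does not contain a proof of this theorem: it is a cited result, with the reader directed to Masser's work for the height bound and to Cai--Lipton--Zalcstein for the polynomial-time algorithm. Your decomposition into precisely these two ingredients therefore matches the paper's implicit structure, and your treatment of the height bound (Masser's theorem on multiplicative dependencies, controlled via Weil heights) is consistent with the intended source.

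The algorithmic half, however, contains a concrete error and then a genuine gap. The claimed rank bound --- that the subgroup $G \subseteq K^\times$ generated by $\lambda_1,\dots,\lambda_m$ has rank at most $d = [K:\Q]$ --- is false: the paper's own Lemma~\ref{Lemma: rational points without multiplicative relations} constructs, for arbitrary $n$, points $q_1,\dots,q_n$ of modulus $1$ in $\Q(i)$ (so $d = 2$) that are multiplicatively independent, giving $G$ rank $n$. The inference that $L$ has rank at least $m - d$ is therefore unfounded; $L$ can be trivial regardless of $m$. Beyond this, the proposed step of ``setting up a system of integer linear equations over a bounded region of $\Z^m$ using the Galois conjugates and solving via Hermite normal form'' never says what linear system is being solved: the condition $\prod_j \lambda_j^{n_j}=1$ is multiplicative, and Galois conjugates do not linearise it. The standard route (and roughly what Cai--Lipton--Zalcstein do) is to pass to logarithms of absolute values at all places, which does yield integer linear constraints, and then to control the remaining root-of-unity ambiguity; but that is not what you wrote, and as written the algorithmic paragraph does not explain how Masser's existential bound is converted into a polynomial-time search, which is the entire technical content of the computability claim.
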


\section{Proof of Theorem \ref{theorem: equivalence of complexity classes}}
\label{sec:proofclasses}

Our proof of Theorem \ref{theorem: equivalence of complexity classes} will use 
Theorems \ref{Theorem: singly exponential quantifier elimination precise}
and \ref{Theorem: Basu-Roy radius bound}.
The latter are formulated in terms of the algebraic complexity of a family of polynomials.
We will reformulate them in terms of the bitsize of a formula in the language $\mathcal{L}$.

The \emph{matrix size} $\mu$ of a first-order formula 
\[    
    \Psi(Y)  =  (Q_1 X_1).\dots (Q_{\ell} X_{\ell}) .  \left(        \Phi_0(Y, X_1,\dots, X_{\ell})    \right),
\]
where $Q_j \in \{\exists, \forall\}$
is the number of bits required to write down the quantifier-free part 
$\Phi_0(Y, X_1, \dots, X_\ell)$.
The \emph{dimensions} of the formula $\Psi(Y)$ are the numbers $m, n_1,\dots,n_\ell$, where $m$ is the dimension of $Y$.
The \emph{size} $\sigma$ of the formula $\Psi(Y)$ is the number of bits required to write down the whole formula.
Note that we have $\sigma = O(m + n_1 + \dots + n_{\ell} + \mu)$.

Observe that if $\Phi(X)$ is a $\QFF$-formula of (matrix) size $\mu$ and $P(X) \bowtie 0$ is an atom of $\Phi$
then $P$ has degree at most $\mu$ and its coefficients are bounded in bitsize by $\mu$.
The following is an immediate corollary to Theorem \ref{Theorem: singly exponential quantifier elimination precise}:

\begin{theorem}\label{Theorem: singly exponential quantifier elimination}
    There exists a constant $\alpha$ with the following property:

    Let 
    \[
        (Q_1 X_1). \dots (Q_\ell X_{\ell}).
            \Phi_0(Y,X_1,\dots,X_{\ell})
    \]
    be a first-order formula in the language $\mathcal{L}$
    of matrix size 
    $\mu$
    and with dimensions 
    $m$, $n_1,\dots, n_{\ell}$.
    Then there exists an equivalent quantifier-free formula
    $
        \Psi_0(Y)
    $
    of size at most
    \[
        \mu^{\alpha^{\ell + 1}\left((m + 1) \cdot (n_1 + 1) \cdot \dots \cdot (n_\ell + 1)\right)}.
    \]
\end{theorem}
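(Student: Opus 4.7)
The plan is to view this result as a direct bookkeeping consequence of Theorem~\ref{Theorem: singly exponential quantifier elimination precise}. First I would translate the ``matrix size'' encoding into the algebraic parameters $(s, d, \tau)$ used in the precise statement; then invoke that theorem to obtain a quantifier-free equivalent $\Psi_0(Y)$; and finally combine the five bounds it provides (on the number $I$ of disjuncts, the number $J_i$ of conjuncts, the number $M_{i,j}$ of atoms per conjunct, the degrees of the polynomials, and the bitsizes of their coefficients) into a single bound on the matrix size of $\Psi_0$.

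For the translation, recall that each atom of a $\QFF$-formula has the form $P(Y, X_1, \dots, X_\ell) \bowtie 0$ for some polynomial $P \in \Z[Y, X_1, \dots, X_\ell]$. If the matrix size of $\Phi_0$ is $\mu$, then at most $\mu$ distinct polynomials occur among the atoms, each has degree at most $\mu$, and each integer coefficient has bitsize at most $\mu$. Hence when invoking Theorem~\ref{Theorem: singly exponential quantifier elimination precise} we may take $s, d, \tau \leq \mu$.

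Write $N = (m+1)(n_1+1)\cdots(n_\ell+1)$. Plugging $s, d, \tau \leq \mu$ into the five bounds from the precise theorem, and using that the implicit constants in the $O(\cdot)$ expressions there are absolute (in particular, independent of $\ell$), one obtains that $I$, $J_i$, $M_{i,j}$, the polynomial degrees $D$, and the coefficient bitsizes $\tau'$ of the resulting quantifier-free formula are all bounded by $\mu^{cN}$ for some absolute constant $c$. Writing down a single atom polynomial densely in the $m$ variables of $Y$ requires at most $\binom{m+D}{m}\cdot\tau' \leq \mu^{c'N}$ bits. Multiplying the bounds for $I$, $J_i$, $M_{i,j}$, and the per-atom size yields a total matrix size bounded by $\mu^{CN}$ for some absolute constant $C$. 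Finally, choosing $\alpha \geq \max\{C,2\}$ ensures that $\alpha^{\ell+1}N \geq CN$ for every $\ell \geq 0$, giving the claimed bound.

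The proof has no real mathematical obstacle beyond what is already in Theorem~\ref{Theorem: singly exponential quantifier elimination precise}: the only mild care required is checking that (i) the implicit $O(\cdot)$ constants there are genuinely absolute, so that our $c$ does not itself depend on $\ell$, and (ii) the slack between $\mu^{CN}$ and the more generous $\mu^{\alpha^{\ell+1}N}$ comfortably absorbs lower-order overheads such as the bits needed to record degrees and sparse exponent vectors. The deliberately loose form $\alpha^{\ell+1}N$ of the exponent is chosen so as to avoid having to track the exact constant $C$, making the corollary convenient to cite in later arguments.
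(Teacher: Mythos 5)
Your proposal is correct and takes exactly the approach the paper itself intends: the paper records only the one-line observation that a $\QFF$-formula of matrix size $\mu$ has at most $\mu$ distinct atomic polynomials, each of degree at most $\mu$ and coefficient bitsize at most $\mu$, and then declares the theorem an ``immediate corollary'' of Theorem~\ref{Theorem: singly exponential quantifier elimination precise}. Your write-up supplies the missing bookkeeping (substituting $s,d,\tau \leq \mu$, bounding the number of atoms by $I \cdot J_i \cdot M_{i,j}$, and estimating the per-atom encoding cost via the monomial count $\binom{m+D}{m}$ and the coefficient bitsize $\tau'$), all of which is sound. One small remark on your caveat (i): the exponent $\alpha^{\ell+1}$ in the statement is precisely the device that makes the argument insensitive to whether the implied constants in Theorem~\ref{Theorem: singly exponential quantifier elimination precise} compound with the number $\ell$ of quantifier blocks (as they in fact do in Basu--Pollack--Roy's formulation, where the nested $O(n_i)$ factors multiply). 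If those constants grow like $c_0^{\ell}$, you still have $\alpha^{\ell+1} \geq c_0^{\ell}$ once $\alpha \geq c_0$, so your final step already handles that case; the $\alpha^{\ell+1}$ form is therefore not merely ``deliberately loose'' slack but is doing real work, and worry (i) need not be resolved in your favour for the proof to close.
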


Theorem \ref{Theorem: Basu-Roy radius bound} entails the following:

\begin{corollary}\label{Corollary: radius bound}
    There exists a constant $\beta$ with the following property:
    Let 
    $\Phi_0(X)$ 
    be a quantifier-free formula in the language $\mathcal{L}$
    of matrix size $\mu$ and dimension $n \geq 1$.
    Then the sentence 
    $
        \exists X \in \R^n.
        \left(
            \Phi_0(X)
        \right)
    $
    is equivalent to the sentence 
    \[
        \exists X.
        \left(
            |X| \leq 2^{\mu^{\beta (n + 1)}}
            \land 
            \Phi_0(X)
        \right)
    \]
\end{corollary}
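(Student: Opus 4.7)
The plan is to derive the corollary directly from Theorem \ref{Theorem: Basu-Roy radius bound} by relating the matrix size $\mu$ of a $\QFF$-formula to the algebraic parameters $s$, $d$, $\tau$ of the polynomial family appearing in its atoms. First I would observe that a $\QFF$-formula of matrix size $\mu$ contains at most $\mu$ distinct atomic polynomials, each of degree at most $\mu$, and with integer coefficients of bitsize at most $\mu$, since each such datum contributes to the encoded bitstring. Writing $\mathcal{P}$ for the family of polynomials appearing in $\Phi_0$, this gives $s, d, \tau \leq \mu$ when instantiating Theorem \ref{Theorem: Basu-Roy radius bound}.

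The key observation driving the argument is that the semialgebraic set $S = \{X \in \R^n : \Phi_0(X)\}$ is a finite union of realisations of sign conditions on $\mathcal{P}$: the truth value of $\Phi_0$ depends only on the sign vector of the polynomials in $\mathcal{P}$, so it is constant on each $\mathrm{Reali}(\sigma)$. Hence, if $S$ is non-empty, then $\mathrm{Reali}(\sigma) \subseteq S$ for some realisable sign condition $\sigma$. Applying Theorem \ref{Theorem: Basu-Roy radius bound}, the ball $B$ centred at the origin of radius $2^{\tau d^{\beta'(n+1)}}$ meets every connected component of every realisable sign condition, and in particular intersects $\mathrm{Reali}(\sigma) \subseteq S$. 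This produces a witness $X \in S$ with $|X|$ bounded by this radius.

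It remains to absorb the radius bound into the form claimed in the corollary. Substituting $s, d, \tau \leq \mu$ yields
\[
    \tau d^{\beta'(n+1)} \leq \mu \cdot \mu^{\beta'(n+1)} = \mu^{\beta'(n+1) + 1}.
\]
Since $n \geq 1$ gives $n+1 \geq 2$, any choice $\beta \geq \beta' + 1$ satisfies $\beta(n+1) \geq \beta'(n+1) + 1$, and the desired bound $|X| \leq 2^{\mu^{\beta(n+1)}}$ follows. The reverse implication of the equivalence is immediate.

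I expect no genuine mathematical obstacle here; the work is essentially bookkeeping, reinterpreting parameters that are naturally measured in terms of the degree and coefficient bitsize of a polynomial family in terms of the uniform matrix size $\mu$ of the formula. The only nuisances are minor edge cases (for instance, Theorem \ref{Theorem: Basu-Roy radius bound} is stated for $d \geq 2$, which one disposes of by artificially padding with a trivial quadratic atom, and the degenerate case $\mu = 1$ admits an ad hoc solution by inspection), and care must be taken to ensure the constant $\beta$ is absolute rather than depending on $n$.
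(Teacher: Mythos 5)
Your proof is correct and takes essentially the same approach as the paper: relate the matrix size $\mu$ to the algebraic parameters $d, \tau$ of the atomic polynomials and invoke Theorem~\ref{Theorem: Basu-Roy radius bound}. The one stylistic difference is that the paper first converts $\Phi_0$ to disjunctive normal form, applies the radius bound to each disjunct, and then undoes the DNF via distributivity, whereas you skip the DNF entirely by observing directly that $S = \{X : \Phi_0(X)\}$ is a union of realisations of sign conditions on $\mathcal{P}$, so a nonempty $S$ contains some $\mathrm{Reali}(\sigma)$ and the ball from the theorem necessarily meets it. Your route is slightly more economical and makes the use of the sign-condition formulation of Theorem~\ref{Theorem: Basu-Roy radius bound} more explicit; the paper's DNF detour accomplishes the same thing with more syntactic manipulation. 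Your remarks on the edge cases $d < 2$ and the absoluteness of $\beta$ are sensible housekeeping that the paper leaves implicit.
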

\begin{proof}
We can write $\Phi_0$ in disjunctive normal form to obtain an equivalent formula 
\[
    \bigvee_{i = 1}^N \left(\bigwedge_{j = 1}^{s_i} P_{i,j}(X) \bowtie_{i,j} 0\right),
\]
with $\bowtie_{i,j} \in \{\leq,<,=\}$.
The atoms $P_{i,j} \bowtie_{i,j} 0$ correspond to atoms of $\Phi_0$.
In particular, each polynomial $P_{i,j}$ has degree at most $\mu$ and coefficients bounded in bitsize by $\mu$.

Now, the sentence
$\exists X.\left(\Phi_0(X)\right)$
is equivalent to the sentence
\[ 
    \bigvee_{i = 1}^N \exists X. 
    \left(
        \bigwedge_{j = 1}^{s_i} P_{i,j}(X) \bowtie_{i,j} 0
    \right).
\]
The latter sentence is, by Theorem \ref{Theorem: Basu-Roy radius bound} equivalent to 
\[ 
    \bigvee_{i = 1}^N \exists X. 
    \left(
        |X| \leq 2^{\mu^{\beta' (n + 1) + 1}}
        \land 
        \bigwedge_{j = 1}^{s_i} P_{i,j}(X) \bowtie_{i,j} 0
    \right).
\]
This is then, by distributivity, equivalent to 
\[ 
\exists X. 
\left(
    |X| \leq 2^{\mu^{\beta' (n + 1) + 1}}
    \land 
    \left(
        \bigvee_{i = 1}^N 
        \bigwedge_{j = 1}^{s_i} P_{i,j}(X) \bowtie_{i,j} 0
    \right)
\right).
\]
which by construction of the disjunctive normal form is equivalent to 
\[ 
    \exists X. 
    \left(
        |X| \leq 2^{\mu^{\beta' (n + 1) + 1}}
        \land 
        \Phi_0(X)
    \right).
\]
The result follows if we let $\beta = \beta' + 1$.
\end{proof}

Theorem \ref{Theorem: singly exponential quantifier elimination} and Corollary \ref{Corollary: radius bound} will allow us to efficiently convert certain formulas into equivalent ones whose quantifiers range over bounded intervals of doubly exponential size in the input data.
By the standard repeated squaring trick such formulas can further be efficiently converted into equivalent ones whose quantifiers range over the interval $I = [-1,1]$:

\begin{lemma}\label{Lemma: removing doubly exponential bounds}
    Given an integer $N$ in unary and a sentence 
    \[
        (Q_1 X_1).
        (Q_2 X_2).
        \dots 
        (Q_s X_s).
            \Phi_{0}(X_1,\dots, X_s),
    \]
    we can in polynomial time in the size of the sentence and 
    $N$ 
    compute a sentence 
    \begin{align*}
        \exists B \in [-1,1]^{N + 1}.
        (Q_1 X_1; |X_1| \leq 1).
        (Q_2 X_2; |X_2| \leq 1).
        \dots 
        (Q_s X_s; &|X_s| \leq 1).\\
        &\Psi_{0}(B,X_1,\dots,X_s)
    \end{align*}
    which is equivalent to the sentence
    \[
        (Q_1 X_1; |X_1| \leq 2^{2^N}).
        (Q_2 X_2; |X_2| \leq 2^{2^N}).
        \dots 
        (Q_s X_s; |X_s| \leq 2^{2^N}).
            \Phi_{0}(X_1,\dots, X_s).
    \]
    Here, the notation 
    $(Q_j; |X_j| \leq c)$ 
    indicates that the quantifier is restricted to 
    the set 
    \[
        \Set{X_j \in \R^{n_j}}{|X_{j,1}| \leq c, \dots, |X_{j,n_j}| \leq c}.
    \]
    Further, if $\Phi_0$ is a $\QFF_{\leq}$-formula then so is $\Psi_0$.
\end{lemma}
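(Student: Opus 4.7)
The plan is to use the standard repeated-squaring trick to encode the doubly-exponentially small constant $2^{-2^N}$ with $N+1$ auxiliary real variables all confined to $I = [-1,1]$, and then to rescale each of the original quantifier ranges by this factor so that they, too, become bounded by $1$.

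Concretely, I would introduce a fresh vector $B = (b_0, \ldots, b_N)$ and a $\QFF_{\leq}$ formula $\operatorname{Sq}(B)$ encoding the equations $b_0 = 1/2$ and $b_{i+1} = b_i^2$ for $i = 0, \ldots, N-1$. Each equation is expressible as a conjunction of two non-strict polynomial inequalities (e.g., $b_0 = 1/2$ becomes $2b_0 - 1 \leq 0 \land 1 - 2b_0 \leq 0$), so $\operatorname{Sq}$ consists of $O(N)$ atoms of constant size. Within $I^{N+1}$ these constraints pin down $B$ uniquely to $b_i = 2^{-2^i}$; in particular $b_N = 2^{-2^N} > 0$.

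For each atom $P(Y_1, \ldots, Y_s) \leq 0$ appearing in $\Phi_0$, writing $d = \deg P$, I would form the polynomial
\[
\tilde{P}(X, b_N) := b_N^{d} \cdot P(X_1/b_N, \ldots, X_s/b_N),
\]
which after algebraic simplification is a genuine polynomial of total degree $d$ in the variables $X_1, \ldots, X_s, b_N$. Since $b_N > 0$ on any admissible $B$, multiplication by $b_N^{d}$ is sign-preserving, so $\tilde{P} \leq 0$ is equivalent to $P(X/b_N) \leq 0$. Performing this substitution atom-by-atom inside the Boolean skeleton of $\Phi_0$ produces a $\QFF_{\leq}$ formula $\tilde{\Phi}_0(X, b_N)$, and I would set $\Psi_0(B, X_1, \ldots, X_s) := \operatorname{Sq}(B) \land \tilde{\Phi}_0(X, b_N)$, which is manifestly $\QFF_{\leq}$.

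Correctness is straightforward: since $\operatorname{Sq}(B)$ does not mention the $X_j$, it can be pulled out past the inner alternating quantifier block (each of whose quantifier domains $I^{n_j}$ is nonempty). The outer $\exists B \in I^{N+1}$ then fixes $B$ to the unique tuple above, at which point the bijective rescaling $Y_j = X_j / b_N$ between $I^{n_j}$ and $[-2^{2^N}, 2^{2^N}]^{n_j}$ converts $\tilde{\Phi}_0(X, b_N)$ back into $\Phi_0(Y)$, yielding the desired equivalence. The step I expect to require the most care is the polynomial-time size bound on $\tilde{P}$: since polynomials in $\mathcal{L}$ are encoded as arithmetic terms and a term of matrix size $\mu$ can represent exponentially many monomials, one cannot afford to expand $P$ into monomial form. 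Instead, I would compute $\tilde{P}$ directly on the syntax of $P$ by a recursion that simultaneously tracks the degree, using the rules $\widetilde{Q \cdot R} = \tilde{Q} \cdot \tilde{R}$ and $\widetilde{Q + R} = b_N^{\deg P - \deg Q}\tilde{Q} + b_N^{\deg P - \deg R}\tilde{R}$; each internal node then contributes at most $O(\deg P) = O(\mu)$ extra symbols, making the total running time polynomial in the input size and in $N$.
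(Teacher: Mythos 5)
Your proof is correct and coincides with the paper's: introduce auxiliary variables $B=(b_0,\ldots,b_N)$ pinned down by $2b_0=1$ and $b_{i+1}=b_i^2$, replace each atom $P\bowtie 0$ of $\Phi_0$ by $b_N^{\deg P}\,P(X_1/b_N,\ldots,X_s/b_N)\bowtie 0$, and conjoin with the squaring constraints. Your added remark on computing the homogenized atom bottom-up from the arithmetic term representing $P$ (so as to avoid the exponential blow-up of a monomial expansion) is a useful and necessary detail that the paper's terse one-paragraph proof leaves implicit; the only slip is notational, as the exponents in your rule for $\widetilde{Q+R}$ should read $\deg(Q+R)-\deg Q$ and $\deg(Q+R)-\deg R$ rather than referencing the global $\deg P$.
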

\begin{proof}
Introduce fresh variables $b_0,\dots,b_N$.
Let $\Psi_0'$ be the formula that results from $\Phi_0$ by replacing each atom 
\[
    P(X_1,\dots,X_s) \bowtie 0
\]
in $\Phi_0$, where $\bowtie \in \{\leq,<,=\}$,
by the atom
 \[
    b_N^{d_P} \cdot P(X_1 / b_N, \dots, X_s / b_N) \bowtie 0,
\]
where $d_P$ is the total degree of $P$.
Let $\Psi_0$ be the formula 
\[
    \Psi_0'
    \land 
    2b_0 = 1
    \land 
    b_1 = b_0^2
    \dots 
    \land 
    b_{N} = b_{N - 1}^2.
\]
\end{proof}

\subsection{Showing $\EA \subseteq \bEA$}
We now show that the decision problem $\EA$ reduces to $\bEA$ in polynomial time.

We first bound the existential quantifier.
This bound does not yet require the quantifier-free part of the sentence 
to involve only non-strict inequalities.

\begin{lemma}\label{Lemma: bounding existential quantifier}
    Let
    $
    \exists X \in \R^n.
    \forall Y \in \R^m.
        \left(
            \Phi_{0}(X, Y)
        \right).
    $
    be a sentence over the language $\mathcal{L}$
    of matrix size $\mu$.
    Then, denoting $I=[-1,1]$, we can compute in polynomial time an equivalent sentence of the form 
    \[
        \exists X \in I^{n + N}.
        \forall Y \in \R^m.
        \left(
            \Psi_{0}(X,Y)
        \right).
    \]
\end{lemma}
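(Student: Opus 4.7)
The plan is to establish a doubly-exponential a priori radius bound on the existential quantifier and then compress that interval to $I = [-1,1]$ by the repeated squaring trick from the proof of Lemma \ref{Lemma: removing doubly exponential bounds}. First I would apply Theorem \ref{Theorem: singly exponential quantifier elimination} with $\ell = 1$ and the single quantifier block $\forall Y$ to note that $\forall Y. \Phi_0(X, Y)$ is equivalent to a quantifier-free formula $\Psi_0'(X)$ of size at most $\mu^{\alpha^2(n+1)(m+1)}$. Then Corollary \ref{Corollary: radius bound} applied to $\exists X. \Psi_0'(X)$ yields an equivalent sentence $\exists X. (|X| \leq R \land \Psi_0'(X))$ with $R = 2^{2^N}$ for some $N = O(\alpha^2\beta(n+1)^2(m+1)\log_2 \mu)$, which is polynomial in the size of the input. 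Crucially, these two invocations are used only to derive the numerical radius bound $R$; we never actually materialise the quantifier-free formula $\Psi_0'$, which would in general have singly exponential size.

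Having established this bound, the original sentence is equivalent to $\exists X; |X| \leq R. \forall Y \in \R^m. \Phi_0(X, Y)$. To compress the $X$-block to $I$, I would mimic the construction in the proof of Lemma \ref{Lemma: removing doubly exponential bounds} but applied only to the existential block. Introduce $N+1$ fresh variables $b_0, \ldots, b_N$ constrained by the polynomial equations $2b_0 = 1$ and $b_i = b_{i-1}^2$ for $i = 1, \ldots, N$, which pin down the unique solution $b_N = 2^{-2^N} > 0$ inside $I^{N+1}$. Substitute $X = X'/b_N$ with $X' \in I^n$, and rewrite each atom $P(X, Y) \bowtie 0$ of $\Phi_0$ as $b_N^{d_P}\cdot P(X'/b_N, Y) \bowtie 0$, where $d_P$ is the total degree of $P$; since $b_N > 0$, the direction of the inequality is preserved, and after clearing denominators one obtains a genuine polynomial atom in the combined variables $(X', B, Y)$ of size polynomial in the original input.

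The result is the equivalent sentence $\exists X' \in I^n.\, \exists B \in I^{N+1}.\, \forall Y \in \R^m.\, \bigl(\Psi_0^{\mathrm{def}}(B) \land \Psi_0''(X', B, Y)\bigr)$, where $\Psi_0^{\mathrm{def}}$ enforces the defining relations on $B$ and can be pulled inside the universal block since it does not mention $Y$. Merging the blocks $X'$ and $B$ yields the desired form with $n + (N+1)$ existentially quantified variables ranging over $I$. The main thing that needs care is polynomial-time computability: we must avoid actually invoking quantifier elimination and use only its size bounds, and we must verify that $\log_2\log_2 R$ is polynomial in the input size so that only polynomially many auxiliary variables $b_i$ are needed, which is exactly what the computation in the first paragraph provides.
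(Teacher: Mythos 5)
Your proof is correct and follows essentially the same route as the paper: use singly exponential quantifier elimination and the radius-bound corollary solely to extract a doubly-exponential bound $R$ on $|X|$, insert the bound back into the original quantifier-free matrix $\Phi_0$ (never materialising the eliminated formula), and compress the bounded interval to $I^{n+N}$ via the repeated-squaring substitution from Lemma~\ref{Lemma: removing doubly exponential bounds}. The paper leaves the adaptation of that lemma to the existential block alone implicit (``By (the proof of) Lemma~\ref{Lemma: removing doubly exponential bounds}\ldots''), whereas you spell it out, which is a welcome clarification but not a departure.
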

\begin{proof}
    Consider the formula
    $
        \chi_{1}(X) 
        = 
        \forall Y \in \R^m.
            \left(
                \Phi_0(X,Y)
            \right).
    $
    By Theorem \ref{Theorem: singly exponential quantifier elimination}
    this formula is equivalent to a quantifier-free formula 
    $\chi_0(X)$
    of size at most $\mu^{\alpha^2 (n + 1) (m + 1)}$.
    By Corollary \ref{Corollary: radius bound} the sentence
    $
        \exists X \in \R^n.
        \left(
            \chi_0(X)
        \right)
    $
    is equivalent to the sentence
    \[
        \exists X \in \R^n.
        \left(
            |X| \leq 2^{\mu^{\alpha^2 \beta (n + 1)^2 (m + 1)}}
            \land 
            \chi_0(X)
        \right).
    \]
    Hence, our original sentence is equivalent to the sentence 
    \[
        \exists X \in \R^n.
        \forall Y \in \R^m.
        \left(
            |X| \leq 2^{\mu^{\alpha^2 \beta (n + 1)^2 (m + 1)}}
            \land
            \Phi_0(X, Y)
        \right).
    \]
    Now, we can compute in polynomial time a positive integer $N$ in unary such that we have
    $\mu^{\alpha^2 \beta (n + 1)^2 (m + 1)} \leq 2^N$.
    By (the proof of) Lemma \ref{Lemma: removing doubly exponential bounds} we obtain an equivalent sentence as claimed.
\end{proof}

Next we derive a similar bound for the universal quantifier in terms of the bound for the existential one.
This will require the assumption that all inequalities are non-strict.
The reason for this is the following simple continuity property of $\QFF_{<}$-formulas, 
which can fail for general formulas in the language $\mathcal{L}$:

\begin{proposition}\label{Proposition: continuity of QFF<}
    Let $\Phi_0(X)$ be a $\QFF_{<}$-formula with a vector of $n$ free variables $X$.
    Assume that $x \in \R^n$ is such that $\Phi_0(x)$ holds true.
    Then there exists $\varepsilon > 0$ such that $\Phi_0(\tilde{x})$ holds true for all 
    $\tilde{x} \in \R^n$
    with 
    $|x - \tilde{x}| < \varepsilon$.
\end{proposition}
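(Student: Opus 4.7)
The plan is to prove the stronger statement that the semantic set $S = \{\tilde{x} \in \R^n : \Phi_0(\tilde{x})\}$ is an open subset of $\R^n$; the proposition then follows immediately by choosing any $\varepsilon > 0$ such that the open ball of radius $\varepsilon$ around $x$ is contained in $S$.

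I will proceed by structural induction on $\Phi_0$. Since the paper's convention treats $\lnot$ and $\rightarrow$ as syntactic sugar and takes $\land,\lor$ as the only primitive connectives, a genuine $\QFF_{<}$-formula is a positive Boolean combination of atomic predicates of the form $P(X) < 0$ with $P \in \Z[X_1,\dots,X_n]$. For the base case I will observe that the solution set of such an atom is the preimage $P^{-1}((-\infty,0))$ of an open set under the continuous polynomial map $P$, and is therefore open in $\R^n$. For the inductive step, if $\Phi_0$ is of the form $\Psi_1 \land \Psi_2$ or $\Psi_1 \lor \Psi_2$, then by the induction hypothesis the solution sets $S_1, S_2$ of $\Psi_1, \Psi_2$ are open, and $S$ equals $S_1 \cap S_2$ or $S_1 \cup S_2$ respectively, hence is again open.

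There is no substantive obstacle; the argument is entirely routine. The one conceptual point worth flagging is that the property genuinely relies on the absence of $\leq$-atoms, since sets of the form $\{x : P(x) \leq 0\}$ are in general only closed and not open, so the analogous continuity statement fails for $\QFF_{\leq}$-formulas. This asymmetry is precisely what makes the distinction between $\Sigma_{2,\leq}$ and $\Sigma_{2,<}$ nontrivial and motivates singling out non-strict inequalities in the sequel.
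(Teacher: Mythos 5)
Your proof is correct and takes essentially the same approach as the paper's: structural induction on the formula, with the base case resting on continuity of polynomials and the inductive steps on closure of open sets under finite unions and intersections. Your framing in terms of openness of the solution set is a harmless repackaging of the same local statement.
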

\begin{proof}
    By structural induction on the formula $\Phi$.
    The base case follows from the fact that polynomials are continuous functions. 
    The induction steps are easy.
\end{proof}

\begin{lemma}\label{Lemma: bounding existential quantifier in terms of universal}
    Let $B \in \N$ be a positive integer constant.
    Let
    \[
        \Psi =
        \forall X \in \R^n.
        \exists Y \in \R^m.
        \left(
            |X| > B 
            \lor 
            \Phi_0(X,Y)
        \right)
    \]
    be a $\Pi_{2,<}$-sentence.
    Then the sentence $\Psi$ holds true over the reals if and only if the sentence  
    \[
        \Psi' =
        \exists C \in \R.
        \forall X \in \R^n.
        \exists Y \in \R^m.
        \left(
            |X| > B 
            \lor 
            (
            Y \leq C
            \land 
            \Phi(X,Y)
            )
        \right)
    \]
    holds true over the reals.
\end{lemma}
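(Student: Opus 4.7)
The plan is to prove the two implications separately. The direction $\Psi' \Rightarrow \Psi$ is immediate: any $Y$ satisfying $Y \leq C \wedge \Phi_0(X,Y)$ a fortiori satisfies $\Phi_0(X,Y)$, so the inner $\exists C$-witness may simply be discarded once the formula is weakened.

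For the non-trivial direction $\Psi \Rightarrow \Psi'$, the strategy is a straightforward compactness argument. The critical ingredient is Proposition \ref{Proposition: continuity of QFF<}: since $\Phi_0$ is a $\QFF_{<}$-formula, its truth set in $\R^n \times \R^m$ is open. Suppose $\Psi$ holds and let $K = \{X \in \R^n : |X| \leq B\}$, which is compact. For each $X_0 \in K$, the hypothesis $\Psi$ supplies some $Y_0 = Y_0(X_0) \in \R^m$ with $\Phi_0(X_0, Y_0)$. Applying Proposition \ref{Proposition: continuity of QFF<} to $\Phi_0$ in the combined variables $(X,Y)$ at the point $(X_0, Y_0)$ yields $\varepsilon(X_0) > 0$ such that $\Phi_0(X', Y_0)$ still holds for every $X' \in \R^n$ with $|X' - X_0| < \varepsilon(X_0)$, i.e.\ the same witness $Y_0$ continues to work throughout an open neighborhood of $X_0$.

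The open balls $B(X_0, \varepsilon(X_0))$ for $X_0 \in K$ cover $K$, so by compactness some finite subcover suffices, say the one associated with points $X_1, \dots, X_N \in K$ and witnesses $Y_1, \dots, Y_N$. Setting $C$ to be any real number bounding all coordinates of all $Y_i$ from above (for instance $C = \max_{i,j} (Y_i)_j$) then witnesses $\Psi'$: for any $X \in \R^n$, either $|X| > B$, in which case the disjunct $|X|>B$ fires and any $Y$ is admissible, or $|X| \leq B$, in which case $X$ lies in $B(X_i, \varepsilon(X_i))$ for some $i$, and $Y := Y_i$ then satisfies both $Y \leq C$ and $\Phi_0(X, Y_i)$.

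The main obstacle is less a difficulty than a subtle hypothesis check: the whole argument hinges on the truth set of $\Phi_0$ being open in $(X,Y)$, which is exactly the content of Proposition \ref{Proposition: continuity of QFF<} and is why the lemma is stated for $\Pi_{2,<}$-sentences rather than general ones. If $\Phi_0$ were allowed to contain non-strict inequalities, the witness $Y_0$ for $X_0$ could fail on every punctured neighbourhood of $X_0$, no finite subcover of $K$ would be available, and the bound $C$ would not exist (indeed the statement can genuinely fail in that regime, as one sees by taking $\Phi_0(X,Y) \equiv XY \geq 1$ on $X \in [0,1]$).
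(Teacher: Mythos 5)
Your proof is correct and follows the paper's argument step for step: the easy direction is by weakening of the inner formula, and the forward direction invokes Proposition~\ref{Proposition: continuity of QFF<} to stabilise each witness $Y_0$ on a neighbourhood of $X_0$, uses compactness of the closed ball $K$ to extract a finite subcover, and sets $C$ to a maximum over the finitely many witnesses. One cosmetic note: the lemma statement writes $Y \leq C$, but the downstream use in Lemma~\ref{Lemma: bounding universal quantifier when existential quantifier is bounded} and the paper's own choice $C=\max_i |Y(X_i)|$ make clear the intended reading is $|Y| \leq C$, so the safer choice is $C=\max_i|Y_i|$ rather than the one-sided $\max_{i,j}(Y_i)_j$.
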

\begin{proof}
    Clearly, $\Psi'$ implies $\Psi$, so that if $\Psi$ is false then $\Psi'$ is false.

    Suppose now that $\Psi$ is true.
    Let $K = \Set{X \in \R^n}{|X| \leq B}$.
    Then, by assumption,
    for all $X \in K$ there exists $Y(X) \in \R^m$ such that 
    $\Phi(X, Y(X))$ holds true.
    It follows from Proposition \ref{Proposition: continuity of QFF<} 
    that there exists $\varepsilon(X) > 0$ such that 
    $\Phi(X',Y(X))$ holds true for all $X'$ with $|X - X'| < \varepsilon(X)$.
    The set $\Set{\ball(X,\varepsilon(X))}{X \in K}$, where $\ball(X,c)$ denotes the ball
    of radius $c$ centered at $X$,
    is an open cover of $K$.
    The set $K$ is compact, so that this cover has a finite subcover
    $\ball(X_1,\varepsilon(X_1)), \dots, \ball(X_s, \varepsilon(X_s))$.
    It follows that for all $X \in K$ there exists $j \in \{1,\dots,s\}$ such that 
    $\Phi(X, Y(X_j))$ holds true.
    Thus, the formula $\Psi'$ holds true with $C = \max\{|Y(X_1)|,\dots,|Y(X_s)|\}$.
\end{proof}

Note that the conclusion of Lemma \ref{Lemma: bounding existential quantifier in terms of universal} does not hold true in general for $\Pi_{2,\leq}$-formulas.
For instance, the formula 
\[
    \forall x \in [-1,1]. \exists y \in \R. 
    \left( 
        x^2\left(1 - xy\right) \leq 0
    \right)
\]
is clearly true, but the formula
\[
    \exists C \in \R. \forall x \in [-1,1]. \exists y \in [-C,C].
    \left(
        x^2\left(1 - xy\right) \leq 0
    \right)
\]
is clearly false.

\begin{lemma}\label{Lemma: bounding universal quantifier when existential quantifier is bounded}
    Given a sentence of the form 
    \[
        \exists X \in I^n.
        \forall Y \in \R^m.
        \left(
            \Phi_{0,\leq}(X,Y)
        \right),
    \]
    where $\Phi_{0,\leq}$ is a $\QFF_{\leq}$-formula,
    we can compute in polynomial time an equivalent $\bSigma_{\leq}$-sentence
    \[
        \exists X \in I^n.
        \forall Y \in I^{n + M}.
        \left(
            \Psi_{0,\leq}(X,Y)
        \right).
    \]
\end{lemma}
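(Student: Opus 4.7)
The plan is to recast $\Psi$ as an equivalent sentence with the universal quantifier ranging over a doubly-exponentially bounded box, and then to collapse that bound to $I$ via Lemma~\ref{Lemma: removing doubly exponential bounds}. I first negate $\Psi$ to obtain the equivalent $\Pi_{2,<}$-sentence $\forall X \in \R^n.\ \exists Y \in \R^m.\ (|X| > 1 \lor \lnot\Phi_{0,\leq}(X,Y))$, whose matrix $\lnot\Phi_{0,\leq}$ is a $\QFF_<$-formula. Applying Lemma~\ref{Lemma: bounding existential quantifier in terms of universal} with constant $1$ then yields an equivalent sentence $\exists C \in \R.\ \Theta(C)$, where
\[
    \Theta(C) := \forall X \in I^n.\ \exists Y \in [-C, C]^m.\ \lnot\Phi_{0,\leq}(X, Y).
\]
A key feature is that $\Theta$ is monotone on $C \geq 0$ (since $[-C, C]^m \subseteq [-C', C']^m$ whenever $0 \leq C \leq C'$) and vacuously false for $C < 0$. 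Hence, as soon as $\exists C.\ \Theta(C)$ is bounded by some explicit value $C^*$, it collapses to the single instance $\Theta(C^*)$.

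To produce such a bound, I use that $\Theta(C)$ is a two-block formula in the single free variable $C$: Theorem~\ref{Theorem: singly exponential quantifier elimination} makes it equivalent to a quantifier-free formula $\theta_0(C)$ of size singly exponential in the input size, and Corollary~\ref{Corollary: radius bound}, applied to $\exists C.\ \theta_0(C)$, then yields a doubly-exponential radius bound. Concretely, one obtains an integer $N$, polynomial in the input size and computable in polynomial time in unary, such that $\exists C.\ \Theta(C) \iff \exists C \in [0, 2^{2^N}].\ \Theta(C)$; by monotonicity this collapses to $\Theta(2^{2^N})$. Negating back gives
\[
    \Psi \iff \exists X \in I^n.\ \forall Y \in [-2^{2^N}, 2^{2^N}]^m.\ \Phi_{0,\leq}(X, Y).
\]

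To finish, I uniformise the quantifier bounds at $2^{2^N}$ by rewriting the right-hand side as $\exists X \in [-2^{2^N}, 2^{2^N}]^n.\ \forall Y \in [-2^{2^N}, 2^{2^N}]^m.\ (|X| \leq 1 \land \Phi_{0,\leq}(X,Y))$; the conjunct $|X| \leq 1$ can be pushed inside the $\forall Y$ block since it does not depend on $Y$, and the matrix remains $\QFF_\leq$. Lemma~\ref{Lemma: removing doubly exponential bounds}, invoked with this $N$, then produces an equivalent sentence $\exists B \in I^{N+1}.\ \exists X \in I^n.\ \forall Y \in I^m.\ \Psi'_{0,\leq}(B, X, Y)$ whose matrix stays in $\QFF_\leq$: the substitution $Z \mapsto Z/b_N$ scales each atom by the positive factor $b_N^{d_P}$ (preserving non-strict inequalities), and the defining constraints $2b_0 = 1$, $b_{i+1} = b_i^2$ can each be encoded as a pair of non-strict inequalities. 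Merging the two existential blocks yields the desired $\bSigma_{2,\leq}$-sentence.

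The main obstacle is the derivation of the bound on $C$: one must chain Theorem~\ref{Theorem: singly exponential quantifier elimination} with Corollary~\ref{Corollary: radius bound} and track the size growth carefully so that the resulting exponent $N$ remains polynomial in the input. Crucially, the quantifier-free formula $\theta_0(C)$ is never explicitly constructed; only its size bound is used to extract the radius bound on $C$. The remaining steps (negation, monotonicity collapse, unifying bounds, applying Lemma~\ref{Lemma: removing doubly exponential bounds}, and merging existentials) are then essentially routine bookkeeping.
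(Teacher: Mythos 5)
Your proposal is correct and follows essentially the same strategy as the paper: negate to a $\Pi_{2,<}$-form, apply Lemma~\ref{Lemma: bounding existential quantifier in terms of universal}, bound $C$ via Theorem~\ref{Theorem: singly exponential quantifier elimination} and Corollary~\ref{Corollary: radius bound}, eliminate $C$ using the fact that the $Y$-constraint is monotone in $C$, and finish with Lemma~\ref{Lemma: removing doubly exponential bounds} before negating back. The only differences are cosmetic: you negate back slightly earlier in the chain and spell out the monotonicity collapse and the $\QFF_{\leq}$-preservation under the $b_N$-substitution, both of which the paper treats implicitly (the latter is already part of the statement of Lemma~\ref{Lemma: removing doubly exponential bounds}).
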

\begin{proof}
    The proof combines Lemma \ref{Lemma: bounding existential quantifier in terms of universal}
    with proof ideas similar to those used in the proof of 
    Lemma \ref{Lemma: bounding existential quantifier}.
    We can compute in polynomial time a sentence 
    \[
        \forall X \in I^n.
        \exists Y \in \R^m.
        \left(
            \chi_{0,<}(X,Y)
        \right),
    \]
    where $\chi_{0,<}$ is a $\QFF_{<}$-formula,
    which is equivalent to the negation of our original sentence.
    By Lemma 
    \ref{Lemma: bounding existential quantifier in terms of universal}
    this sentence is equivalent to the sentence
    \[
        \exists C \in \R. 
        \forall X \in I^n.
        \exists Y \in \R^m.
        \left(
            |Y| \leq C
            \land 
            \chi_{0,<}(X,Y)
        \right).
    \]
    Consider the formula
    \[
        \omega_{2}(C)
        =
        \forall X \in I^n.
        \exists Y \in \R^m.
        \left(
            |Y| \leq C
            \land 
            \chi_{0,<}(X,Y)
        \right).
    \]
    Let $\mu$ denote its matrix size.
    The number $\mu$ is clearly computable in polynomial time from our original sentence.
    By 
    Theorem \ref{Theorem: singly exponential quantifier elimination}
    the formula $\omega_2(C)$ is equivalent to a quantifier-free formula
    $\omega_0(C)$
    of size at most 
    $\mu^{2\alpha^3(n + 1)(m + 1)}$.
    By Corollary \ref{Corollary: radius bound} the sentence
    \[
        \exists C \in \R.
        \left(
            \omega_0(C)
        \right)
    \]
    is equivalent to the sentence
    \[
        \exists C \in \R. 
        \left(
            |C| \leq 2^{\mu^{4\alpha^3\beta(n + 1)(m + 1)}} \land \omega_0(C)
        \right).
    \]
    It follows that the negation of our original sentence is equivalent to the sentence 
    \[
        \exists C \in \R. 
        \forall X \in I^n.
        \exists Y \in \R^m.
        \left(
            |C| \leq 2^{\mu^{4\alpha^3\beta(n + 1)(m + 1)}} 
            \land |Y| \leq C
            \land \chi_{0,<}(X,Y)
        \right).
    \]
    The latter is further equivalent to the sentence
    \[
        \forall X \in I^n.
        \exists Y \in \R^m.
        \left(
            |Y| \leq 2^{\mu^{4\alpha^3\beta(n + 1)(m + 1)}} 
            \land \chi_{0,<}(X,Y)
        \right).
    \]
    Now, compute a positive integer $N$ in unary such that 
    $\mu^{4\alpha^3\beta(n + 1)(m + 1)} \leq 2^N$,
    and proceed as 
    in the proof of Lemma \ref{Lemma: bounding existential quantifier}
    to obtain in polynomial time an equivalent sentence of the form 
    \[
        \forall X \in I^n.
        \exists Y \in I^{m + M}.
        \left(
            \chi_{0,<}(X,Y)
        \right).
    \]
    The result follows by negating this sentence again.
\end{proof}

Lemmas \ref{Lemma: bounding existential quantifier} and 
\ref{Lemma: bounding universal quantifier when existential quantifier is bounded}
together yield the inclusion $\EA \subseteq \bEA$.

\subsection{Showing $\bEA \subseteq \EA$}

We next establish the inclusion
$\bEA \subseteq \EA$.
The key lemma is the following:

\begin{lemma}\label{Lemma: lower bound on epsilon}
    Let 
    \[
        \exists \varepsilon > 0.
        (Q_1 X \in \R^n).
        (Q_2 Y \in \R^m).
        \left(
            \Phi_0(\varepsilon, X, Y)
        \right)
    \] 
    be a sentence over the language $\mathcal{L}$
    of matrix size $\mu$.
    If this sentence holds true, then there exists 
    $\varepsilon > 2^{-\mu^{4\alpha^3\beta(n + 1)(m + 1)}}$
    witnessing the existential quantifier.
\end{lemma}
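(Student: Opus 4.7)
The plan is to use singly exponential quantifier elimination to reduce the inner two quantifier blocks to a single quantifier-free formula in $\varepsilon$, and then to convert the desired \emph{lower} bound on a positive witness $\varepsilon$ into an \emph{upper} bound on its reciprocal $\eta = 1/\varepsilon$, which is supplied by Theorem \ref{Theorem: Basu-Roy radius bound}.

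First I would apply Theorem \ref{Theorem: singly exponential quantifier elimination} with $\ell = 2$, free-variable dimension $k = 1$, and quantifier-block sizes $n$ and $m$ to the subformula $(Q_1 X)(Q_2 Y)\,\Phi_0(\varepsilon, X, Y)$. This produces an equivalent quantifier-free formula $\chi_0(\varepsilon)$ of size at most $s := \mu^{2\alpha^3(n+1)(m+1)}$; in particular, every polynomial appearing in an atom of $\chi_0$ has degree at most $s$ and coefficient bitsize at most $s$. The hypothesis thus reduces to $\exists \varepsilon > 0.\,\chi_0(\varepsilon)$.

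Next I would pass to the reciprocal variable $\eta = 1/\varepsilon$: for each polynomial $P$ appearing in an atom of $\chi_0$, let $\tilde{P}(\eta) := \eta^{\deg P} P(1/\eta)$ denote its reciprocal polynomial, which again has degree at most $s$ and coefficient bitsize at most $s$. On the half-line $\eta > 0$, multiplication by $\eta^{\deg P} > 0$ is sign-preserving, so replacing each atom $P(\varepsilon) \bowtie 0$ of $\chi_0$ by $\tilde{P}(\eta) \bowtie 0$ yields a quantifier-free formula $\tilde{\chi}_0(\eta)$ with
\[
    \exists \varepsilon > 0.\,\chi_0(\varepsilon)
    \iff
    \exists \eta > 0.\,\tilde{\chi}_0(\eta).
\]

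Finally, I would apply Theorem \ref{Theorem: Basu-Roy radius bound} \emph{directly} to the finite family of polynomials $\{\eta\} \cup \{\tilde{P} : P \text{ occurs in } \chi_0\}$, in dimension $1$, with parameters $d, \tau \leq s$. This produces a ball centred at the origin of radius at most $2^{s \cdot s^{\beta'\cdot 2}} = 2^{s^{2\beta' + 1}} \leq 2^{\mu^{4\alpha^{3}\beta(n+1)(m+1)}}$ (using $\beta = \beta' + 1$, so that $4\beta' + 2 \leq 4\beta$) which meets every connected component of every realisable sign condition on this family. The non-empty semialgebraic set $\{\eta > 0 : \tilde{\chi}_0(\eta)\}$ is a union of realisations of such sign conditions, so it contains some $\eta$ with $0 < \eta \leq 2^{\mu^{4\alpha^{3}\beta(n+1)(m+1)}}$, and $\varepsilon := 1/\eta$ is then a witness of the original sentence with $\varepsilon \geq 2^{-\mu^{4\alpha^{3}\beta(n+1)(m+1)}}$ (the strict inequality demanded by the statement follows from a trivial constant adjustment in the exponent). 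The main point requiring care is this last step: rather than applying Corollary \ref{Corollary: radius bound} through the matrix size of $\tilde{\chi}_0$ (which could suffer a polynomial blow-up under the reciprocal substitution at the term-encoded level), one must invoke Theorem \ref{Theorem: Basu-Roy radius bound} directly on the polynomial data, whose degree and coefficient bitsize bounds transfer transparently from the atoms of $\chi_0$ to their reciprocals.
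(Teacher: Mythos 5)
Your proof is correct and follows essentially the same route as the paper: eliminate the two inner quantifier blocks via Theorem~\ref{Theorem: singly exponential quantifier elimination} to obtain a univariate quantifier-free formula $\chi_0(\varepsilon)$ of size $s = \mu^{2\alpha^3(n+1)(m+1)}$, pass to reciprocal polynomials (using that multiplication by $\eta^{\deg P}>0$ preserves signs on the half-line), and then bound the reciprocal variable using Vorobjov's ball bound. The only cosmetic difference is that the paper invokes Corollary~\ref{Corollary: radius bound} on the transformed formula $\chi_0'$ whereas you apply Theorem~\ref{Theorem: Basu-Roy radius bound} directly to the reciprocal polynomial family; your stated motivation (avoiding a blow-up in matrix size) is a reasonable precaution, but the reciprocal transformation only permutes coefficients and changes a linear factor of $\eta$ per summand, so the matrix size is preserved up to constants and the paper's shortcut is also sound. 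Your acknowledgement that the bound naturally yields $\geq$ rather than $>$, absorbed by the slack $\beta=\beta'+1$, is an appropriate observation that the paper glosses over.
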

\begin{proof}
    Consider the formula 
    \[
        \chi_2(\varepsilon) 
        =
        (Q_1 X \in \R^n).
        (Q_2 Y \in \R^m).
        \left(
            \Phi_0(\varepsilon, X, Y)
        \right).
    \] 
    By 
    Theorem \ref{Theorem: singly exponential quantifier elimination}
    this formula is equivalent to a quantifier-free formula
    $\chi_0(\varepsilon)$
    of size at most 
    $\mu^{2\alpha^3(n + 1)(m + 1)}$.
    Let 
    $\chi_0'(\varepsilon)$ 
    be the sentence that results from $\chi_0$ by replacing each atom in
    $P(\varepsilon) \bowtie 0$ in $\chi_0$,
    where $P$ has degree $d$,
    with the atom 
    $\varepsilon^d P(1/\varepsilon) \bowtie 0$.
    Then, evidently, 
    a number
    $\varepsilon > 0$ 
    satisfies 
    $\chi_0(\varepsilon)$ 
    if and only if 
    $1/\varepsilon$
    satisfies 
    $\chi_0'(\varepsilon)$
    and vice versa.

    By Corollary \ref{Corollary: radius bound} 
    the sentence
    $
        \exists x \in \R.\left(x > 0 \land \chi_0'(x)\right)
    $
    is equivalent to the sentence
    \[ 
        \exists x \in \R.\left(x > 0 \land |x| \leq 2^{\mu^{4\alpha^3\beta(n + 1)(m + 1)}} \land \chi_0'(x)\right).
    \]
    The result follows.
\end{proof}

\begin{theorem}\label{Theorem: reduction bounded -> unbounded}
    Given a $\bSigma_{2, \leq}$-sentence 
    \[
        \exists X \in I^n.
        \forall Y \in I^m.
        \left(
            \Phi_{0,\leq}(X,Y)
        \right)
    \]
    we can compute in polynomial time an equivalent
    $\Sigma_{2,\leq}$-sentence.
\end{theorem}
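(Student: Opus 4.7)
The approach is to unfold the bounded-quantifier shorthand, observe that the only strict inequality introduced is the one hidden inside $|Y| > 1$, and replace it by $|Y| \geq 1$. The equivalence of the two sentences is then verified by a short topological argument using closedness of $\QFF_\leq$-definable sets.

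More concretely, by definition the input $\bSigma_{2,\leq}$-sentence $\exists X \in I^n. \forall Y \in I^m. \Phi_{0,\leq}(X,Y)$ abbreviates
\[
\exists X \in \R^n. \forall Y \in \R^m. \bigl(|Y| > 1 \vee (|X| \leq 1 \wedge \Phi_{0,\leq}(X,Y))\bigr),
\]
in which the only strict inequalities occur inside $|Y| > 1$, which unfolds to $\bigvee_j (Y_j < -1 \vee Y_j > 1)$. I propose the candidate $\Sigma_{2,\leq}$-sentence
\[
\exists X \in \R^n. \forall Y \in \R^m. \bigl(|Y| \geq 1 \vee (|X| \leq 1 \wedge \Phi_{0,\leq}(X,Y))\bigr),
\]
obtained by replacing each strict inequality in $|Y| > 1$ with its non-strict counterpart $\bigvee_j (Y_j \leq -1 \vee 1 \leq Y_j)$. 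Since $\Phi_{0,\leq}$ is already a $\QFF_\leq$-formula and the new matrix is a positive Boolean combination of non-strict inequalities, this is genuinely in $\Sigma_{2,\leq}$. The rewrite is performed in constant time.

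The equivalence of the two sentences is the crux. The forward direction is immediate, since $|Y| > 1 \Rightarrow |Y| \geq 1$. For the converse, fix a witness $X$ of the rewritten sentence. Substituting $Y = 0$ forces $|X| \leq 1$. For any $Y$ with $\max_j |Y_j| < 1$, the first disjunct fails, so $\Phi_{0,\leq}(X,Y)$ holds. For $Y$ with $\max_j |Y_j| = 1$, set $Y_k := (1 - 1/k)Y$; then $\max_j |Y_{k,j}| < 1$, so $\Phi_{0,\leq}(X,Y_k)$ holds for every $k$, and $Y_k \to Y$. Since $\Phi_{0,\leq}$ is a positive Boolean combination of closed atoms $P(X,Y) \leq 0$, the set $\{Y : \Phi_{0,\leq}(X,Y)\}$ is closed in $\R^m$, and passing to the limit yields $\Phi_{0,\leq}(X,Y)$. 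This recovers the $>$-version for every $Y$ with $|Y| \leq 1$.

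I do not expect a serious obstacle: the argument is the topological dual of Proposition \ref{Proposition: continuity of QFF<}, and uses only that finite unions and intersections of closed half-spaces are closed. In particular, the rather sophisticated Lemma \ref{Lemma: lower bound on epsilon} is not needed for this inclusion; I expect it to be deployed elsewhere in the proof of Theorem \ref{theorem: equivalence of complexity classes}, most plausibly for the reduction $\bEAp \to \bEA$, where eliminating an implication $\Psi_{0,\leq}(Y) \to \Phi_{0,\leq}(X,Y)$ introduces a disjunction of strict inequalities that closedness alone cannot absorb, and where a quantitative bound on a perturbation parameter $\varepsilon$ becomes essential.
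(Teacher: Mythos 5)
Your proof is correct, and it is substantially simpler and more direct than the paper's. You observe that replacing the strict guard $|Y| > 1$ in the unfolded shorthand by its non-strict relaxation $\bigvee_j (Y_j \leq -1 \vee 1 \leq Y_j)$ produces a genuine $\Sigma_{2,\leq}$-sentence, and the equivalence follows from closedness of $\{Y : \Phi_{0,\leq}(X,Y)\}$: any boundary point $Y$ with $\max_j |Y_j| = 1$ is the limit of $(1-1/k)Y$, each in the open cube, so the truth of $\Phi_{0,\leq}(X,\cdot)$ propagates to $Y$. The paper takes a much heavier route: it negates to a $\Pi_{2,<}$-sentence, uses compactness of $I^n$ together with Proposition \ref{Proposition: continuity of QFF<} to extract a uniform $\varepsilon > 0$ by which the range of the inner existential quantifier can be shrunk, invokes Lemma \ref{Lemma: lower bound on epsilon} (and hence the full quantifier-elimination bounds of Theorem \ref{Theorem: singly exponential quantifier elimination} and the radius bound of Corollary \ref{Corollary: radius bound}) to lower-bound $\varepsilon$ by a doubly exponential quantity $2^{-2^N}$, and then introduces $N+1$ auxiliary variables with repeated-squaring constraints before negating back. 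Your rewrite avoids all of that machinery, keeps the variable count and atom set essentially unchanged, and makes explicit that the only strict inequalities in a $\bSigma_{2,\leq}$-sentence, those hidden inside the shorthand, are harmless precisely because the matrix is built from non-strict atoms. Your closing remark is also accurate: Lemma \ref{Lemma: lower bound on epsilon} is deployed in the paper's proof of this theorem but is genuinely indispensable only in the $\bEAp \subseteq \bEA$ reduction (Theorem \ref{Theorem: adding premise}, via Lemma \ref{Lemma: adding premise simple case}), where the implication $\Psi_{0,\leq} \to \Phi_{0,\leq}$ introduces strict inequalities that closedness alone cannot absorb.
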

\begin{proof}
    The proof combines 
    Lemma \ref{Lemma: lower bound on epsilon}
    and 
    Proposition \ref{Proposition: continuity of QFF<}
    with similar ideas as in the proof of 
    Lemma \ref{Lemma: bounding existential quantifier}.
    The negation of the sentence is equivalent to a 
    $\Pi_{2, <}$-sentence 
    \begin{equation}\label{eq: bounded -> unbounded eq1}
        \forall X \in I^n.
        \exists Y \in I^m.
        \left(
            \Psi_{0,<}(X,Y)
        \right).
    \end{equation}
    We claim that this sentence 
    is equivalent to the sentence  
    \[
        \exists \varepsilon > 0.
        \forall X \in I^n.
        \exists Y \in (-1 + \varepsilon, 1 - \varepsilon)^m.
        \left(
            \Psi_{0,<}(X,Y)
        \right).
    \]
    Clearly, the latter sentence implies \eqref{eq: bounded -> unbounded eq1}.
    Conversely, assume that \eqref{eq: bounded -> unbounded eq1} holds true.
    Then for all $X \in I^n$ there exists $Y(X) \in I^m$ such that 
    $\Psi_{0,<}(X,Y(X))$ holds true.
    By Proposition \ref{Proposition: continuity of QFF<} there exists 
    for each $X \in I^n$ a number $\varepsilon(X) > 0$ such that the sentence
    $\Psi_{0,<}(\widetilde{X}, \widetilde{Y})$ holds true for all 
    $\widetilde{X}$ and all $\widetilde{Y}$ satisfying
    $|\widetilde{X} - X| < \varepsilon(X)$ 
    and 
    $|\widetilde{Y} - Y(X)| < \varepsilon(X)$.
    Since $I^n$ is compact, the cover 
    $\Set{\ball(X,\varepsilon(X))}{X \in I^n}$
    admits a finite subcover
    $\ball(X_1,\varepsilon_1),\dots,\ball(X_s,\varepsilon_s)$.
    Let $X \in I^n$. 
    Then $X \in \ball(X_j, \varepsilon_j)$ for some $j \in \{1,\dots,s\}$.
    It follows that 
    $\Psi_{0,<}(X, Y)$
    holds true for a
    $Y \in (-1 + \varepsilon_j/2, 1 - \varepsilon_j/2)^m$.
    Thus, the number 
    $\min\{\varepsilon_1/2,\dots,\varepsilon_s/2\}$
    witnesses the existential quantifier in the latter sentence.

    By Lemma \ref{Lemma: lower bound on epsilon} we can compute in polynomial time a positive integer $N \in \N$ in unary 
    such that \eqref{eq: bounded -> unbounded eq1} is equivalent to the sentence 
    \[  
        \forall X \in I^n.
        \exists Y \in \R^m.
        \left( 
            |Y| < 1 - 2^{-2^N}
            \land
            \Psi_{0, <}(X,Y)
        \right).
    \]
    This sentence is further equivalent to the sentence 
    \begin{align*}
        &\forall b_0 \in \R. 
        \dots 
        \forall b_N \in \R. 
        \forall X \in \R^n.
        \exists Y \in \R^{m}.\\
        &\big( 
            \left(
            |X| \leq 1
            \land 
            2b_0 - 1 = 0
            \land 
            b_1 - b_0^2 = 0
            \land
            \dots 
            \land 
            b_N - b_{N - 1}^2 = 0
            \right)\\ 
            &\rightarrow 
            \left(
            |Y| < 1 - b_N
            \land
            \Psi_{0, <}(X,Y)
            \right)
        \big).
    \end{align*}
    This last sentence is a $\Pi_{2,<}$-sentence, so that 
    by negating again we obtain a $\Sigma_{2,\leq}$-sentence equivalent to our original one.
\end{proof}

\subsection{Showing $\bEAp \subseteq \bEA$}
Finally we show the inclusion $\bEAp \subseteq \bEA$.

We will in fact show a stronger but more technical result.
Recall that the Hausdorff distance of two non-empty compact subsets $K$ and $L$ of a metric space $X$
is given by 
\[
    d(K,L) = \max\{\sup_{x \in K} d(x,L), \sup_{x \in L} d(x,K)\},
\] 
where, as usual,
$
    d(x, K) = \inf_{y \in K} d(x,y).
$
This distance function makes the non-empty compact subsets of a metric space into a metric space 
$\mathcal{F}(X)$ of its own.

\begin{theorem}\label{Theorem: adding premise}
    Consider a sentence of the form 
    \[
        \exists X \in I^n.
        \forall Y \in I^m.
        \left(
            \Psi_{0, \leq}(X, Y)
            \rightarrow 
            \Phi_{0, \leq}(X, Y)
        \right),
    \]
    where 
    $\Psi_{0, \leq}(X, Y)$ 
    and 
    $\Phi_{0,\leq}(X, Y)$
    are
    $\QFF_{\leq}$-formulas.
    Assume that the set-valued function
    $
        F(X) = \Set{Y \in I^m}{\Psi_{0,\leq}(X,Y)}
    $
    either maps some $X \in I^n$ to the empty set or 
    is continuous as a map of type $I^n \to \mathcal{F}(I^m)$.
    Then we can compute in polynomial time an equivalent $\bSigma_{2,\leq}$-sentence.
\end{theorem}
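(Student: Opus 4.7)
The strategy is to negate the sentence, exploit the continuity of $F$ to replace the strict inequalities that appear after negation by non-strict ones with a uniform positive slack, encode this slack using non-strict auxiliary constraints, and negate back.

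Negating the given sentence yields
\[
\forall X \in I^n.\, \exists Y \in I^m.\, \bigl(\Psi_{0,\leq}(X,Y) \land \Phi_{0,<}(X,Y)\bigr),
\]
where $\Phi_{0,<}$ is the $\QFF_<$-negation of $\Phi_{0,\leq}$. The central claim is that, under the hypothesis on $F$, this is equivalent to
\[
\exists \varepsilon > 0.\, \forall X \in I^n.\, \exists Y \in I^m.\, \bigl(\Psi_{0,\leq}(X,Y) \land \Phi_{0,\leq}^{(\varepsilon)}(X,Y)\bigr),
\]
where $\Phi_{0,\leq}^{(\varepsilon)}$ is obtained by replacing each strict atom $Q > 0$ in $\Phi_{0,<}$ by the closed atom $Q \geq \varepsilon$. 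The empty horn of the dichotomy makes both sides trivially false at the distinguished $X_0$, while the continuous horn requires showing that the slack function
\[
h(X) := \sup\bigl\{\varepsilon \geq 0 : \exists Y \in F(X).\, \Phi_{0,\leq}^{(\varepsilon)}(X,Y)\bigr\}
\]
is lower semi-continuous in $X$, hence attains a positive minimum on the compact set $I^n$.

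Having established this equivalence, I apply Lemma \ref{Lemma: lower bound on epsilon} to compute a polynomially-bounded $N$ such that some witness satisfies $\varepsilon^* > 2^{-2^N}$. To encode a positive value $\varepsilon \leq 2^{-2^N}$ using only $\QFF_\leq$-atoms, I introduce auxiliary variables $b_0, \ldots, b_N \in I$ satisfying a chain of non-strict polynomial inequalities---for instance $\tfrac14 \leq b_0 \leq \tfrac12$ together with $b_k^4 \leq b_{k+1} \leq b_k^2$ for each $k$---which forces $b_N$ into an interval strictly contained in $(0, 2^{-2^N}]$. By the monotonicity of $\Phi_{0,\leq}^{(\cdot)}$ in the slack parameter, replacing $\varepsilon$ by $b_N$ preserves the equivalence, and bundling $B = (b_0, \ldots, b_N)$ with the inner existential block yields an equivalent $\bPi_{2,\leq}$-sentence with bounded quantifiers. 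Negating back then produces the desired $\bSigma_{2,\leq}$-sentence.

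\textbf{Main obstacle.} The delicate point is the lower semi-continuity of $h$ in the continuous horn. Given $X_n \to X$ and any $\varepsilon_0$ strictly less than $h(X)$, one must exhibit $Y_n \in F(X_n)$ satisfying $\Phi_{0,\leq}^{(\varepsilon_0)}(X_n, Y_n)$ for all large $n$. One starts from a witness $Y \in F(X)$ achieving slack strictly exceeding $\varepsilon_0$, invokes the lower semi-continuity of $F$ (which comes from its Hausdorff continuity) to obtain $Y_n \in F(X_n)$ with $Y_n \to Y$, and transfers the slack via the continuity of the polynomials defining $\Phi$. A secondary technical challenge is arranging the $\QFF_\leq$-encoding so that the chain forces $b_N$ to be strictly positive while employing only non-strict atoms; this is ensured by the lower bound at the first link of the chain, which propagates through repeated squaring.
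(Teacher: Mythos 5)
Your high-level strategy mirrors the paper's: negate, use compactness and the Hausdorff continuity of $F$ to obtain a uniform positive slack $\varepsilon$, bound $\varepsilon$ from below using Lemma~\ref{Lemma: lower bound on epsilon}, encode the doubly exponentially small threshold via a repeated-squaring chain, and negate back. The semi-continuity argument you sketch for the ``slack function'' $h$ is a workable substitute for the paper's construction of a continuous function $f_{\Phi}$ with $\max_{Y \in F(X)} f_{\Phi}(X,Y)$ attaining a positive minimum. The looser chain $\tfrac14 \leq b_0 \leq \tfrac12$, $b_k^4 \leq b_{k+1} \leq b_k^2$ in place of the paper's exact recurrence $2b_0 = 1$, $b_{k+1} = b_k^2$ is also fine, as monotonicity of $\Phi^{(\cdot)}_{0,\leq}$ in the slack makes the precise value of $b_N$ immaterial.

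However, the very last step does not work as stated, and this is a genuine gap rather than a cosmetic omission. After your manipulations you have a sentence of the form
\[
\forall X \in I^n.\ \exists Y \in I^m.\ \exists B \in I^{N+1}.\ \bigl(\mathrm{chain}_{\leq}(B) \land \Psi_{0,\leq}(X,Y) \land \Phi^{(b_N)}_{0,\leq}(X,Y)\bigr),
\]
whose matrix is a $\QFF_{\leq}$-formula. Negating it yields
\[
\exists X \in I^n.\ \forall Y \in I^m.\ \forall B \in I^{N+1}.\ \bigl(\lnot\mathrm{chain}_{\leq}(B) \lor \lnot\Psi_{0,\leq}(X,Y) \lor \lnot\Phi^{(b_N)}_{0,\leq}(X,Y)\bigr),
\]
and the negation of a $\QFF_{\leq}$-formula is a $\QFF_{<}$-formula. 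So what you obtain is a bounded $\exists\forall$-sentence with \emph{strict} inequalities, not a $\bSigma_{2,\leq}$-sentence. There is no general route from $\exists X.\forall Y.(\Phi_{0,<})$ back to $\exists X.\forall Y.(\Phi_{0,\leq})$ that the paper makes available; indeed the analogous question in the unbounded setting (whether $\Sigma_2$ reduces to $\Sigma_{2,\leq}$) is explicitly noted as open.

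The paper avoids this problem by an extra, essential step before negating. Each strict atom $P > 0$ in the $\Pi_{2,<}$-form is first rewritten (using Lemma~\ref{Lemma: bounding polynomial over hypercube} and the bound from Lemma~\ref{Lemma: lower bound on epsilon}) as $\exists u, v$ bounded with $P = u^2 \land uv = 1$, and each non-strict atom $P \leq 0$ in $\Psi_{0,\leq}$ as $\exists w$ bounded with $P = -w^2$. Via Proposition~\ref{Proposition: turning propositional formula over equalities into single equality} the whole matrix is then collapsed into a single polynomial equation $H = 0$ with $H \geq 0$ by construction. Only then does one negate: since $H \geq 0$ always, $\lnot(H = 0)$ is the \emph{single} strict atom $H > 0$, and Lemma~\ref{Lemma: adding premise simple case} converts the bounded $\exists\forall$-sentence with matrix $H > 0$ into a $\bSigma_{2,\leq}$-sentence by one more application of the slack trick. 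Your proof needs this ``collapse to a single non-negative polynomial equation'' step, or an equivalent mechanism, before negating back; without it the final sentence lands in the wrong class.
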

\begin{proof}
    See Appendix \ref{Appendix: Proof of adding premise}.
\end{proof}

The inclusion $\bEAp \subseteq \bEA$ follows from the special case of Theorem \ref{Theorem: adding premise} where the formula $\Psi_{0,\leq}(Y)$ does not depend on $X$.

Theorem \ref{Theorem: adding premise}, in its general form, finally allows us to prove that the complexity class $\EA$ is robust under different encodings of polynomials.
We require the following proposition, which is easily established using elementary calculus:

\begin{proposition}\label{Proposition: Hausdorff continuity of products}
    Let $X$ and $Y$ be metric spaces.
    \begin{enumerate}
        \item 
            Let 
            $F \colon X \to \mathcal{F}(Y)$ 
            and 
            $G \colon X \to \mathcal{F}(Z)$
            be continuous with respect to the Hausdorff metric.
            Then the map 
            \[ 
                H \colon X \to \mathcal{F}(Y) \times \mathcal{F}(Z),
                \; 
                H(x) = F(x) \times G(x)
            \]
            is continuous with respect to the Hausdorff metric as well.
        \item 
            Let 
            $F \colon X \to \mathcal{F}(Y)$ be continuous with respect to the Hausdorff metric.
            Let
            $f \colon Y \to Z$ be a continuous function.
            Then the function 
            \[
                H \colon X \to \mathcal{F}(Y \times Z),
                \; 
                H(x) = F(x) \times f(F(x))
            \]
            is continuous with respect to the Hausdorff metric.
    \end{enumerate}
\end{proposition}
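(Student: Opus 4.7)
The plan is to first derive a clean product formula for the Hausdorff metric and then lift pointwise continuity of $f$ to Hausdorff continuity of the induced map. Throughout, equip the product $Y \times Z$ with the sup metric $d((y,z),(y',z')) = \max\{d(y,y'), d(z,z')\}$, which generates the product topology and makes the calculation below particularly clean.

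The key technical ingredient for part (1) is the identity
\begin{equation*}
    d_H(A \times B,\; A' \times B') \;=\; \max\{d_H(A, A'),\; d_H(B, B')\},
\end{equation*}
valid for all non-empty compact $A, A' \subseteq Y$ and $B, B' \subseteq Z$. I would prove this by computing, for any $(a,b) \in A \times B$, that $d((a,b),\, A' \times B') = \max\{d(a, A'), d(b, B')\}$, using that the infimum over a product factors as the max of infima under the sup metric, and then taking the sup over $A \times B$ and its symmetric counterpart. Applied to $H(x) = F(x) \times G(x)$, this identity gives $d_H(H(x), H(x_0)) = \max\{d_H(F(x), F(x_0)), d_H(G(x), G(x_0))\}$, so continuity of $H$ follows immediately from continuity of $F$ and $G$ and the fact that the maximum of two functions tending to $0$ tends to $0$.

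For part (2), the crux is to show that the map $x \mapsto f(F(x))$ is Hausdorff-continuous; once this is in hand, part (1) applied to $F$ and $f \circ F$ gives the conclusion. Fix $x_0$ and $\varepsilon > 0$. Because $F(x_0)$ is compact and $f$ is continuous at each point of $F(x_0)$, for every $y \in F(x_0)$ there is $\eta(y) > 0$ with $d(f(y'), f(y)) < \varepsilon/2$ whenever $d(y,y') < \eta(y)$. Cover $F(x_0)$ by finitely many balls $B(y_i, \eta(y_i)/2)$ and set $\eta = \min_i \eta(y_i)/2$. Choosing $x$ close enough to $x_0$ that $d_H(F(x), F(x_0)) < \eta$, a standard two-step triangle-inequality argument (going $y \in F(x) \leadsto y_0 \in F(x_0) \leadsto y_i$ and then comparing $f$-values via the triangle inequality, and symmetrically) yields $d_H(f(F(x)), f(F(x_0))) \leq \varepsilon$.

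The main subtle point is in part (2): continuity of $f$ does not automatically give uniform continuity on a neighborhood of $F(x_0)$ in a general metric space $Y$. The argument above circumvents this by combining pointwise continuity of $f$ with compactness of $F(x_0)$ to extract a single modulus $\eta$ that works simultaneously at every point of $F(x_0)$, which is precisely what is needed to convert a Hausdorff-small perturbation of $F(x_0)$ into a uniformly small perturbation of $f$-values. Everything else reduces to elementary manipulations of infima and suprema.
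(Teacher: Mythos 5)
The paper does not actually supply a proof of this proposition, merely asserting that it is ``easily established using elementary calculus,'' so there is no argument in the paper to compare against. Your proof is correct: the product-metric identity $d_H(A \times B, A' \times B') = \max\{d_H(A,A'), d_H(B,B')\}$ (under the sup metric on $Y \times Z$) is valid and gives part~(1) immediately, and in part~(2) the finite-subcover argument is exactly what is needed to convert pointwise continuity of $f$ plus compactness of $F(x_0)$ into a single modulus that controls the Hausdorff perturbation, sidestepping the fact that $f$ need not be uniformly continuous on any neighbourhood of $F(x_0)$ in a general metric space $Y$.

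One remark, which concerns the paper rather than your proof: as literally written, $H(x) = F(x) \times f(F(x))$ denotes the Cartesian product, and that is what you prove. However, in the paper's application of part~(2) within the proof of Theorem~\ref{Theorem: atoms w.l.o.g. quartic}, the set $F_\eta(X)$ is in fact the \emph{graph} $\{(y, f(y)) : y \in F_\nu(X)\}$, not the full product $F_\nu(X) \times f(F_\nu(X))$, since the new coordinate is functionally determined by the earlier ones. Your argument for Hausdorff continuity of $x \mapsto f(F(x))$ transfers verbatim to the graph map $x \mapsto \{(y,f(y)) : y \in F(x)\}$ --- take $\eta \leq \varepsilon$ and bound $d\bigl((y,f(y)),(y_0,f(y_0))\bigr)$ in the sup metric --- so the version of the proposition the paper actually needs is also covered by the same technique, but it is worth being aware that the literal statement and the intended application diverge slightly.
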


\begin{theorem}\label{Theorem: atoms w.l.o.g. quartic}
    Given a $\mathcal{C}$-sentence,
    where $\mathcal{C} \in \{\Sigma_{2,\leq}, \bSigma_{2,\leq}, \bSigma_{2,\leq}^p\}$
    we can compute in polynomial time an equivalent $\mathcal{C}$-sentence whose atoms 
    involve polynomials of degree at most four.
    In particular we can compute in polynomial time a sentence whose atoms involve polynomials encoded 
    as in \eqref{eq: polynomial standard representation}.
\end{theorem}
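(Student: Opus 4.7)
The plan is as follows. The proof proceeds along the standard lines of introducing one auxiliary variable per internal node of the term tree of each polynomial appearing in an atom, so that every defining equation has degree at most $2$. However, the naive version of this trick breaks down in the $\leq$-setting: to eliminate a subterm that depends on a universally quantified variable one would like to introduce a new \emph{universal} variable $z$ with defining equation $z = f(Y)$, but rewriting the resulting constraint as a single atom in the matrix forces us to negate the defining equation, producing $(z - f(Y))^2 > 0$, a strict inequality. I sidestep this by placing all such defining equations into the \emph{premise} of a $\bSigma_{2,\leq}^p$-sentence, and then invoke Theorem~\ref{Theorem: adding premise} to convert back to $\bSigma_{2,\leq}$-form. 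The $\Sigma_{2,\leq}$-case is reduced to the $\bSigma_{2,\leq}$-case using Lemmas~\ref{Lemma: bounding existential quantifier} and~\ref{Lemma: bounding universal quantifier when existential quantifier is bounded} together with Theorem~\ref{Theorem: reduction bounded -> unbounded}.

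Concretely, given an input sentence of the form \eqref{eq: intro generic bp-sentence}, I parse the polynomial $P$ in each atom $P \bowtie 0$ as a term tree over $\langle \Z, +, \times\rangle$ and introduce a fresh variable $z_v$ for every internal node $v$, together with a defining equation of degree at most $2$ of one of the forms $z_v = z_{v_1} + z_{v_2}$, $z_v = z_{v_1} \cdot z_{v_2}$, or $z_v = c$ for a constant leaf. Each original atom $P \bowtie 0$ is then replaced by the degree-$1$ atom $z_{\mathrm{root}} \bowtie 0$. To enforce $|z_v| \leq 1$, I compute bottom-up a bound $M_v$ on each subterm value, work with the rescaled variable $\tilde z_v = z_v / M_v$, and represent the (possibly doubly exponentially large) rational scaling coefficients and the large binary-encoded integer constants via auxiliary repeated-squaring equations in the style of Lemma~\ref{Lemma: removing doubly exponential bounds}. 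Each auxiliary variable is then classified by the free variables occurring in its subterm: those depending only on the existentially quantified block $X$ are introduced as fresh existential variables with their defining equations appended to the matrix (which remains a $\QFF_\leq$-formula); those involving some $Y$-variable are introduced as fresh universal variables with their defining equations appended to the premise $\Psi_{0,\leq}$. After this transformation every atom has degree at most $2$ and the overall sentence has the form \eqref{eq: intro generic bp-sentence}. This already settles the case $\mathcal{C} = \bSigma_{2,\leq}^p$.

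For $\mathcal{C} = \bSigma_{2,\leq}$ I apply Theorem~\ref{Theorem: adding premise} to the $\bSigma_{2,\leq}^p$-sentence just produced. The premise defines the new universal auxiliary variables $Z$ as a vector-valued polynomial $Z = g(X,Y)$, so the set-valued map $F(X) = \{(Y,Z) \in I^{m+\ell} : \Psi_{0,\leq}(X,Y,Z)\}$ is the graph $\{(Y, g(X,Y)) : Y \in I^m\}$; since $g$ is uniformly continuous on compact sets, $F$ is non-empty and Hausdorff-continuous in $X$, so the hypothesis of Theorem~\ref{Theorem: adding premise} is satisfied. The theorem delivers an equivalent $\bSigma_{2,\leq}$-sentence whose atoms remain of degree at most $4$. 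For $\mathcal{C} = \Sigma_{2,\leq}$, I reduce to $\bSigma_{2,\leq}$ first, apply the $\bSigma_{2,\leq}$-case, and reduce back. The ``in particular'' clause follows because a polynomial of degree at most $4$ in $n$ variables has at most $O(n^4)$ monomials, so the rewritten atoms translate in polynomial time to the monomial encoding \eqref{eq: polynomial standard representation}. The main obstacle throughout is precisely the use of Theorem~\ref{Theorem: adding premise}: without it there is no apparent way to introduce defining equations for fresh universal variables without leaving the $\leq$-fragment, which is why the present theorem has to wait until Theorem~\ref{Theorem: adding premise} is established in the sequel.
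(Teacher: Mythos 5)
Your proof is essentially the same as the paper's. Both proceed by (i) parsing each atom's polynomial as a term tree, introducing one fresh variable per internal node with a degree-two defining equation; (ii) collecting the defining equations into the premise of an exists-forall sentence; (iii) bounding the auxiliary variables via a bound computed bottom-up (or via Lemma~\ref{Lemma: bounding polynomial over hypercube}) together with the repeated-squaring trick of Lemma~\ref{Lemma: removing doubly exponential bounds}; (iv) verifying the Hausdorff-continuity hypothesis of Theorem~\ref{Theorem: adding premise} (the paper does this by structural induction via Proposition~\ref{Proposition: Hausdorff continuity of products}, you argue directly from the graph property — both work); and (v) handling $\Sigma_{2,\leq}$ by reducing to $\bSigma_{2,\leq}$ via Lemmas~\ref{Lemma: bounding existential quantifier}, \ref{Lemma: bounding universal quantifier when existential quantifier is bounded} and Theorem~\ref{Theorem: reduction bounded -> unbounded}.

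Two small points of divergence, neither of which affects correctness. First, you classify the auxiliary variables as existential or universal according to whether their subterm involves $Y$, whereas the paper quantifies all of them universally; your optimisation is unnecessary but harmless. Second, your claim that the auxiliary-variable transformation alone ``already settles the case $\mathcal{C} = \bSigma_{2,\leq}^p$'' is slightly imprecise: after the transformation the defining equations for the new universal variables will in general involve $X$ (a subterm like $x_1 y_1$ yields $z = x_1 y_1$), so the premise depends on $X$ and the result is not literally in the $\bSigmap_{2,\leq}$-form of \eqref{eq: intro generic bp-sentence}, which demands a premise depending only on $Y$. This is exactly the generality that Theorem~\ref{Theorem: adding premise} provides; the $\bSigmap_{2,\leq}$-case is then settled by applying that theorem and observing that a $\bSigma_{2,\leq}$-sentence is trivially also a $\bSigmap_{2,\leq}$-sentence. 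The paper is equally terse here (``implicitly contained in the below proof''), so this is a matter of phrasing rather than a gap.
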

\begin{proof}
    We prove the result for $\bSigma_{2,\leq}$-sentences.
The result for $\Sigma_{2,\leq}$ sentences follows by applying the reductions from Lemmas \ref{Lemma: bounding existential quantifier} and \ref{Lemma: bounding existential quantifier in terms of universal}, bounding the degrees of the atoms of the resulting $\bSigma_{2,\leq}$-sentence, and translating back to a $\Sigma_{2,\leq}$-sentence using Theorem \ref{Theorem: reduction bounded -> unbounded}. 
By inspecting the proof of Theorem \ref{Theorem: reduction bounded -> unbounded} we observe that the degree does not increase by this translation, since we only add new constraints, all of which involve polynomials of degree at most $2$.
The result for $\bSigmap_{2,\leq}$-sentences is implicitly contained in the below proof.

To a term $T$ over the signature $\langle\Z, +, \times\rangle$ 
we assign a variable $z_T$ and a formula $\eta_T$,
where $\eta_T$ is inductively defined as follows:
\begin{enumerate}
    \item 
        If $T$ is a variable $x_j$ then $\eta_T = \langle z_T = x_j \rangle$.
    \item 
        If $T$ is a constant $c$ then $\eta_T = \langle z_T = c\rangle$.
    \item 
        If $T$ is of the form $U \times V$, then 
        $\eta_T = \langle \eta_U \land \eta_V \land z_T = z_U \times z_V \rangle$
    \item 
        If $T$ is of the form $U + V$, then 
        $\eta_T = \langle \eta_U \land \eta_V \land z_T = z_U + z_V \rangle$.
\end{enumerate}
The formula $\eta_T$ is computable in polynomial time from $T$.
Its atoms have degree at most two.

Let 
$P(X, Y) \leq 0$ 
be an atom in 
$\Phi_{\leq}(X,Y)$,
where $P$ is encoded by a term $T$.
Let $\eta_T$ be the formula associated with $T$ as above.
Then the formula
$P(X, Y) \leq 0$
is equivalent to the formula 
$\forall Z. (\eta_T(X, Y, Z) \to z_T \leq 0)$.

More generally, the sentence 
$\exists X \in I^n. \forall Y \in I^m. \Phi_{\leq}(X,Y)$ 
is equivalent to the sentence 
\[  
    \exists X \in I^n. \forall Y \in I^m.
    \forall Z \in \R^M.
    \left(
        \eta_{T_1}(X,Y,Z) \land \dots \land \eta_{T_s}(X,Y,Z)
            \rightarrow
        \widehat{\Phi}_{\leq}(Z)
    \right),
\]
where 
$T_1,\dots,T_s$
are the term representations of the atoms in
$\Phi_{\leq}(X,Y)$
and
$\widehat{\Phi}_{\leq}(Z)$ 
is obtained from 
$\Phi_{\leq}(X,Y)$ 
by substituting each atom 
$P(X,Y) \leq 0$
with term representation $T_j$
by the atom 
$z_{T_j} \leq 0$.

We can further compute in polynomial time an integer $N$ in binary such that the above sentence is equivalent to 
\[  
    \exists X \in I^n. \forall Y \in I^m.
    \forall Z \in [-N, N]^M.
    \left(
        \eta_{T_1}(X,Y,Z) \land \dots \land \eta_{T_s}(X,Y,Z)
            \rightarrow
        \widehat{\Phi}_{\leq}(Z)
    \right),
\]

By the proof of Lemma \ref{Lemma: removing doubly exponential bounds} we can have $Z$ range over $[-1,1]^M$ up to introducing further auxiliary variables and 
adding a conjunction of quadratic polynomial equations to the formula $\widehat{\Phi}$.
For notational convenience, let us simply assume that the sentence is equivalent to 
\[  
    \exists X \in I^n. \forall Y \in I^m.
    \forall Z \in I^{M}.
    \left(
        \eta_{T_1}(X,Y,Z) \land \dots \land \eta_{T_s}(X,Y,Z)
            \rightarrow
        \widehat{\Phi}_{\leq}(Z)
    \right).
\]
This sentence involves polynomials of degree at most $2$.

Let us write 
$\eta(X,Y,Z) = \bigwedge_{j = 1}^s \eta_{T_j}(X,Y,Z)$.
It remains to show that the set 
\[
    \Set{(Y,Z) \in I^m \times I^M}
        {\eta(X,Y,Z)}
\]
depends continuously on $X$ in the Hausdorff metric.
It then follows from Theorem \ref{Theorem: adding premise} that we can compute in polynomial time an equivalent 
$\Sigma_{2,\leq}$-sentence.
By an inspection of the proof of Theorem \ref{Theorem: adding premise}, the degree of the atoms is at most doubled in this new sentence.

Now,
The formula $\eta$ is a conjunction of atoms of the form 
$
    z_j = x_k
$,
$
    z_j = y_k
$,
$
    z_j = c
$,
$
    z_j = z_{k} + z_{\ell}
$,
or
$
    z_j = z_k \times z_{\ell}
$.

We prove the result by structural induction, using Proposition \ref{Proposition: Hausdorff continuity of products}.
For a formula $\eta(X,Y,Z)$ with $n + m + s$ free variables $(X,Y,Z)$
write $F_{\eta} \colon I^n \to \mathcal{F}(I^{m + s})$ for the map that sends 
$X \in I^n$ 
to the set 
$\Set{(Y,Z) \in I^m \times I^s}{\eta(X,Y,Z)}$. 

If 
$\eta(X,Y,z)$ 
is of the form 
$z = x_k$, $z = y_k$, or $z = c$
then the function 
$F_{\eta}$
is easily seen to be continuous.

If 
$
    \eta(X,Y,z_1,\dots,z_s) = \nu(X, Y, z_1,\dots,z_{s - 1}) \land \mu(X,Y, z_s)
$
where $\mu(X,Y,z_s)$ is of the form 
$z_s = x_k$, $z_s = y_k$, or $z_s = c$
then 
\[
    F_{\eta}(X) 
    = 
    F_{\nu} (X) \times \Set{z_s \in \R}{\mu(X, Y, z_s)}.
\]
Continuity of $F_{\eta}$ follows from the first part of Proposition \ref{Proposition: Hausdorff continuity of products}.

If
$
    \eta(X,Y,z_1,\dots,z_s) = \nu(X, Y, z_1,\dots,z_{s - 1}) \land \mu(X, Y, z_j, z_k, z_s)
$
where 
$\mu(X,Y,z_j,z_k,z_s)$ 
is of the form 
$
    z_s = z_{j} \square z_{s}
$
with $\square \in \{+,\times\}$, then 
\[
    F_{\eta}(X)
    = 
    F_\nu(X) \times f(F_\nu(X)),
\]
where 
$f(Y, z_1,\dots,z_{s - 1}) = z_j \square z_k$.
Continuity of $F_{\eta}$ follows from the second part of Proposition \ref{Proposition: Hausdorff continuity of products}.
\end{proof}

\section{The complexity of deciding the Compact Escape Problem}

We show that CEP is complete for the complexity class
$\EA$.
Formally this is achieved by locating CEP between the complexity classes 
$\bEA$ and $\bEAp$
and applying Theorem \ref{theorem: equivalence of complexity classes}.

Let us first show that CEP is $\EA$-hard.
As a preparation we need to construct in polynomial time an arbitrary finite number of irrational rotations with independent angles:

\begin{lemma}\label{Lemma: rational points without multiplicative relations}
    Given $n \in \N$ in unary we can compute in polynomial time 
    a set of points 
    $q_1,\dots,q_n \in \mathbb{T}^1 \subseteq \C$
    with rational real and imaginary part such that 
    the only integer solution $(e_1,\dots,e_n) \in \Z^n$ to the equation 
    $
        q_1^{e_1}\cdot \dots \cdot q_n^{e_n} = 1
    $
    is the zero vector.
\end{lemma}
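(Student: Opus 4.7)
The plan is to take $q_k$ to be the ratio of a Gaussian prime to its conjugate, using Gaussian primes lying above distinct rational primes $p_k \equiv 1 \pmod 4$; multiplicative independence will then follow immediately from unique factorisation in $\Z[i]$. Concretely, I would enumerate the first $n$ rational primes $p_1 < \cdots < p_n$ with $p_k \equiv 1 \pmod 4$. By the prime number theorem in arithmetic progressions one has $p_n = O(n \log n)$, so these primes can be found in polynomial time in $n$ by scanning the integers and applying a polynomial-time primality test. For each $p_k$, compute a representation $p_k = a_k^2 + b_k^2$ (guaranteed by Fermat's two-squares theorem), either by brute search over $a_k \in \{1, \ldots, \lfloor \sqrt{p_k} \rfloor\}$, which is feasible since $\sqrt{p_k}$ is polynomial in $n$, or by Cornacchia's algorithm. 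Setting $\pi_k = a_k + b_k i$ and
\[
    q_k \;=\; \frac{\pi_k}{\bar{\pi}_k} \;=\; \frac{a_k^2 - b_k^2}{p_k} + i \cdot \frac{2 a_k b_k}{p_k},
\]
each $q_k$ lies in $\mathbb{T}^1$ and has rational real and imaginary parts of bit-length $O(\log n)$, so the whole construction runs in polynomial time in $n$.

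To verify multiplicative independence, suppose $q_1^{e_1} \cdots q_n^{e_n} = 1$ for some $(e_1, \ldots, e_n) \in \Z^n$. Splitting $e_k = e_k^+ - e_k^-$ with $e_k^\pm \geq 0$ and clearing denominators yields the identity
\[
    \prod_{k = 1}^n \pi_k^{e_k^+} \bar{\pi}_k^{e_k^-}
    \;=\;
    \prod_{k = 1}^n \pi_k^{e_k^-} \bar{\pi}_k^{e_k^+}
\]
in $\Z[i]$. Each $\pi_k$ is a Gaussian prime since its norm is the rational prime $p_k$, and the $2n$ Gaussian primes $\pi_1, \bar{\pi}_1, \ldots, \pi_n, \bar{\pi}_n$ are pairwise non-associate: across distinct indices $k \ne j$ this holds because they lie over distinct rational primes, and within a single index $k$ any associate relation $\pi_k = u \bar{\pi}_k$ with $u \in \{\pm 1, \pm i\}$ would force $p_k = 2$ or $p_k$ to be a perfect square, both impossible for a rational prime $p_k \equiv 1 \pmod 4$. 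Unique factorisation in $\Z[i]$ then forces $e_k^+ = e_k^-$, and hence $e_k = 0$, for every $k$.

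The main obstacle is not algebraic but bookkeeping: one must confirm that enumerating the primes $p_k$ and computing the two-squares representations both run in time polynomial in $n$ (where $n$ is given in unary), and that the resulting rationals have polynomial-size encodings. All of these are classical. The genuine algebraic content of the lemma — multiplicative independence from unique factorisation in the Gaussian integers — is then a short application of the fact that the chosen primes split non-trivially in $\Z[i]$ into non-associate prime factors.
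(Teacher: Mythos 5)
Your proof is correct and follows essentially the same approach as the paper: take the first $n$ primes $p_k \equiv 1 \pmod 4$, split each in $\Z[i]$ as $\pi_k\bar\pi_k$, set $q_k = \pi_k/\bar\pi_k$, and deduce multiplicative independence from unique factorisation and the fact that the $\pi_k, \bar\pi_k$ are pairwise non-associate. If anything, your version is slightly more careful than the paper's in two spots: you explicitly split $e_k = e_k^+ - e_k^-$ to avoid negative exponents when invoking unique factorisation, and you spell out why $\pi_k$ and $\bar\pi_k$ cannot be associates (ruling out $p_k=2$ or a perfect square), both of which the paper leaves implicit.
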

\begin{proof}
Recall the following facts about the ring $\Z[i]$ of Gaussian integers, see e.g.~\cite[{Kapitel 1, \S 1}]{Neukirch} for details:
\begin{enumerate}
    \item $\Z[i]$ is a unique factorisation domain.
    \item The units of $\Z[i]$ are $1,-1,i,-i$.
    \item Every prime number $p \in \Z$ with $p \equiv 3 \; (\text{mod } 4)$ is a prime number in $\Z[i]$.
    \item Every prime number $p \in \Z$ with $p \equiv 1 \; (\text{mod } 4)$ admits a factorisation 
    $p = (a + ib)(a - ib)$
    into non-associate prime elements $a + ib, a - ib \in \Z[i]$.
\end{enumerate}
Let $p_1,\dots,p_n$ denote the $n$ first prime numbers with $p_j \equiv 1 \; (\text{mod } 4)$.
By the prime number theorem and a quantitative version of Dirichlet's theorem on primes in arithmetic progressions (see \textit{e.g.} \cite[Chapter 5, Section 3]{BorevichShafarevich} or \cite[Kapitel VII, \S 13]{Neukirch})
there are $\sim \tfrac{N}{2\log N}$ numbers of this type below a given $N \in \N$.
It follows that the numbers $p_1,\dots, p_n$ can be computed in polynomial time from $n$.

Further, we can compute in polynomial time representations 
$
    p_j = a^2_j + b_j^2
$
with $a_j > 0$ for $j = 1,\dots,n$.
Let 
$
    q_j = \tfrac{a^2_j - b^2_j}{a^2_j + b^2_j} + i \tfrac{2a_jb_j}{a^2_j + b_j^2}.
$
We have 
$
    q_j = \tfrac{a_j + ib_j}{a_j - ib_j}
$
where $a_j + ib_j$ and $a_j - ib_j$ are prime elements in $\Z[i]$.

We claim that there are no integer multiplicative relations between the $q_j$'s.
Suppose for the sake of contradiction that we have 
\[
    q_1^{e_1}\cdot \dots \cdot q_{n}^{e_n} = 1
\]
with $e_1,\dots,e_n \in \Z$ not all zero.
Then we obtain the equation 
\[
    (a_1 + ib_1)^{e_1}\cdot \dots (a_n + ib_n)^{e_n}
    =
    (a_1 - ib_1)^{e_1}\cdot \dots (a_n - ib_n)^{e_n}.
\]
Assume without loss of generality that $e_1 \neq 0$.
Then 
$(a_1 - ib_1)$ 
needs to divide one of the prime factors 
$(a_j + ib_j)$.
Since 
$(a_j + ib_j)$ 
is itself prime this implies that
$(a_1 - ib_1)$ 
and 
$(a_j + ib_j)$
are associates.
The units of $\Z[i]$ are the numbers $1, -1, i, -i$.
It follows immediately that the numbers 
$(a_1 - ib_1)$ 
and 
$(a_j + ib_j)$
cannot be associates in $\Z[i]$.
We conclude that there cannot exist any integer multiplicative relations between the $q_j$'s.
\end{proof}

\begin{theorem}
    The Compact Escape Problem is $\EA$-hard.
\end{theorem}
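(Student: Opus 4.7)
The plan is to reduce the decision problem for $\bSigma_{2,\leq}$-sentences to the existence of a non-escaping trajectory for CEP; by Theorem~\ref{theorem: equivalence of complexity classes} this yields $\EA$-hardness. Given a $\bSigma_{2,\leq}$-sentence
\[
    \Phi \;\equiv\; \exists X \in I^n.\; \forall Y \in I^m.\; \Phi_{0,\leq}(X, Y),
\]
I will construct in polynomial time a matrix $A \in \Q^{(n+2m)\times(n+2m)}$ and a compact semialgebraic set $K \subseteq \R^{n+2m}$ such that $(A, K)$ is a negative CEP-instance if and only if $\Phi$ holds. First, invoke Lemma~\ref{Lemma: rational points without multiplicative relations} to compute in polynomial time $m$ points $q_1, \dots, q_m$ on the unit circle $\mathbb{T}^1 \subset \C$ with rational real and imaginary parts and no non-trivial integer multiplicative relation. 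Let $R_j \in \Q^{2\times 2}$ be the rotation matrix associated with $q_j$. Set $A = \diag(I_n, R_1, \dots, R_m)$ and
\[
    K = \Set{(X, u_1, v_1, \dots, u_m, v_m) \in \R^{n+2m}}{ |X| \leq 1 \,\land\, \bigwedge_{j=1}^m u_j^2 + v_j^2 = 1 \,\land\, \Phi_{0,\leq}(X, u_1, \dots, u_m)},
\]
where $u_j^2 + v_j^2 = 1$ abbreviates the pair of non-strict inequalities $u_j^2 + v_j^2 \leq 1$ and $-(u_j^2+v_j^2)\leq -1$. The set $K$ is closed (only non-strict inequalities occur) and bounded, hence compact; since $A$ acts as the identity on the $X$-block and by a norm-preserving rotation on each pair $(u_j, v_j)$, the defining constraints are preserved along orbits started in $K$.

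For the forward direction, if $X^\ast \in I^n$ witnesses $\Phi$, then $z_0 = (X^\ast, 1, 0, \dots, 1, 0)$ lies in $K$, and for every $k \in \N$ the iterate $A^k z_0 = (X^\ast, \Re q_1^k, \Im q_1^k, \dots, \Re q_m^k, \Im q_m^k)$ remains in $K$, because the projection $(\Re q_1^k, \dots, \Re q_m^k)$ lies in $I^m$, where $\Phi_{0,\leq}(X^\ast,\cdot)$ is valid by choice of $X^\ast$. Conversely, suppose $z_0 = (X_0, u_{0,1}, v_{0,1}, \dots, u_{0,m}, v_{0,m}) \in K$ has a non-escaping orbit. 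Then $X_0 \in I^n$ and each $(u_{0,j}, v_{0,j})$ lies on the unit circle. Since $q_1, \dots, q_m$ are multiplicatively independent, Theorem~\ref{Theorem: Kronecker} implies that $\{(q_1^k, \dots, q_m^k) : k \in \N\}$ is dense in the torus $\mathbb{T}^m$; coordinatewise multiplication by the unit vector $(u_{0,j}+iv_{0,j})_{j}$ is a homeomorphism of $\mathbb{T}^m$, so the orbit of $(u_{0,j}+iv_{0,j})_j$ under pointwise multiplication by $(q_j^k)_j$ is also dense, and hence the real-part projections $(u_{k,1}, \dots, u_{k,m})$ form a dense subset of $I^m$. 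Non-escape forces $\Phi_{0,\leq}(X_0, \cdot)$ to hold on this dense subset; as $\Phi_{0,\leq}$ is a positive Boolean combination of non-strict inequalities, it defines a closed subset of $I^m$, and so $\Phi_{0,\leq}(X_0, Y)$ holds for every $Y \in I^m$, exhibiting $X_0$ as a witness for $\Phi$.

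The conceptual crux is the interplay between Kronecker's density theorem (requiring multiplicative independence of the $q_j$, furnished by Lemma~\ref{Lemma: rational points without multiplicative relations}) and the closedness of $\Phi_{0,\leq}$ (leveraging the restriction to non-strict inequalities in $\bSigma_{2,\leq}$); were strict inequalities permitted, density in $I^m$ would not suffice to propagate validity. On the technical side, one must check that the construction respects the monomial-list encoding of CEP used in the problem definition: the additional constraints defining $K$ have degree at most $2$, and the polynomials already appearing in $\Phi_{0,\leq}$ carry over unchanged with the same encoding, so the entire instance $(A, K)$ is computable in polynomial time from $\Phi$.
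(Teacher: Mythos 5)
Your construction (pad with $2m$ coordinates carrying an irrational rotation built from Lemma~\ref{Lemma: rational points without multiplicative relations}, set $K$ to be the fibre product of the input formula with the unit circles, conclude via Kronecker density plus closedness of the $\QFF_{\leq}$-defined set) is exactly the paper's reduction, and the density/closedness argument for the backward direction is carried out correctly. However, there is a genuine gap in the final paragraph about encodings. You assert that ``the polynomials already appearing in $\Phi_{0,\leq}$ carry over unchanged with the same encoding,'' but the two sides of the reduction use \emph{different} encodings: polynomials in a $\bSigma_{2,\leq}$-sentence are given as terms over $\langle\Z,+,\times\rangle$, whereas CEP requires the monomial list encoding \eqref{eq: polynomial standard representation}. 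A term of size $N$ can expand to a polynomial with exponentially many monomials, so ``carrying over'' the atoms of $\Phi_{0,\leq}$ into the list representation needed for $K$ is not a polynomial-time operation in general. The paper resolves this precisely by first applying Theorem~\ref{Theorem: atoms w.l.o.g. quartic} to replace the input sentence with an equivalent one whose atoms have degree at most $4$; under that bound the term-to-list conversion is polynomial. That theorem is not a triviality --- its proof requires Theorem~\ref{Theorem: adding premise} and the Hausdorff-continuity machinery --- so the step you waved away is where a nontrivial part of the work lies. You should invoke Theorem~\ref{Theorem: atoms w.l.o.g. quartic} (or reprove a degree-bounding lemma of comparable strength) before building the CEP instance.
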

\begin{proof}
    By Theorem \ref{theorem: equivalence of complexity classes} the decision problem for 
    $\bSigma$-sentences is $\EA$-complete.
    It hence suffices to reduce this problem to CEP.

    Thus, given a $\bSigma_{2,\leq}$-sentence 
    $
        \Psi_{2, \leq} 
        =
        \exists x \in I^n.
        \forall y \in I^m.
            \left(
                \Phi_{0,\leq}(x,y)
            \right)
    $
    we compute in polynomial time a compact set $K$ and a rational matrix 
    $A \in \Q^{(n + 2m) \times (n + 2m)}$
    such that there exists a point $x \in K$ with $A^k x \in K$ for all $n \in \N$
    if and only if $\Psi_{2,\leq}$ holds true.

    By Theorem \ref{Theorem: atoms w.l.o.g. quartic} we may assume that all polynomials 
    that occur in $\Psi_{2,\leq}$ have degree at most $4$.

    Consider the compact set 
    \[
        K = \Set{(x,u_1,v_1,\dots,u_m,v_m) \in I^{n} \times I^{2m}}
                {u_j^2 + v_j^2 = 1, \Phi_{0,\leq}(x,u_1,\dots,u_m)}.
    \]
    Use Lemma \ref{Lemma: rational points without multiplicative relations}
    to compute rational numbers 
    $a_1,\dots,a_m, b_1,\dots,b_m \in \Q$
    such that the numbers 
    $a_j + ib_j$
    do not admit any non-trivial integer multiplicative relations.
    Denote by $I_{n}$ the $(n\times n)$-identity matrix.
    Let $R \in \Q^{2m \times 2m}$ be the matrix corresponding to the linear 
    transform which sends a vector $(x_1,y_1,\dots,x_m,y_m) \in \Q^{2m}$
    to the vector 
    \[ 
        (a_1 x_1 - b_1 y_1, b_1 x_1 + a_1 y_1,\dots, a_m x_m - b_m y_m, b_m x_m + a_m y_m).
    \]
    Let $A \in \Q^{(n + 2m) \times (n + 2m)}$ be defined as follows:
    \[
        A = 
        \begin{pmatrix}
            I_{n} &  \\
                           & R
        \end{pmatrix}.
    \]
    Then for all $x \in K$ we have by Theorem \ref{Theorem: Kronecker}
    \[
        \clos{\O_A(x)} = \{x\} \times \Set{(u_1,v_1,\dots,u_m,v_m) \in I^{2m}}{u_j^2 + v_j^2 = 1}.
    \]
    It follows that 
    $\clos{\O_A(x)} \subseteq K$ 
    if and only if 
    $\Phi_{0,\leq}(x,u_1,\dots,u_m)$
    holds true for all $u_1,\dots,u_m \in I^m$.

    Thus, the instance $(A,K)$ of CEP is a negative instance if and only if the sentence 
    $\Psi_{2,\leq}$ holds true.
    We can compute $(A,K)$ in polynomial time from $\Psi_{2,\leq}$.
    This is almost immediately obvious, except that the polynomial inequalities that represent $K$ must be encoded as 
    lists of coefficients, while the polynomial inequalities in $\Psi_{2,\leq}$ are given as terms over the signature
    $\langle \Z, +, \times\rangle$.
    But since the polynomials that occur in $\Psi_{2,\leq}$ have degree at most $4$ we can efficiently compute a list of coefficients from the term representations.
\end{proof}
Conversely, we have:

\begin{theorem}\label{Theorem: contained}
    The Compact Escape Problem is contained in $\EA$.
\end{theorem}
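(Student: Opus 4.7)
The plan is to reduce the complement of CEP---the existence of a trapped orbit---to the decision problem for $\bSigmap_{2,\leq}$-sentences in polynomial time. By Theorem~\ref{theorem: equivalence of complexity classes}, this decision problem is equivalent to that of $\Sigma_{2,\leq}$, which defines the class $\EA$; complementing the reduction then places CEP itself in $\EA$. The complement of CEP on input $(A,K)$ asks whether there exists $x \in K$ with $A^k x \in K$ for every $k \in \N$; since $K$ is closed, this is equivalent to the existence of $x \in K$ with $\clos{\O_A(x)} \subseteq K$, where $\O_A(x) = \{A^k x : k \in \N\}$. The central challenge is to express this infinite condition as a polynomial-size $\bSigmap_{2,\leq}$-sentence, since $\N$-quantification is not available in the theory of the reals.

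To describe the orbit closure we analyse the spectral structure of $A$. Any trapped orbit is bounded, which forces $x$ to lie in the maximal $A$-invariant subspace $V$ on which $A$ is power-bounded: concretely, the sum of generalised eigenspaces for eigenvalues of modulus strictly less than one together with ordinary eigenspaces (no Jordan blocks) for modulus-one eigenvalues. This subspace is rational and polynomial-time computable from the characteristic polynomial of $A$, and membership $x \in V$ is a system of linear equations. Writing $V = V_{<1} \oplus V_{=1}$, on $V_{=1}$ the operator $A$ is semisimple with eigenvalues $\lambda_1,\dots,\lambda_r$ on the unit circle. By Kronecker's theorem (Theorem~\ref{Theorem: Kronecker}) the closure of the sequence $(\lambda_1^k,\dots,\lambda_r^k)_{k \in \N}$ inside $\mathbb{T}^r$ is the compact subgroup $S$ cut out by the integer multiplicative relations among the $\lambda_i$, and by Masser's theorem (Theorem~\ref{Theorem: Masser}) a basis of these relations with polynomially bounded entries is computable in polynomial time. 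Thus $S$ admits a polynomial-size semialgebraic description via the torus constraints $z_i^2 + w_i^2 = 1$ together with polynomially many polynomial equations (of polynomial degree in the $z_i,w_i$) encoding the relations.

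The target $\bSigmap_{2,\leq}$-sentence then has the shape
\[
\exists X \in I^m.\,\forall Y \in I^{\ell}.\,\left(\Psi_{0,\leq}(Y) \to \Phi_{0,\leq}(X,Y)\right),
\]
where $X$ encodes the starting point $x$ (rescaled into $I^m$ using a doubly-exponential diameter bound on $K$ extracted from the input via Corollary~\ref{Corollary: radius bound} and Lemma~\ref{Lemma: removing doubly exponential bounds}), together with auxiliary variables for the spectral data; $Y$ encodes a candidate point of the parameter space; and the premise $\Psi_{0,\leq}(Y)$ encodes the torus constraints together with the multiplicative relations defining $S$, and crucially depends only on $Y$. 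The conclusion $\Phi_{0,\leq}(X,Y)$ asserts the linear constraint $x \in V$, the predicate $x \in K$, and that the explicit polynomial function of $X$ and $Y$ realising the orbit-closure point indexed by $Y$ lies in $K$. All constructions are polynomial time, and Theorem~\ref{Theorem: atoms w.l.o.g. quartic} reconciles the coefficient-list encoding of polynomials used in CEP with the term encoding of the theory of the reals.

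The main technical hurdle is the contribution of the transient subspace $V_{<1}$: when $V_{<1}$ is nontrivial, the closure $\clos{\O_A(x)}$ consists not only of the steady-state torus orbit in $V_{=1}$ (which the construction above captures directly via Kronecker) but also of the full discrete forward orbit $\{A^k x\}_{k \in \N}$, whose transient part decays exponentially but is not a priori semialgebraic as a subset of $V$. Handling this cleanly is the crux of the proof and requires augmenting the universal variable $Y$ with additional compact parameters that uniformly parametrise the exponentially decaying transient---so that $\clos{\O_A(x)}$ is realised as the image of a fixed compact semialgebraic set, still described by an $X$-free premise $\Psi_{0,\leq}(Y)$, under a polynomial map depending on $X$. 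Keeping this parametrisation polynomial in size and ensuring that the premise remains independent of $X$, so that the resulting sentence genuinely belongs to $\bSigmap_{2,\leq}$, is the most delicate step of the argument.
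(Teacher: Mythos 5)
Your high-level architecture matches the paper's: reduce the complement of CEP to a $\bSigmap_{2,\leq}$-sentence via the spectral decomposition of $A$, using Kronecker's theorem to describe the closure of $(\lambda_1^k,\dots,\lambda_m^k)$ as a semialgebraic subtorus $S$ cut out by Masser's polynomially bounded multiplicative relations, and then apply Theorem~\ref{theorem: equivalence of complexity classes} together with Theorem~\ref{Theorem: atoms w.l.o.g. quartic} to reconcile encodings. However, there is a genuine gap precisely at the point you flag as the ``crux'' of the argument, and the proposed fix would not work.

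You set out to express the condition ``$\clos{\O_A(x)} \subseteq K$'' directly, and you correctly observe that when there is a nontrivial contracting part, the orbit closure contains the full discrete sequence $\{A^k x\}_k$ in addition to the $\omega$-limit set. You then suggest parametrising this transient by a compact semialgebraic set so that $\clos{\O_A(x)}$ becomes the image of an $X$-independent compact set under a polynomial map. This cannot be made to work: the closure of a nontrivially decaying orbit is a countable set together with its accumulation points, which is not a semialgebraic set, so it cannot be the image of a semialgebraic set under a semialgebraic map, regardless of how many auxiliary parameters you introduce.

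The missing idea is that you do not need to describe the full orbit closure at all. Write $f(x,z) = Q\diag(z_1,\dots,z_m,0,\dots,0)Q^{-1}x$ for the map sending $x$ (with trivial components along the expanding and non-diagonalisable modulus-one part) to its $\omega$-limit point indexed by $z\in S$. The set $f(x,S)$ is the set of accumulation points of the orbit and is semialgebraic; moreover it is $A$-invariant, since $A f(x,z) = f(x,(\lambda_1 z_1,\dots,\lambda_m z_m))$ and $S$ is closed under coordinatewise multiplication by $(\lambda_1,\dots,\lambda_m)$. Consequently the existence of a trapped point is equivalent to the existence of some $X\in K$ with trivial unstable components such that $f(X,S)\subseteq K$: in the forward direction the $\omega$-limit set of a trapped orbit lies in the closed set $K$; in the backward direction any $y\in f(X,S)$ has its entire orbit inside the $A$-invariant set $f(X,S)\subseteq K$, so $y$ itself is trapped --- even though $X$ need not be. The transient of $X$ plays no role. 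This observation is what makes the reduction semialgebraic and is precisely what your proposal lacks; without it the plan does not go through.
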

\begin{proof}[Proof Sketch]
    The full proof is given in Appendix \ref{Appendix: proof of contained}.
    We will only briefly sketch the proof idea here.

    Suppose we are given a matrix $A \in \Q^{n \times n}$ with rational entries
    and a family of polynomials $\mathcal{P}$ together with a negation-free propositional formula which encodes a compact set $K \subseteq \R^n$.
    We can compute in polynomial time from this data a $\QFF_{\leq}$-formula $\Phi$ which encodes $K$. 
    We will show that the existence of a point in $K$ that is trapped under $A$ is expressible as a $\bSigmap_{2,\leq}$-sentence.
    Together with Theorem \ref{theorem: equivalence of complexity classes} this yields the result.
    Let us assume for the sake of simplicity that $A$ is diagonalisable over the complex numbers.
    The general case employs the Jordan normal form. 
    It is not more difficult but requires more cumbersome notation.

    We compute the complex eigenvalues 
    $\lambda_1,\dots,\lambda_m,\lambda_{m + 1},\dots,\lambda_{m + b}, \lambda_{m + b + 1},\dots,\lambda_{m + b + s}$ 
    of $A$, counted with multiplicity.
    The eigenvalues are labelled such that 
    $\lambda_1,\dots,\lambda_m$ 
    have modulus $1$,
    such that 
    $\lambda_{m + 1},\dots,\lambda_{m + b}$
    have modulus strictly greater than $1$,
    and such that
    $\lambda_{m + b + 1}, \dots, \lambda_{m + b + s}$
    have modulus strictly smaller than $1$.
    Using \cite{Cai94} we can compute in polynomial time base change matrices $Q$ and $Q^{-1}$ such that 
    $D = Q^{-1} A Q$
    is a diagonal matrix.

    Let $x \in K$ be a starting point.
    If the complex vector $Q^{-1} x$ has a non-zero component $(Q^{-1} x)_j$ with $m + 1 \leq j \leq m + b$ then the orbit of $x$ under $A$ is unbounded, and hence forced to leave the bounded set $K$.

    Now assume that $(Q^{-1} x)_j = 0$ for all $m + 1 \leq j \leq m + b$.
    All components $(Q^{-1}x)_j$ with $j \geq m + b + 1$ converge to zero under the iteration of $A$
    in the sense that the sequence $(Q^{-1} (A^k x))_j$ converges to zero as $k \to \infty$.
    It follows that the closure of the orbit of $x$ under $A$ is equal to the range of the semialgebraic function 
    \[
        f(x,z) = Q \diag\left(z_1,\dots, z_m, 0,\dots,0\right) Q^{-1} x,
    \] 
    where $z_1,\dots,z_m$ range over the closure of the sequence $(\lambda_1^k,\dots,\lambda_m^k)_k$ in the torus $\mathbb{T}^m$.
    By Theorem \ref{Theorem: Kronecker} the closure of this sequence is an algebraic subset of $\mathbb{T}^m$, 
    cut out by the integer multiplicative relations between the eigenvalues $\lambda_1,\dots,\lambda_m$.
    By Theorem \ref{Theorem: Masser} a $\QFF_{\leq}$-formula $\Psi(Z)$ encoding this algebraic set, 
    up to identifying $\mathbb{T}^m$ with a subset of the real hypercube $I^{2m} \subseteq \R^{2m}$.

    It follows that we can express the existence of a trapped point by the following ``informal'' sentence: 
    \begin{align*}
        &\exists X \in I^n.
        \forall Z \in I^{2n}.\\
        &\left(
            \Psi(Z)
           \rightarrow
           \left(
           X \in K
           \land
           \left(
            (Q^{-1} X)_{m + 1} = 0 \land \dots \land (Q^{-1} X)_{m + b} = 0
            \right)
            \land 
            f(X, Z) \in K
            \right)
        \right).
    \end{align*}
    Thanks to the polytime computability of $Q$ and $Q^{-1}$ we can compute in polynomial time formulas that express the relations
    $(Q^{-1} X)_j = 0$ for $j = m + 1,\dots, m + b$,
    and  
    $f(X, Z) \in K$.
    This allows us to compute in polynomial time a $\bSigmap_{\leq}$-sentence which is equivalent to the above ``informal'' sentence.
\end{proof}

	
	\bibliography{refs}
	\appendix
\section{Proof of Theorem \ref{Theorem: adding premise}}
\label{Appendix: Proof of adding premise}

We begin with three simple preparatory observations.

\begin{lemma}\label{Lemma: adding premise simple case}
    Given a sentence of the form  
    \[
        \exists X \in I^n.
        \forall Y \in I^m.
        \left(
            H(X,Y) > 0
        \right),
    \]
    where $H$ is a multivariate polynomial with integer coefficients
    we can compute in polynomial time an equivalent 
    $\bSigma_{2,\leq}$-sentence.
\end{lemma}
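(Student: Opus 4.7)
My plan is to use compactness of $I^m$ to convert the strict inequality $H(X,Y) > 0$ into a non-strict inequality $H(X,Y) \geq \varepsilon$ for an effectively computable positive constant $\varepsilon$, and then to encode this constant by auxiliary existential variables via the repeated-squaring trick.

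First, I would establish the equivalence of the input sentence with
\[
\exists \varepsilon > 0. \exists X \in I^n. \forall Y \in I^m. \left(H(X,Y) \geq \varepsilon\right).
\]
The direction ``$\Leftarrow$'' is immediate. For ``$\Rightarrow$,'' fix a witness $X \in I^n$; since $Y \mapsto H(X,Y)$ is continuous and $I^m$ is compact, this function attains a minimum on $I^m$, which is strictly positive by hypothesis and hence serves as $\varepsilon$.

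Second, unfolding the $\in I^n, \in I^m$ abbreviations brings the above into a form to which Lemma~\ref{Lemma: lower bound on epsilon} applies (after bundling $\exists \varepsilon$ with $\exists X$ into a single existential block), yielding a lower bound of the shape $\varepsilon > 2^{-\mu^{c(n,m)}}$, where $c$ is a fixed polynomial and $\mu$ is the matrix size, both computable in polynomial time. Hence one can compute in polynomial time an integer $N$ in unary with $2^N \geq \mu^{c(n,m)}$ such that the original sentence is equivalent to
\[
\exists X \in I^n. \forall Y \in I^m. \left(H(X,Y) \geq 2^{-2^N}\right).
\]

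Third, following the proof of Lemma~\ref{Lemma: removing doubly exponential bounds}, I would encode $2^{-2^N}$ by $N+1$ fresh variables $b_0, \ldots, b_N$ ranging over $I$, constrained by the defining equations $2b_0 - 1 = 0$ and $b_i - b_{i-1}^2 = 0$ for $i = 1, \ldots, N$ (each rewritten as a pair $P \leq 0 \land -P \leq 0$), which together with the bound $|b_0| \leq 1$ force $b_N = 2^{-2^N}$. The resulting sentence
\[
\exists (X, b_0, \ldots, b_N) \in I^{n + N + 1}. \forall Y \in I^m. \left(\bigwedge_{i = 0}^{N} E_i \;\land\; b_N - H(X,Y) \leq 0\right),
\]
with $E_i$ the defining equation at index $i$, is manifestly a $\bSigma_{2,\leq}$-sentence equivalent to the original.

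The main conceptual obstacle is producing the explicit lower bound on $\varepsilon$, which is what legitimises the passage from a strict to a non-strict inequality; this is entirely delegated to Lemma~\ref{Lemma: lower bound on epsilon}, which in turn rests on singly exponential quantifier elimination (Theorem~\ref{Theorem: singly exponential quantifier elimination}) and the Basu--Roy radius bound (Corollary~\ref{Corollary: radius bound}). Once this bound is in hand, the encoding into $\bSigma_{2,\leq}$ is purely syntactic and manifestly polynomial-time.
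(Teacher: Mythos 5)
Your proof is correct and follows essentially the same route as the paper: reduce the strict inequality to a non-strict one with a fresh $\varepsilon$, invoke Lemma~\ref{Lemma: lower bound on epsilon} for the effective lower bound on $\varepsilon$, and eliminate the doubly exponential constant via the repeated-squaring encoding of Lemma~\ref{Lemma: removing doubly exponential bounds}. The only (harmless) deviations are that you instantiate $\varepsilon$ directly by $b_N$ rather than retaining it as a bounded variable subject to the constraint $\varepsilon \geq b_N$, and the parenthetical remark about bundling $\exists\varepsilon$ with $\exists X$ is unnecessary, since Lemma~\ref{Lemma: lower bound on epsilon} already has $\exists\varepsilon$ as a separate outermost quantifier.
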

\begin{proof}
    The sentence is equivalent to the sentence 
    \[
        \exists \varepsilon > 0.
        \exists X \in I^n.
        \forall Y \in I^m.
        \left(
            H(X,Y) \geq \varepsilon
        \right).
    \]
    By Lemma \ref{Lemma: lower bound on epsilon} this sentence is equivalent to the sentence 
    \[
        \exists \varepsilon \in I.
        \exists X \in I^n.
        \forall Y \in I^m.
        \left(
            \varepsilon \geq 2^{\mu^{4\alpha^3\beta(n + 1)(m + 1)}}
            \land
            H(X,Y) \geq \varepsilon
        \right),
    \]
    where $\mu$ is the size of $h$.
    Compute in polynomial time an integer $N$ such that 
    $\mu^{4\alpha^3\beta(n + 1)(m + 1)} \leq 2^N$
    and apply 
    Lemma \ref{Lemma: removing doubly exponential bounds}
    to obtain the result.
\end{proof}

\begin{lemma}\label{Lemma: bounding polynomial over hypercube}
    Let $P \in \Z[X]$ be a polynomial in $n$ variables,
    encoded by a term $T$ over the signature 
    $\langle \Z, +, \times \rangle$.
    Then we can compute in polynomial time an integer 
    $N$ (in binary) such that 
    $|P(I^n)| \leq N$.
\end{lemma}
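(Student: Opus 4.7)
The plan is to compute $N$ by structural recursion on the term $T$, using that $I=[-1,1]$ so every variable $x_j$ trivially satisfies $|x_j|\le 1$ on $I^n$. Define for each subterm $U$ of $T$ an integer $N_U\in\N$ by the following rules:
\begin{itemize}
\item if $U$ is an integer constant $c$, set $N_U=|c|$;
\item if $U$ is a variable $x_j$, set $N_U=1$;
\item if $U=V+W$, set $N_U=N_V+N_W$;
\item if $U=V\times W$, set $N_U=N_V\cdot N_W$.
\end{itemize}
Then take $N=N_T$.

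The correctness statement $|U(x)|\le N_U$ for all $x\in I^n$ is immediate by structural induction: constants are bounded by their absolute value, variables satisfy $|x_j|\le 1$, and the triangle inequality and submultiplicativity of the absolute value handle the two inductive cases.

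It remains to argue that $N_T$ can be produced in polynomial time and has binary size polynomial in $|T|$. For this, let $b(U)$ denote the bitsize of $N_U$. Then $b(c)$ is at most the size of the binary encoding of $c$ in the input, $b(x_j)=1$, $b(V+W)\le\max(b(V),b(W))+1$, and $b(V\times W)\le b(V)+b(W)$. By a straightforward induction on $U$, the quantity $b(U)$ is bounded by the total size of $U$ (including the bitsizes of all constants occurring in $U$). In particular $b(T)$ is polynomial in $|T|$, so each of the at most $|T|$ arithmetic operations involved in computing $N_T$ is performed on integers of polynomial bitsize and therefore takes polynomial time; the whole recursion runs in polynomial time.

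The only mild subtlety is the final point, namely that although multiplication of terms in $T$ could, a priori, cause an exponential blow-up in the value of $N_T$, this does not cause an exponential blow-up in its \emph{bitsize}, since bitsizes add under multiplication; so the polynomial-time bound in binary is preserved. I do not anticipate any real obstacle here; the result is essentially a bookkeeping exercise once the right recursive definition is written down.
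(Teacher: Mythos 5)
Your proof is correct and its core idea coincides with the paper's: obtain a bound by evaluating the term over the reals with every leaf replaced by an absolute-value bound, and note that although the resulting \emph{value} can be doubly exponential in $|T|$, its \emph{bitsize} only grows linearly because bitsize is additive under multiplication and essentially non-increasing under addition (up to a carry bit). The two arguments differ only in bookkeeping: the paper replaces every leaf (variable or constant) by $M=\max\{2,|c_1|,\dots,|c_s|\}$ and bounds the result by the closed form $M^{N_T}$, from which the bitsize bound $N_T\tau$ falls out immediately; you keep each leaf's own bound (constants to $|c|$, variables to $1$) and propagate the bitsize estimate inductively. Your version yields a potentially tighter numerical bound and makes the recursion explicit, at the cost of having to carry the bitsize invariant through the induction rather than reading it off a closed form; the paper's version is slightly coarser but more compact. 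Both are complete and correct as polynomial-time procedures, and both handle the potential exponential blow-up in value in the same way, so I have no objections.
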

\begin{proof}
    We can view $T$ as a tree whose nodes are elements of the set $\{+, \times\}$ and whose leaves are either variables or constants.
    Let $c_1,\dots,c_s \in \Z$ denote the integer constants that occur in $T$.
    Let $M = \max\{2,|c_1|,\dots,|c_s|\}$.

    Let $S$ be the tree which is obtained by substituting $M$ for all leaves in $T$.
    Then $S$ encodes a positive integer $B$.
    This integer $B$ is clearly an upper bound for the absolute value of $P$ over $I^n$.
    By an easy induction argument $B$ is bounded by $M^{N_T}$,
    where $N_T$ is the number of nodes of $T$.
    The number $M^{N_T}$ can be computed using at most $N_T$ arithmetic operations.
    Its bitsize is bounded by $N_T \tau$, where $\tau$ is a bound on the bitsizes of the numbers 
    $c_1,\dots,c_s$.
\end{proof}

\begin{proposition}\label{Proposition: turning propositional formula over equalities into single equality}
    Let $\Phi(X)$ be a quantifier-free formula over the language 
    $\mathcal{L}$ 
    whose atoms consist of equalities only.
    Then we can compute in polynomial time a polynomial $Q \in \Z[X]$
    such that $\Phi(X)$ is equivalent to the formula 
    $Q(X) = 0$.
\end{proposition}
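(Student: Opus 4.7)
My approach will be a straightforward structural induction on $\Phi$. For the base case, an atom $P(X) = 0$ translates to $Q := P$. For the inductive cases I will use two identities that hold over the reals: a conjunction $R_1 = 0 \land R_2 = 0$ is equivalent to the single equation $R_1^2 + R_2^2 = 0$, since a sum of real squares vanishes iff each summand does; and a disjunction $R_1 = 0 \lor R_2 = 0$ is equivalent to $R_1 R_2 = 0$, since a product of reals vanishes iff one of the factors does. Hence if the subformulas $\Phi_i$ have already been translated to $Q_i = 0$, I set $Q := Q_1^2 + Q_2^2$ in the conjunctive case and $Q := Q_1 Q_2$ in the disjunctive case. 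Correctness follows at once by structural induction.

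The delicate part of the argument is the polynomial-time bound, and this is where I expect the main obstacle to lie. The disjunctive step is benign because $|Q_1 Q_2|$ is $|Q_1| + |Q_2| + O(1)$ as a term. The conjunctive step, however, replaces each $Q_i$ by $Q_i \cdot Q_i$ and so duplicates its representation; for deeply nested alternations of $\land$ and $\lor$ a literal tree encoding could blow up superpolynomially. I would deal with this in two stages. First, I would flatten maximal blocks of consecutive $\land$- and $\lor$-nodes using associativity, so that a $k$-ary conjunction $\Phi_1 \land \dots \land \Phi_k$ is translated in a single step into $\sum_{i=1}^k Q_i^2$ of term-size $2\sum_i |Q_i| + O(k)$, and a $k$-ary disjunction into $\prod_{i=1}^k Q_i$ of size $\sum_i |Q_i| + O(k)$. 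Second, to absorb the residual duplication caused by squaring, I would represent $Q$ as a straight-line program over $\langle \Z, +, \times \rangle$ (equivalently, a DAG with shared subexpressions), so that $Q_i^2$ is a single multiplication node with two incoming edges from the same operand. Under this encoding the size of $Q$ grows only additively at each inductive step, so the whole procedure runs in time polynomial in $|\Phi|$; this encoding convention is harmless for the subsequent applications of the proposition, which all use $Q$ only through its zero locus.
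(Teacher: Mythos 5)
Your translations for $\lor$ and for the atomic case are exactly those in the paper, and you are right that the naive $\land \mapsto Q_1^2 + Q_2^2$ step threatens an exponential blow-up. The repair you propose does not close the gap, however. Passing to a straight-line program keeps the \emph{number of gates} linear, but it does not keep the \emph{degree} of $Q$ polynomial: your rule squares the current polynomial at every $\land$-level, so a formula with $\Theta(n)$ alternations of $\land$ and $\lor$ produces a $Q$ of degree $2^{\Theta(n)}$ even after flattening, regardless of how compactly it is written. That is fatal for the downstream use of this proposition. In the proof of Theorem~\ref{Theorem: adding premise} the polynomial constructed here is inserted into a sentence that is then handed to Lemma~\ref{Lemma: adding premise simple case}, whose size analysis rests on Lemma~\ref{Lemma: lower bound on epsilon} and ultimately on Theorem~\ref{Theorem: singly exponential quantifier elimination}; the quantitative bounds there are stated in terms of the degree of the polynomials and the size of a standard term representation, not the gate-count of a circuit. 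An exponential-degree $Q$ would push those bounds from singly to doubly exponential, so the ``harmless zero-locus'' remark is not justified in this context.

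The paper sidesteps the problem with a small but essential preprocessing step that you are missing: before the induction, replace every atom $P(X)=0$ in $\Phi$ by the equivalent atom $P(X)^2=0$. After this, every polynomial produced by the recursion is nonnegative on $\R^n$, so the $\land$-step can use the plain \emph{sum} $Q_1 + Q_2$ instead of the sum of squares $Q_1^2 + Q_2^2$ (a sum of nonnegative reals vanishes iff both summands do), while $\lor$ still uses the product $Q_1 Q_2$. Now both inductive steps add only $O(1)$ to the term size and merely add (or take the maximum of) degrees rather than doubling them, so $Q$ has term size $O(|\Phi|)$ and degree at most twice the sum of the degrees of the original atoms --- both polynomial, with no change of encoding. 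With this observation in place, the rest of your argument reduces to the paper's one-line structural induction.
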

\begin{proof}
    Construct a new formula $\Phi'(X)$ that results from $\Phi(X)$
    by replacing each atom $P(X) = 0$ in $\Phi(X)$ by the atom $P(X)^2 = 0$.

    Now construct a polynomial $Q_{\Phi'}$ by structural induction on $\Phi'$ as follows:
    \begin{enumerate}
        \item 
        If $\Phi'(X) \equiv (P(X) = 0)$ then let $Q_{\Phi'} = P$.
        \item 
        If $\Phi'(X) \equiv \Psi(X) \lor \omega(X)$
        then let 
        $Q_{\Phi'} = Q_{\Psi} \cdot Q_{\omega}$.
        \item 
        If $\Phi'(X) \equiv \Psi(X) \land \omega(X)$ then let 
        $Q_{\Phi'} = Q_{\Psi} + Q_{\omega}$.
    \end{enumerate}
    It is easy to see that $Q_{\Phi'}$ can be computed in polynomial time from $\Phi$.
    It has the desired property by construction.
\end{proof}

We are now in a position to prove Theorem \ref{Theorem: adding premise}.

\begin{proof}[Proof of Theorem \ref{Theorem: adding premise}]
The proof is a reduction to Lemma \ref{Lemma: adding premise simple case}.

As a preparation we assign to every $\QFF_{\leq}$-formula $\Phi$ a continuous function 
$f_{\Phi}$ such that $\Phi(X)$ holds true if and only if $f_{\Phi}(X) \leq 0$:
\begin{enumerate}
    \item If $\Phi(X) = (P(X) \leq 0)$ then let $f_{\Phi}(X) = P(X)$.
    \item If $\Phi(X) = \Psi(X) \lor \chi(X)$ then let 
    $f_{\Phi}(X) = \min\{f_{\Psi}(X), f_{\chi}(X)\}$.
    \item If $\Phi(X) = \Psi(X) \land \chi(X)$ then let 
    $f_{\Phi}(X) = \max\{f_{\Psi}(X), f_{\chi}(X)\}$.
\end{enumerate}

Now assume we are given a sentence 
\begin{equation}\label{eq: adding premise}
        \exists X \in I^n.
        \forall Y \in I^m.
        \left(
            \Psi(X, Y)
            \rightarrow
            \Phi(X, Y)
        \right)
\end{equation}
as above.
The negation of this sentence is equivalent to the sentence
\begin{equation}\label{eq: adding premise negated}
    \forall X \in I^n.
    \exists Y \in I^m.
    \left(
        \Psi(X, Y)
        \land
        f_\Phi(X,Y) > 0
    \right).
\end{equation}
Let us for now assume that the set 
$K(X) = \Set{Y \in I^m}{\Psi(X,Y)}$
is non-empty for all $X \in I^n$.
Then by assumption this set depends continuously on $X$ in the Hausdorff metric.
It follows by elementary calculus that the function 
$
    h(X)
    =
    \max_{Y \in K(X)} f_{\Phi}(X,Y)
$
is well-defined and continuous.

We further have, by compactness of $I^n$, 
that the function $h(X)$ attains its minimum in $I^n$.
By definition of $f_{\Phi}$, 
the sentence \eqref{eq: adding premise negated} holds true if and only if 
$
    \min_{x \in I^n} h(x) > 0
$
if and only if there exists $\varepsilon > 0$ such that 
$
    \min_{x \in I^n} h(x) > \varepsilon
$.
Thus, the sentence \eqref{eq: adding premise negated} is equivalent to the sentence 
\[
    \exists \varepsilon > 0.
    \forall X \in I^n.
    \exists Y \in I^m.
    \left(
        \Psi(X, Y)
        \land
        f_\Phi(X,Y) > \varepsilon
    \right).
\]
So far we have proved this equivalence under the assumption that the compact set $K(X) = \Set{Y \in I^m}{\Psi(X, Y)}$ is non-empty for all $X$.
But if the set $K(X)$ is empty for some $X$ then both \eqref{eq: adding premise negated} and the above sentence are false, 
so that the two sentences are certainly equivalent.

Let $\chi(X,Y)$ be the formula that results from $\Phi$ by swapping all occurrences of $\lor$ and $\land$ and 
by replacing all atoms $P(X,Y) \leq 0$ in $\Phi$ by the atom $P(X,Y) > \varepsilon$.
One easily checks that the above sentence is further equivalent to the sentence 
\[
    \exists \varepsilon > 0.
    \forall X \in I^n.
    \exists Y \in I^m.
    \left(
        \Psi(X, Y)
        \land
        \chi(X,Y)
    \right).
\]
It follows from \ref{Lemma: bounding existential quantifier} that there exists a witness $\varepsilon$ for the existential quantifier
with 
$\varepsilon > 2^{-\mu^{4\alpha^3 \beta(n + 1)(m + 1)}}$.
We can compute in polynomial time an integer $N$ such that we have 
$\mu^{4\alpha^3 \beta(n + 1)(m + 1)} \leq 2^N$.
Consider the formula 
$\chi(X,Y)$.
By Lemma \ref{Lemma: bounding polynomial over hypercube}
we can compute in polynomial time an integer $L$ such that 
$|P(X,Y)| \leq L$ 
for all 
$(X,Y) \in I^n \times I^m$.
We can hence replace each atom 
$P(X, Y) > 0$ 
in 
$\chi(X,Y)$ 
with the equivalent formula 
\[ 
    \exists u \in [-L,L]. \exists v \in [-2^{2^N}, 2^{2^N}]. \left(P(X,Y) = u^2 \land uv = 1\right),
\]
where $u$ and $v$ are fresh variables.
By Proposition \ref{Proposition: turning propositional formula over equalities into single equality}
the formula $\chi(X,Y)$ is equivalent to a formula of the form 
\[
    \exists U \in [-L,L]^s. \exists V \in [-2^{2^N}, 2^{2^N}]^s.
        \left(
            Q(X,Y,U,V) = 0
        \right)
\]
where $Q$ is computable in polynomial time from $\chi(X,Y)$ and $s$ is the number of atoms in $\chi(X,Y)$.

Now, consider the formula $\Psi(X, Y)$.
By Lemma \ref{Lemma: bounding polynomial over hypercube}
we can compute in polynomial time an integer $M$ such that 
for all atoms $P(X, Y) \leq 0$ in $\Psi(Y)$
the polynomial $P$ satisfies $|P(X, Y)| \leq M$ for all 
$(X,Y) \in I^n \times I^m$.
The atom is hence equivalent to 
$\exists w \in [-M,M]. P(X, Y) = -w^2$,
where $w$ is a fresh variable.
Again by Proposition \ref{Proposition: turning propositional formula over equalities into single equality}, 
letting $t$ denote the number of atoms in $\Psi(X, Y)$ we can hence compute in polynomial time 
a formula
$\exists W \in [-M,M]^t. R(X, Y, W) = 0$,
which is equivalent to $\Psi(X, Y)$.

In total the sentence \eqref{eq: adding premise negated} is equivalent to the sentence  
\begin{align*}
   \forall X \in I^n.
    \exists Y \in I^m.
    \exists U \in [-L,L]^s.
    \exists V \in [-2^{2^N}, 2^{2^N}]^s.
    &\exists W \in [-M, M]^t.\\
   &\left(
        R(X, Y, W) + Q(X, Y, U, V) = 0
    \right).
\end{align*}
In the above we have used that the functions $R$ and $Q$ admit only non-negative values by construction.
We may assume that 
$2^{2^N} \geq \max\{L,M\}$.
Arguing as in
Lemma \ref{Lemma: removing doubly exponential bounds}
we can introduce auxiliary variables 
$B \in I^{N + 1}$
to obtain an equivalent sentence 
\[
    \forall X \in I^n.
    \exists Y \in I^m.
    \exists U \in I^s.
    \exists V \in I^s.
    \exists W \in I^t.
    \exists B \in I^{N + 1}.
    \left(
        H(X, Y, U, V, W, B) = 0
    \right)
\]
which is computable in polynomial time from our original sentence \eqref{eq: adding premise}.

The sentence \eqref{eq: adding premise} is hence equivalent to the sentence 
\[
    \exists X \in I^n.
    \forall Y \in I^m.
    \forall U \in I^s.
    \forall V \in I^s.
    \forall W \in I^t.
    \forall B \in I^{N + 1}.
    \left(
        H(X, Y, U, V, W, B) > 0
    \right).
\]
Again, we have used that $H$ only admits non-negative values by construction.
The result now follows from Lemma \ref{Lemma: adding premise simple case}.
\end{proof}
\section{Proof of Theorem \ref{Theorem: contained}}
\label{Appendix: proof of contained}

We start with a technical lemma:

\begin{lemma}\label{Lemma: characterisation of termination}
    Let $A \in \R^{n \times n}$ be a real matrix.
    Denote by 
    \[
        \lambda_1,\dots,\lambda_m, 
        \lambda_{m + 1}, \dots, \lambda_{m + b},
        \lambda_{m + b + 1}, \dots, \lambda_{m + b + s}
    \]
    the complex eigenvalues of $A$, 
    counted with geometric multiplicity.
    Let $\lambda_1,\dots,\lambda_m$ have modulus $1$.
    Let $\lambda_{m + 1}, \dots, \lambda_{m + b}$ have modulus strictly greater than $1$.
    Let $\lambda_{m + b + 1}, \dots, \lambda_{m + b + s}$ have modulus strictly smaller than $1$.
    Fix a Jordan basis 
    $v_{j,k}$
    of 
    $\C^n$
    where 
    $v_{j,1}$
    is an eigenvector of $\lambda_j$
    and 
    $\left(A - \lambda_j I\right)  v_{j,k} = v_{j, k - 1}$
    for all $k > 1$.

    Let $B$ denote the span of the vectors 
    $v_{j,k}$
    with $m + 1 \leq j \leq m + b$
    and the vectors 
    $v_{j, k}$
    with $1 \leq j \leq m$ and $k > 1$.

    Let $C$ denote the span of the vectors 
    $v_{j,k}$
    with $m + b + 1 \leq j \leq m + b$.

    Let $Q$ be the matrix that sends the standard basis of $\C^n$ to the basis 
    \begin{align*}
        &v_{1,1}, \dots, v_{m, 1},\\
        &v_{1, 2}, \dots, v_{1, t_1},
        \dots,
        v_{m, 2}, \dots, v_{m, t_m},\\
        &v_{m + 1, 1},\dots,v_{v_{m + 1}, t_{m + 1}},
        \dots, 
        v_{m + b + s, 1},\dots, v_{m + b + s, t_{m + b + s}}.
    \end{align*}

    Let 
    \[
        f \colon \R^n \times \mathbb{T}^{m} \to \C^n,
        \;
        f(x, z)
        =
        Q
        \begin{pmatrix}
            z_1 &       &     & \\
                &\ddots &     & \\ 
                &       & z_m & \\ 
                &       &     & 0 &  &  \\
                &       &     &   & \ddots & \\
                &       &     &   & & 0 \\
        \end{pmatrix}
        Q^{-1}
        \begin{pmatrix}
            x_1 \\
            \vdots \\
            x_n
        \end{pmatrix}
    \]
    Let 
    $
        S \subseteq \mathbb{T}^m
    $
    be the closure of the set 
    $\Set{(\lambda_1^k, \dots, \lambda_m^k)}{k \in \N}$
    in $\mathbb{T}^m$.

    Let $K \subseteq \R^{n}$ be a compact set.
    Let $x \in K$.
    Then for all $k \in \N$ we have $A^k x \in K$
    if and only if both of the following two conditions are satisfied:
    \begin{enumerate}
        \item 
        Let 
        $N = m + (t_1 - 1) + \dots + (t_m - 1) + t_{m + 1} + \dots + t_{m + b}$.
        For all 
        $m < j \leq N$
        we have 
        $(Q^{-1} x)_{j} = 0$.
        \item 
        $f(x, S) \subseteq K$.
    \end{enumerate}
\end{lemma}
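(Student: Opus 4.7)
The plan is to decompose $\R^n$ according to the Jordan normal form of $A$, match its eigenstructure to the pieces of $x$ appearing in (1), and then use Kronecker's theorem (Theorem~\ref{Theorem: Kronecker}) to identify the $\omega$-limit set of the orbit with $f(x, S)$. Write $V_+ = \bigoplus_{|\lambda_j| > 1} \mathrm{span}(v_{j,1}, \ldots, v_{j,t_j})$, $V_- = \bigoplus_{|\lambda_j| < 1} \mathrm{span}(v_{j,1}, \ldots, v_{j,t_j})$, $V_{=}^{\mathrm{eig}} = \mathrm{span}(v_{1,1}, \ldots, v_{m,1})$, and $V_{=}^{\mathrm{hi}} = \mathrm{span}(v_{j,\ell} : 1 \leq j \leq m,\, \ell > 1)$. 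By construction, the positions $m < j \leq N$ in the $Q$-basis correspond exactly to $B = V_+ \oplus V_{=}^{\mathrm{hi}}$, so condition~(1) is equivalent to $x \in V_{=}^{\mathrm{eig}} \oplus V_-$. The standard Jordan-block estimates yield $\|A^k v\|$ growing exponentially on $V_+$, polynomially in $k$ on $V_{=}^{\mathrm{hi}}$ (from the nilpotent part), remaining bounded on $V_{=}^{\mathrm{eig}}$, and decaying exponentially on $V_-$.

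For the necessary direction, assume $A^k x \in K$ for every $k$. The orbit is bounded because $K$ is compact, which excludes non-zero components of $x$ in $V_+$ (exponential blow-up) and in $V_{=}^{\mathrm{hi}}$ (polynomial blow-up from the nilpotent part of the Jordan block), establishing~(1). Decomposing $x = x_{\mathrm{eig}} + x_-$ as in (1), a direct computation using the definitions of $Q$ and $f$ yields
\[
    A^k x = f\bigl(x, (\lambda_1^k, \ldots, \lambda_m^k)\bigr) + A^k x_-,
\]
with $A^k x_- \to 0$. Given $z \in S$, Theorem~\ref{Theorem: Kronecker} supplies indices $k_n \to \infty$ with $(\lambda_1^{k_n}, \ldots, \lambda_m^{k_n}) \to z$, whence $A^{k_n} x \to f(x, z)$; closedness of $K$ forces $f(x, z) \in K$, which proves $f(x, S) \subseteq K$.

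For the converse, assume (1) and (2) and aim to show $A^k x \in K$ for all $k$. The rotation part $f(x, (\lambda_1^k, \ldots, \lambda_m^k))$ lies in $f(x, S) \subseteq K$ for every $k$ by~(2), while the transient $A^k x_-$ decays. The main obstacle is precisely this transient: an early orbit point $A^k x$ may be displaced from $f(x, S)$ by a non-negligible $A^k x_-$, and nothing in the hypotheses a priori prevents it from leaving $K$ at some finite time. My plan to overcome this is to pass through the eigenvector projection $x_1 := f(x, (1, \ldots, 1)) \in V_{=}^{\mathrm{eig}}$, whose orbit $A^k x_1 = f(x, (\lambda_1^k, \ldots, \lambda_m^k))$ is entirely contained in $f(x, S) \subseteq K$, yielding a concrete trapped point; I then intend to use (2) together with the Kronecker density of $\{(\lambda_1^k, \ldots, \lambda_m^k) : k \in \N\}$ in $S$ and the closedness of $K$ to recover that the original $x$ itself is trapped. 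Controlling the transient against the geometry of $K$ will be the most delicate part of the argument, and is what ultimately justifies the $\bSigmap_{2,\leq}$-encoding of trapped points used in the proof sketch of Theorem~\ref{Theorem: contained}.
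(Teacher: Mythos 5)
You have correctly identified the weak point in this lemma, which the paper's own proof also glosses over: after establishing that under~(1) the set $f(x,S)$ is exactly the set of accumulation points of the orbit of $x$, the paper simply asserts that ``the result then follows immediately.'' But knowing that the accumulation points lie in $K$ says nothing about the finitely many ``transient'' orbit points, and the backward direction of the biconditional as literally stated is in fact \emph{false}. Take $A = \diag(1, \tfrac12)$, $x = (1,1)$, and $K = \{(1,1)\} \cup \{(1,t) : 0 \le t \le \tfrac14\}$, a compact semialgebraic set containing $x$. Here condition~(1) is vacuous ($N = m = 1$, so there is no $j$ with $m < j \le N$), and $f(x,S) = \{(1,0)\} \subseteq K$, so~(2) holds; yet $A x = (1,\tfrac12) \notin K$. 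Your stated goal of ``recovering that the original $x$ itself is trapped'' from~(1), (2), and Kronecker density is therefore not achievable, and controlling the transient against the geometry of $K$ is a dead end.

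The genuinely valuable part of your plan is the observation about $x_1 := f(x, (1,\ldots,1))$. Since $(1,\ldots,1) = (\lambda_1^0,\ldots,\lambda_m^0) \in S$, the point $x_1$ lies in $f(x,S) \subseteq K$, and $A^k x_1 = f(x, (\lambda_1^k,\ldots,\lambda_m^k)) \in f(x,S) \subseteq K$ for every $k$, so $x_1$ \emph{is} trapped. This is precisely what the downstream use of the lemma in Theorem~\ref{Theorem: contained} actually requires: the lemma serves to show that the \emph{existence} of a trapped point in $K$ is equivalent to the \emph{existence} of some $x \in K$ satisfying~(1) and~(2). The forward implication (trapped implies~(1) and~(2)) is what you and the paper both prove correctly; the reverse implication at the level of existence follows directly from your $x_1$ construction, with no need to say anything about the transient of the given $x$. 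The right fix is thus to weaken the lemma to this existence statement (or, equivalently, to add a hypothesis that $Q^{-1}x$ has no component in the decaying generalised eigenspaces), rather than to try to prove the pointwise biconditional, which the counterexample above rules out.
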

\begin{proof}
    Let $x \in K$.

    Assume that $A^k x \in K$ for all $k \in \N$.
    Let $J = Q^{-1} A Q$.
    Let us again write 
    $N = m + (t_1 - 1) + \dots + (t_m - 1) + t_{m + 1} + \dots + t_{m + b}$.
    If there exists
    $m < j \leq N$ 
    such that 
    $(Q^{-1} x)_j \neq 0$
    then 
    $Q^{-1} x$
    has a non-zero component in a generalised eigenspace of 
    $A$ 
    which corresponds to an eigenvalue of modulus strictly greater than $1$
    or it has a non-zero component in a generalised eigenspace of 
    $A$ 
    corresponding to an eigenvalue of modulus $1$ 
    which is not an eigenspace.
    In both cases the absolute value of 
    $A^k x = Q J^k (Q^{-1} x)$
    is unbounded as $k \to \infty$.
    Since $K$ is assumed to be bounded it follows that $A^k x$ leaves $K$ after finitely many steps.
    
    Now, assume that 
    $(Q^{-1}x)_j = 0$ for all $m < j \leq N$.
    We claim that 
    $f(x, S)$
    is the set of accumulation points of the orbit of $x$ under $A$.
    The result then follows immediately.
    
    First, observe that we have by construction
    \[
        A
        =
        Q
        \begin{pmatrix}
            \lambda_1 &        &            &     &        &   &\\   
                        & \ddots &            &     &        &   &\\
                        &        &\lambda_m &     &        &   &\\
                        &        &            & 0   &        &   &\\
                        &        &            &     & \ddots &   &\\
                        &        &            &     &        & 0 &\\ 
                        &        &            &     &        &   &R \\ 
        \end{pmatrix}
        Q^{-1}
    \]
    where $R$ is an $(s \times s)$-matrix with 
    $|R^k| \to 0$ 
    as 
    $k \to 0$.
    
    Now, let $z \in S$.
    We claim that $f(x,z)$ is an accumulation point of the sequence 
    $(A^k x)_{k \in \N}$.
    Let $\varepsilon > 0$.
    By Theorem \ref{Theorem: Kronecker} there exist infinitely many $k \in \N$ such that 
    $|\lambda_j^k - z_j| < \varepsilon/2$.
    For all sufficiently large $n$ we have $|R^k| < \varepsilon/2$.
    It follows that for each such $k$ we have 
    $
        |(A^k x) - f(z,x)| < \varepsilon.
    $
    Thus, $f(z, x)$ is an accumulation point of the sequence $(A^k x)_k$.
    
    Conversely, let $y \in K$ be an accumulation point of the sequence $(A^k x)_k$.
    Let $(n_k)_k$ be a sequence of natural numbers such that the sequence $(A^{k_j} x)_j$ converges to $y$.
    Since the torus $\mathbb{T}^m$ is compact, the sequence 
    $(\lambda_1^{k_j}, \dots, \lambda_m^{k_j})_j$
    has a convergent subsequence.
    Thus, let $(k_{j_\ell})_\ell$ denote a subsequence of $(k_j)_j$ such that 
    the sequence  
    $(\lambda_1^{k_{j_\ell}}, \dots, \lambda_m^{k_{j_\ell}})_{\ell}$
    converges to a limit 
    $z = (z_1,\dots,z_m) \in \mathbb{T}^m$.
    Then the sequence 
    $(A^{k_{j_\ell}} x)_{\ell}$ 
    converges to both
    $f(x,z)$
    and 
    $y$.
    It follows that $y = f(x,z)$.
\end{proof}

Now, let us prove Theorem \ref{Theorem: contained}.

By Theorem \ref{theorem: equivalence of complexity classes} the decision problems for $\bSigmap_{2,\leq}$-sentences
is contained in $\EA$.
We reduce the Compact Escape Problem to this problem.

Suppose we are given a matrix $A \in \Q^{n \times n}$ with rational entries,
a family of polynomials $\mathcal{P}$ in $n$ free variables, represented in the standard encoding,
and a negation-free propositional formula $\Phi(X)$ over atoms of the form $P \leq 0$, where $P \in \mathcal{P}$.
We can convert the standard encodings of the polynomials $P \in \mathcal{P}$ into terms over the signature $\langle \Z, +, \times \rangle$ in polynomial time.
We can hence convert the formula $\Phi(X)$ into a $\QFF_{\leq}$-formula in polynomial time.
By very slight abuse of notation, let us denote this $\QFF_{\leq}$-formula by $\Phi(X)$ as well.
Let $K \subseteq \R^n$ denote the set encoded by $\Phi(X)$.

By \cite{Cai94} we can compute in polynomial time the complex eigenvalues of $A$
\[
    \lambda_1,\dots,\lambda_m, 
    \lambda_{m + 1}, \dots, \lambda_{m + b},
    \lambda_{m + b + 1}, \dots, \lambda_{m + b + s}
\]
and the matrices $Q$ and $Q^{-1}$ as in Lemma \ref{Lemma: characterisation of termination}.
We can further compute the real an imaginary parts of the eigenvalues 
$\lambda_1,\dots,\lambda_{m + b + s}$
in polynomial time.
More precisely, 
letting $\alpha_j = \Re(\lambda_j)$ denote the real part of $\lambda_j$,
and $\beta_j = \Im(\lambda_j)$ the imaginary part, we can compute in polynomial time:
\begin{enumerate}
\item 
    Univariate polynomials with integer coefficients
    $h_1, \dots, h_{m + b + s}$,
    $g_1, \dots, g_{m + b + s}$,
    such that 
    $h_j(\alpha_j) = g_j(\beta_j) = 0$
    for all 
    $j = 1, \dots, m + b + s$.
\item 
    Rational numbers 
    $a_1, b_1, c_1, d_1, \dots, a_{m + b + s}, b_{m + b + s}, c_{m + b + s}, d_{m + b + s}$,
    such that 
    $\alpha_j$ 
    is the unique root of $h_j$ in the real interval $[a_j, b_j]$
    and 
    $\beta_j$
    is the unique root of $g_j$ in the real interval $[c_j, d_j]$.
\item 
    For 
    $j = 1,\dots,n$ 
    and 
    $k = 1,\dots,n$
    bivariate polynomials
    $L_{0,j,k} \in \Q[u,v]$,
    $L_{1,j,k} \in \Q[u,v]$,
    and indexes 
    $\ell_{j,k} \in \{1,\dots,m + b + s\}$ 
    such that the matrix 
    $Q$ 
    at row $j$ and column $k$
    is given by the complex algebraic number
    \[ 
        L_{0,j,k}(\alpha_{\ell_{j,k}}, \beta_{\ell_{j,k}})
        +
        i L_{1,j,k}(\alpha_{\ell_{j,k}}, \beta_{\ell_{j,k}})
    \]
\item 
    For 
    $j = 1,\dots,n$ 
    and 
    $k = 1,\dots,n$
    bivariate polynomials
    $R_{0, j, k} \in \Q[u, v]$,
    $R_{1, j, k} \in \Q[u, v]$,
    and indexes 
    $r_{j,k} \in \{1,\dots,m + b + s\}$ 
    such that the matrix 
    $R^{-1}$ 
    at row $j$ and column $k$
    is given by the complex algebraic number
    \[ 
        R_{0,j,k}(\alpha_{r_{j,k}}, \beta_{r_{j,k}})
        +
        i R_{1,j,k}(\alpha_{r_{j,k}}, \beta_{r_{j,k}}).
    \]
\end{enumerate}

By Theorem \ref{Theorem: Masser} we can compute in polynomial time a finite set 
$\gamma_1,\dots,\gamma_s \in \Z^m$ 
of generators of the free abelian 
group of integer multiplicative relations between the complex eigenvalues $\lambda_1,\dots,\lambda_m$.
The size of the integer entries of $\gamma_1,\dots,\gamma_s$ -- and not just their bitsize -- is bounded polynomially in the size of the input.
It follows that we can compute in polynomial time a $\QFF_{\leq}$-formula $\Psi(C, D)$ with $2m$ free variables that expresses for two given real vectors 
$C \in \R^n$, $D \in \R^n$ that the complex vector $C + iD$ is contained in the set 
\[
    S = \Set{(z_1,\dots,z_m) \in \mathbb{T}^m}
            {(z_1,\dots,z_m)^{\gamma_j} = 1, \; j = 1,\dots,s}.
\] 
By Theorem \ref{Theorem: Kronecker} the set $S$ is equal to the closure of the set 
$\Set{(\lambda_1^k,\dots,\lambda_m^k)}{k \in \N}$.

Let  
$f \colon \R^n \times \mathbb{T}^m \to \C^n$
be defined as in 
Lemma \ref{Lemma: characterisation of termination},
\textit{i.e.},
\[
    f(x, z)
    =
    Q \diag(z_1,\dots,z_m, 0,\dots,0) Q^{-1} x.
\]
Since we can compute the matrices $Q$ and $Q^{-1}$ in polynomial time as above,
we can compute in polynomial time polynomials
$F_{k, j} \in \Q[U, V][C, D]$
for 
$k = 1,\dots,n$, $j = 1,\dots,n$,
where $U$ and $V$ are vectors of $m + b + s$ variables,
such that 
\begin{equation}\label{eq: real part of f}
    \Re f(X, C + i D) = 
        \left(
            \sum_{j = 1}^n
            F_{1,j}
            \left(
                \vec{\alpha},
                \vec{\beta}
            \right)
            \left(
                C,
                D
            \right) \cdot X_j,
            \dots, 
            \sum_{j = 1}^n
            F_{n,j}
            \left(
                \vec{\alpha},
                \vec{\beta}
                \right)
                \left(
                C,
                D
            \right)  \cdot X_j
        \right).
\end{equation}
Note that the result is a polynomial with real algebraic coefficients.
More precisely, the right hand side of the above equation is an element of the ring 
\[ 
    \Q[\alpha_1, \dots, \alpha_{m + b + s}, \beta_1, \dots, \beta_{m + b + s}][X, C, D].
\]

Define 
$N = m + (t_1 - 1) + \dots + (t_m - 1) + t_{m + 1} + \dots + t_{m + b}$
as in Lemma \ref{Lemma: characterisation of termination}.
By Lemma \ref{Lemma: characterisation of termination} the existence of a point in $K$ that is trapped under $A$ is equivalent to the ``informal'' sentence 
\begin{align}\label{eq: inclusion informal sentence}
    &\exists X \in I^n.
    \forall Y \in \mathbb{T}.\\
    &\left(
        Y \in S
        \rightarrow
        \left(
        X \in K
        \land
        \left(
        (Q^{-1} X)_{m + 1} = 0 \land \dots \land (Q^{-1} X)_N = 0
        \right)
        \land 
        f(X, Y) \in K
        \right)
    \right)\nonumber.
\end{align}

We construct in polynomial time from $A$ and $\Phi$ a $\bSigmap_{\leq}$-sentence
\begin{align}\label{eq: inclusion formal sentence}
    &\exists U \in I^{m + b + s}.
    \exists V \in I^{m + b + s}.
    \exists X \in I^{n}.
    \forall C \in I^{m}.
    \forall D \in I^{m}.\\
    &\left(
        \Psi(C, D)
        \rightarrow
        \left(
            \chi(U,V)
            \land 
            \Phi(X)
            \land 
            \omega(U, V, X)
            \land 
            \xi(U, V, X, C, D)
        \right)
    \right)\nonumber.
\end{align}
Recall that the formula $\Psi(C, D)$ expresses that 
the complex number 
$C + iD$ 
is contained in the set $S$.
Intuitively speaking,
the formula 
$\chi(U,V)$ 
will express that the variables 
$U$ and $V$ represent
the real and imaginary parts of the eigenvalues 
$\lambda_1,\dots,\lambda_{m + b + s}$.
The formula 
$\omega(U, V, X)$
will express that 
$(Q^{-1} X)_k = 0$
for  
$k = m + 1,\dots, m + b + s$.
The formula 
$\xi(U, V, X, C, D)$
will express that 
$f(X , C + i D) \in K$.

More formally, let 
\[
    \chi(U,V) 
    =
    \bigwedge_{j = 1}^{m + b + s}
        \left(
            h_j(U) = 0 \land a_j \leq U \leq b_j
            \land 
            g_j(V) = 0 \land c_j \leq V \leq d_j
        \right).
\]
Let 
\[
    \omega(U, V, X)
    =
    \bigwedge_{k = m + 1}^{N}
    \bigwedge_{s = 0}^1
        \left(
            \sum_{j = 1}^n
                R_{s, k, j}(U_{r_{j,k}}, V_{r_{j,k}}) \cdot X_j = 0
        \right),
\]
Let 
$\xi(U, V, X, C, D)$
be the formula which is obtained from 
$\Phi$
by replacing each atom 
$P(X_1,\dots,X_n) \leq 0$
in 
$\Phi$
by the atom 
\[
    P
    \left(
        \sum_{j = 1}^{n} F_{1,j}(U,V)(C, D) \cdot X_j,
        \dots,
        \sum_{j = 1}^n F_{n,j}(U,V)(C, D) \cdot X_j
    \right)
    \leq 
    0,
\]
Note that this substitution can be performed in polynomial time.
The polynomial $P$ is given by a term $t$ over the signature 
$\left\langle \Z, +, \times\right\rangle$.
A term representing the new atom is obtained by substituting in the term $t$
the occurrence of each variable $X_k$ by the polynomial-size term 
$\sum_{j = 1}^n F_{k,j}(U,V)(C, D) \cdot X_j$. 

Now, observing that the formula $\chi(U,V)$ forces $U$ and $V$ 
to be equal respectively to the vector of real and imaginary parts of the eigenvalues 
$\lambda_1,\dots,\lambda_{m + b + s}$
it follows by construction that the 
$\bSigmap_{\leq}$-sentence \eqref{eq: inclusion formal sentence}
is equivalent to the informal sentence \eqref{eq: inclusion informal sentence}
and hence expresses the existence of a trapped point.
There is only one small argument required:
By \eqref{eq: real part of f} the formula 
$\xi(\vec{\alpha},\vec{\beta}, X, C, D)$
expresses that 
$\Re f(X, C + i D) \in K$
rather than 
$f(X, C + i D) \in K$.
But if 
$\Psi(C,D)$
holds true then $C + i D \in S$,
so that $f(X, C + iD)$ is real-valued,
for instance since it is contained in the closure 
of the orbit of $A^k x$ by the proof of 
Lemma \ref{Lemma: characterisation of termination}.

Deciding the truth of the sentence \eqref{eq: inclusion formal sentence} is therefore equivalent to deciding non-termination of the 
Escape Problem instance $(A,K)$.

\end{document}